\tikzset{external/only named=true}
\newtheorem{theorem}{Theorem}
\newtheorem{lemma}[theorem]{Lemma}
\theoremstyle{definition}
\newtheorem{definition}[theorem]{Definition}
\gdef\dash---{\thinspace---\hskip.16667em\relax} 
\gdef\op|{\,|\;}
\newcommand{\figref}[1]{Fig.~\ref{#1}}
\newcommand{\bfno}[1]{\noindent{\bf #1}}
\newcommand{\IN}{\mathds{N}}
\newcommand{\IR}{\mathds{R}}
\newcommand{\dup}{\delta_\uparrow}
\newcommand{\ddo}{\delta_\downarrow}
\newcommand{\fup}{f_\uparrow}
\newcommand{\fdo}{f_\downarrow}
\tikzstyle{binary place}=[place,circle, double]
\tikzstyle{node}=[circle,draw=black,thick,minimum size=9mm]
\tikzstyle{dest}=[circle,draw=black!50,fill=black!20,thick,minimum
\tikzstyle{post}=[->,thick]
\tikzstyle{pre}=[<-,thick]
\tikzstyle{every transition}=[fill,minimum width=1cm,minimum height=2mm]
\tikzstyle{Atransition}=[transition,fill,minimum width=1cm,minimum height=2mm]
\tikzstyle{Otransition}=[transition,fill=white,minimum width=1cm,minimum height=2mm]
\tikzstyle{THtransition}=[transition,fill=white,minimum width=4mm,minimum height=1cm]
\tikzstyle{Tdelay} = [draw, rectangle, rounded corners,
\tikzstyle{Tfunction} = [draw, rectangle,
\tikzstyle{Tsignal} = [draw,fill=black,circle, size=1mm]
\tikzstyle{ra} = [draw,thick,double,double distance=1.0pt,->]
\tikzstyle{r} = [draw,->,line width=0.5pt]
\def\figanvth{0.4}
\def\figaninstart{0.1}
\def\figaninlen{0.28}
\def\figandelay{0.05}
\def\figanfup[#1]{(1-exp(-(#1)/0.15))*(1-exp(-(#1)/0.16))}
\def\figanfdown[#1]{(1-(1-exp(-((#1)^2)/0.01))*(1-exp(-((#1)^2)/0.1)))}
\def\figancaptpos{0.9}
\def\tikzfigureanalogy{
\begin{tikzpicture}[
xscale=5.4,
yscale=2.0,
>=latex'
]

\coordinate (origin) at (0,0);

\path [draw,->]
(0,0) -- (0,1.1) node (xaxis) [left] {\small $u(t)$};
\path [draw,->]
(0,0) -- (1.05,0)   node (yaxis) [above] {\small $t$};

\path [draw, dashdotted, very thin, name path=vth]
(0,\figanvth) node [left] {\small $V_{th}$} -- (1,\figanvth);

\path [draw, densely dotted, line width=0.3pt]
(\figaninstart,0) -- ++(0,1) node [above, xshift=-1pt, yshift=-1pt] {\small $u_i$} --
++(\figaninlen,0) -- ++(0,-1) node (dt3) [coordinate] {};

\path [draw, densely dashed, very thin]
({\figaninstart+\figandelay},0) -- ++(0,1) node [above, xshift=3pt, yshift=-1pt] {\small $u_d$} --
++(\figaninlen,0) -- ++(0,-1);

\path [draw, thick, xshift={(\figaninstart+\figandelay)*1cm}, name path=fup]
plot[smooth, domain=0:\figaninlen] (\x,{\figanfup[\x]});
\path [draw, thick, dotted, xshift={(\figaninstart+\figandelay)*1cm}, name path=fupdot]
plot[smooth, domain=\figaninlen:{1-\figaninstart-\figandelay}] (\x,{\figanfup[\x]});
\path [name path=func]
plot[smooth, domain=0:1] (\x,{\figanfdown[\x]});
\path [name path=const]
(0,{\figanfup[\figaninlen]}) -- +(1,0);
\path [name intersections={of=func and const}]
(intersection-1) node (finversept) [coordinate] {};
\newdimen\finverse
\pgfextractx{\finverse}{\pgfpointanchor{finversept}{center}}
\pgfmathsetlengthmacro{\figanfdownshift}{(\figaninstart+\figandelay+\figaninlen)*1cm-\finverse}
\path [draw, thick, dotted, xshift=\figanfdownshift]
plot[smooth, domain=0:{\finverse/1cm}] (\x,{\figanfdown[\x]});
\path [draw, thick, xshift=\figanfdownshift, name path=fdown]
plot[smooth, domain={\finverse/1cm}:{(1cm-\figanfdownshift)/1cm}] (\x,{\figanfdown[\x]});

\path [draw, thin, name intersections={of=fup and vth, by={up}}, name intersections={of=fdown and vth, by={down}}]
(origin -| up) node (dt2) [coordinate] {} -- ++(0,1) node [above, xshift=3pt, yshift=-1pt] {\small $u_o$} -| 
(origin -| down) node (dt4) [coordinate] {};

\path [name path=fcapts]
(\figancaptpos,0) -- ++(0,1);
\path [name intersections={of=fcapts and fdown, by={fd}}, name intersections={of=fcapts and fupdot, by={fu}}]
(fd) node [above] {\small $f_\downarrow$}
(fu) node [below] {\small $f_\uparrow$};

\foreach \dt/\name in {{(\figaninstart,0)/1}, {(dt2)/2}, {(dt3)/3}, {(dt4)/4}} {
\path [draw, very thin, shorten <= 1pt, shorten >=-2pt]
\dt -- ++(0,-0.1) node (smth-\name) [coordinate] {};
}

\path [draw, <-, very thin]
(smth-1) -- node [anchor=north] {\tiny $T_1$} (smth-1 -| origin);
\path [draw, ->, very thin, shorten <=0.6pt]
(smth-1) -- node [anchor=north] {\tiny $\delta_\uparrow(T_1)$} (smth-2);
\path [draw, ->, very thin, shorten <=0.6pt]
(smth-2) -- node [anchor=north] {\tiny $T_2$} (smth-3);
\path [draw, ->, very thin, shorten <=0.6pt]
(smth-3) -- node [anchor=north] {\tiny $\delta_\downarrow(T_2)$} (smth-4);
\end{tikzpicture}
}
\def\figcirampin{38}
\def\figcirampinlen{0.25}
\def\figcirampvoltlen{0.27}
\def\figcirampcap{0.3}
\def\figcirdista{0.45}
\def\figcirdistb{0.55}
\def\figcirdistc{1}
\def\figciroutlen{0.25}
\def\tikzfigurecircuit{
\begin{tikzpicture}[
cmp/.style={draw, isosceles triangle, isosceles triangle stretches, minimum width=1.1cm, minimum height=1.0cm},
xscale=1.0,
yscale=1.0,
>=latex'
]

\path
(0,0) node (cmp1) [cmp] {};
\path [draw]
(cmp1.{180-\figcirampin}) node [anchor=west] {\footnotesize $+$} --
++(-\figcirampinlen,0) node (cmp1-west) [coordinate] {};
\path [draw]
(cmp1.{180+\figcirampin}) node [anchor=west] {\footnotesize $-$} -- ++(-\figcirampinlen,0);

\path [draw, shorten <=-0.5pt]
(cmp1.east) -- node [anchor=south] {\small $u_i$}
++(\figcirdista,0) node (del) [anchor=west, draw, minimum height=0.4cm, minimum width=0.8cm,
                      inner sep=1pt, rounded corners=0.2cm] {\small $T_p$};

\path [draw]
(del.east) -- node [anchor=south] {\small $u_d$}
++(\figcirdistb,0) node (srl) [draw, minimum size=0.8cm, anchor=west] {};
\path [draw, scale=0.45]
(srl) ++(-0.5,-0.5) .. controls +(0.3,0) and +(-0.8,0) .. +(1,1);

\path [draw]
(srl.east) -- node [anchor=south] {\small $u_r$}
++(\figcirdistc,0) node (cmp2) [cmp, anchor={180-\figcirampin}] {}
                     node [anchor=west] {\footnotesize $+$};
\path [draw]
(cmp2.{180+\figcirampin}) node [anchor=west] {\footnotesize $-$} --
++(-\figcirampinlen,0) node [anchor=east, inner sep=1pt] {\small $V_{th}$};
\path [draw, shorten <=-0.5pt]
(cmp2.east) --
++(\figciroutlen,0) node [anchor=290] {\small $u_o$};

\foreach \cmpx in {cmp1, cmp2} {
\path [draw]
(\cmpx.50) -- ++(0,\figcirampvoltlen) node (\cmpx-1v) [anchor=south] {\small $1V$};
\path [draw]
(\cmpx.310) -- ++(0,-\figcirampvoltlen) node (\cmpx-0v) [anchor=north] {\small $0V$};
\path
($(\cmpx)+(\figcirampcap,0)$) node {\small $\infty$};
}

\node at ($ (cmp1.east)-(0.3,0) $) (seast) {};

\node[fill=white, fill opacity=0.6, inner sep=0, fit=(cmp1-1v) (cmp1-0v) (cmp1-west) (seast)] {};
\end{tikzpicture}
}
\def\figurangle{320}
\def\figurlen{1.4}
\def\figurdist{4.5}
\def\tikzfigureunrolling{
\begin{tikzpicture}[
gate/.style={draw, fill, circle, minimum size=3pt, inner sep=0},
xscale=1.5,
yscale=0.9,
>=latex'
]


\path
(0,0) node (i) [gate] {}
+(0:\figurlen) node (x) [gate] {}
+(0:{\figurlen/2}) node (ix) [gate] {}
+(\figurangle:{\figurlen/2}) node (iy) [gate] {}
++(\figurangle:\figurlen) node (y) [gate] {}
+(0:{\figurlen/2}) node (yo) [gate] {}
++(0:\figurlen) node (o) [gate] {}
+(\figurangle:{-\figurlen/2}) node (xo) [gate] {};

\foreach \from/\to in {i/ix, ix/x, x/xo, xo/o, i/iy, iy/y, y/yo, yo/o} {
\path [draw, ->] (\from) -- (\to);
}
\path [draw, ->, scale=0.8]
(y) .. controls +(-1,-1) and +(1,-1) .. (y)
node (yy) [midway,gate] {};

\foreach \what/\anchor in {i/south, x/south, y/south, o/north} {
\node [anchor=\anchor] at (\what) {\footnotesize $\what$};
}

\pgftransformshift{\pgfpoint{+1.0cm}{0cm}} 


\path
(\figurdist,0) node (i1) [gate] {}
+(0:{\figurlen/2}) node (i1x1) [gate] {}
++(0:\figurlen) node (x1) [gate] {}
+(\figurangle:{\figurlen/2}) node (x1o1) [gate] {}
++(\figurangle:\figurlen) node (o1) [gate] {}
+(0:{-\figurlen/2}) node (y1o1) [gate] {}
++(0:-\figurlen) node (y1) [gate] {}
+(0:{-\figurlen/2}) node (y2y1) [gate] {}
++(0:-\figurlen) node (y2) [gate] {}
+(0:{-\figurlen/2}) node (ot1y2) [gate] {}
+(0:-\figurlen) node (ot1) [gate] {}
+(\figurangle:{-\figurlen/2}) node (ot2y2) [gate] {}
+(\figurangle:-\figurlen) node (ot2) [gate] {};

\path
(i1)
+(\figurangle:{\figurlen/2}) node (i1y1) [gate] {};

\foreach \from/\to in {ot2/ot2y2, ot2y2/y2, ot1/ot1y2, ot1y2/y2, y2/y2y1, y2y1/y1,
                       y1/y1o1, y1o1/o1, i1/i1y1, i1y1/y1, i1/i1x1, i1x1/x1, x1/x1o1,
                       x1o1/o1} {
\path [draw, ->] (\from) -- (\to);
}

\foreach \where/\anchor/\text in
{ot2/south/$\tilde{0}^{(2)}$, ot1/north/$\tilde{0}^{(1)}$,
y2/north/$y^{(2)}$, i1/south/$i$, x1/south/$x^{(1)}$,
y1/north/$y^{(1)}$, o1/north/$o^{(1)}$} {
\node [anchor=\anchor] at (\where) {\footnotesize \text};
}
\end{tikzpicture}
}
\def\figchtimedist{0.35}
\def\figsigheight{0.7}
\def\figsigoffset{0.1}
\def\figchdist{1.5}
\def\tikzfigurechannelintro{
\begin{tikzpicture}[>=latex',scale=1.0,every node/.style={transform shape}]

\coordinate (origin1) at (0,0);
\coordinate (start1) at (0,\figsigoffset);
\coordinate (origin2) at (0,-\figchdist);
\coordinate (start2) at (0,-\figchdist+\figsigoffset+\figsigheight);
\coordinate (timeorigin) at (0,{-\figchdist-\figchtimedist});
\foreach \plot/\xcapt in {1/$\text{in}(t)$, 2/$\text{out}(t)$} {
\path [draw,->]
(origin\plot) -- +(0,1.1) node (xaxis) [left,yshift=-5pt] {\small \xcapt};
\path [draw,->]
(origin\plot) -- +(8.00,0) node (yaxis) [above] {\small $t$};
}
\draw[very thick] (start1) -- ++(4,0) -- ++(0,\figsigheight) -- ++(4,0);
\draw[very thick] (start2) -- ++(3,0) -- ++(0,-\figsigheight) -- ++(3,0) --
++(0,\figsigheight) -- ++(2,0);
\draw[densely dashed] ($(origin1)+(4,0)$) -- ($(origin2)+(4,-3*\figsigoffset)$);
\draw[densely dashed] ($(origin2)+(3,0)$) -- ++(0,-3*\figsigoffset);
\draw[densely dashed] ($(origin2)+(6,0)$) -- ++(0,-3*\figsigoffset);
\node (T) at ($(origin2) +(3.5,-2*\figsigoffset) $) {$T$};
\node (dT) at ($(origin2) +(5,-2*\figsigoffset) $) {$\delta(T)$};
\draw[->] (T) -- ($(origin2)+(4,-2*\figsigoffset)$);
\draw[->] (T) -- ($(origin2)+(3,-2*\figsigoffset)$);
\draw[->] (dT) -- ($(origin2)+(6,-2*\figsigoffset)$);
\draw[->] (dT) -- ($(origin2)+(4,-2*\figsigoffset)$);
\draw[->,densely dashed,shorten >=2pt,shorten <=2pt] ($(start1) + (4,\figsigheight)$) -- ($(start2) +
(6,0)$);
\draw[->,densely dashed,shorten >=2pt,shorten <=2pt] ($(start1) + (0,0.5*\figsigheight)$) -- ($(start2) +
(3,0)$);

\end{tikzpicture}
}
\def\tikzfigurechannelsection{
\begin{tikzpicture}[>=latex',scale=1.0,every node/.style={transform shape}]

\coordinate (origin1) at (0,0);
\coordinate (start1) at (0,\figsigoffset+\figsigheight);
\coordinate (origin2) at (0,-\figchdist);
\coordinate (start2) at (0,-\figchdist+\figsigoffset+\figsigheight);
\coordinate (timeorigin) at (0,{-\figchdist-\figchtimedist});
\foreach \plot/\xcapt in {1/$\text{in}(t)$, 2/$\text{out}(t)$} {
\path [draw,->]
(origin\plot) -- +(0,1.1) node (xaxis) [left,yshift=-5pt] {\small \xcapt};
\path [draw,->]
(origin\plot) -- +(8.00,0) node (yaxis) [above] {\small $t$};
}
\draw[very thick] (start1) -- ++(1.5,0) -- ++(0,-\figsigheight) -- ++(2,0) --
++(0,\figsigheight) -- ++(4.5,0);
\draw[very thick] (start2) -- ++(5,0) -- ++(0,-\figsigheight) -- ++(1,0) --
++(0,\figsigheight) -- ++(2,0);
\draw[densely dashed] ($(origin1)+(3.5,0)$) -- ($(origin2)+(3.5,-7*\figsigoffset)$);
\draw[densely dashed] ($(origin2)+(5,0)$) -- ++(0,-3*\figsigoffset);
\draw[densely dashed] ($(origin2)+(6,0)$) -- ++(0,-7*\figsigoffset);
\node (T) at ($(origin2) +(4.25,-2*\figsigoffset) $) {$(-T)$};
\node (dT) at ($(origin2) +(4.75,-6*\figsigoffset) $) {$\delta(T)$};
\draw[->] (T) -- ($(origin2)+(5,-2*\figsigoffset)$);
\draw[->] (T) -- ($(origin2)+(3.5,-2*\figsigoffset)$);
\draw[->] (dT) -- ($(origin2)+(6,-6*\figsigoffset)$);
\draw[->] (dT) -- ($(origin2)+(3.5,-6*\figsigoffset)$);
\draw[->,densely dashed,shorten >=2pt,shorten <=2pt] ($(start1) + (3.5,0)$) -- ($(start2) +
(6,0)$);
\draw[->,densely dashed,shorten >=2pt,shorten <=2pt] ($(start1) + (1.5,0)$) -- ($(start2) +
(5,0)$);

\end{tikzpicture}
}
\def\figgawidth{2.8}
\def\figgaheight{1.7}
\def\figgayshift{0.15}
\def\figgaboolwidth{0.75}
\def\figgaboolheight{0.9}
\def\figgaconstwidth{0.5}
\def\figgaconstheight{0.5}
\def\figgamuxwidth{0.45}
\def\figgamuxheight{1}
\def\figgamuxindist{50}
\def\figgamuxtoout{0.25}
\def\figgaoutlen{0.25}
\def\figgarstlen{0.15}
\def\figgainlen{0.25}
\def\tikzfiguregate{
\begin{tikzpicture}[
gate/.style={draw, inner sep=0},
xscale=1,
yscale=1,
>=latex'
]

\coordinate (origin) at (0,0);
\coordinate (topright) at (\figgawidth,\figgaheight);

\path [draw] (origin) rectangle (topright);

\path [draw]
($(origin -| topright)!0.5!(topright) + (0,\figgayshift)$)
+(\figgaoutlen,0) node [anchor=west] {\small $v$} --
+(-\figgamuxtoout,0) node (mux) [draw, trapezium, anchor=top side,
shape border rotate=270, minimum width=\figgamuxheight*1cm, 
minimum height=\figgamuxwidth*1cm, trapezium stretches body] {};

\foreach \muxin/\muxincpt/\name/\cpt in {1/0/bool/$b$, -1/1/const/$I$} {
\path [draw] 
(mux.{180+\figgamuxindist*\muxin}) node [anchor=west, xshift=-1.5pt] {\footnotesize \muxincpt} --
++(-\csname figga\name tomux\endcsname, 0)
node (\name) [gate, anchor=east, minimum width=\csname figga\name width\endcsname*1cm,
minimum height=\csname figga\name height\endcsname*1cm] {\small \cpt};
}

\path [draw, densely dashed, thin]
(mux.north) -- (mux.north |- topright) --
++(0, \figgarstlen) node [anchor=south, inner sep=1.5pt] {\small\em reset};

\foreach \pos in {0.125, 0.375, 0.625, 0.875} {
\path [draw]
($(bool.north west)!\pos!(bool.south west)$) coordinate (tmp) --
(tmp -| origin) -- ++(-\figgainlen,0);
}
\end{tikzpicture}
}
\def\tikzfigurechannelbranch{
\begin{tikzpicture}[
gate/.style={draw, fill, circle, minimum size=3pt, inner sep=0},
gate2/.style={draw, inner sep=0},
xscale=1.5,
yscale=1.2,
>=latex'
]

\coordinate (origin) at (0,0);
\coordinate (topright) at (\figgawidth,\figgaheight);


\path [draw]
($(origin -| topright)!0.5!(topright) + (0,\figgayshift)$)
+(\figgaoutlen,0) --
+(-\figgamuxtoout,0) node (mux) [draw, trapezium, anchor=top side,
shape border rotate=270, minimum width=\figgamuxheight*1cm, 
minimum height=\figgamuxwidth*1cm, trapezium stretches body] {};

\path[draw]
($(origin -| topright)!0.5!(topright) + (0,\figgayshift)$)
+(\figgaoutlen,0)
node (c1) [draw, rectangle, rounded corners, anchor=west, minimum width=\figgamuxheight*1cm, 
minimum height=\figgamuxwidth*1cm] {$c_1$};

\path[draw]
(c1) --
++(1,0)
node (outc1) [gate2, anchor=west, minimum width=\figgaboolwidth*1cm,
minimum height=\figgaboolheight*1cm] {\small $b$};

\path[draw]
(outc1)
-- ++(0.8,0)
node[anchor=west] (y) {\small $w$};

\foreach \muxin/\muxincpt/\name/\cpt in {1/0/bool/$b$, -1/1/const/$I$} {
\path [draw] 
(mux.{180+\figgamuxindist*\muxin}) node [anchor=west, xshift=-1.5pt] {\footnotesize \muxincpt} --
++(-\csname figga\name tomux\endcsname, 0)
node (\name) {}; 
}

\path [draw]
(const)
node (huhu) [gate2, anchor=east, minimum width=\figgaconstwidth*1cm,
minimum height=\figgaconstheight*1cm] {\small $I$};


\path [draw, densely dashed, thin]
(mux.south) -- (mux.south |- origin)
++(0, -\figgarstlen+0.1) node [anchor=south, inner sep=1.5pt] {};

\foreach \pos in {0.23, 0.76} {
\path [draw]
($(outc1.north west)!\pos!(outc1.south west)$) coordinate (tmp) --
++(-\figgainlen,0);
}

\pgftransformshift{\pgfpoint{0cm}{-2cm}} 

\coordinate (origin) at (0,0);
\coordinate (topright) at (\figgawidth,\figgaheight);

\path [draw]
($(origin -| topright)!0.5!(topright) + (0,\figgayshift)$)
+(\figgaoutlen,0) --
+(-\figgamuxtoout,0) node (mux) [draw, trapezium, anchor=top side,
shape border rotate=270, minimum width=\figgamuxheight*1cm, 
minimum height=\figgamuxwidth*1cm, trapezium stretches body] {};

\path[draw]
($(origin -| topright)!0.5!(topright) + (0,\figgayshift)$)
+(\figgaoutlen,0)
node (c1) [draw, rectangle, rounded corners, anchor=west, minimum width=\figgamuxheight*1cm, 
minimum height=\figgamuxwidth*1cm] {$c_2$};

\path[draw]
(c1) --
++(1,0)
node (outc1) [gate2, anchor=west, minimum width=\figgaboolwidth*1cm,
minimum height=\figgaboolheight*1cm] {\small $b'$};

\path[draw]
(outc1)
-- ++(0.8,0)
node[anchor=west] (y) {\small $z$};

\foreach \muxin/\muxincpt/\name/\cpt in {1/0/bool/$b$, -1/1/const/$I$} {
\path [draw] 
(mux.{180+\figgamuxindist*\muxin}) node [anchor=west, xshift=-1.5pt] {\footnotesize \muxincpt} --
++(-\csname figga\name tomux\endcsname, 0)
node (\name) {}; 
}

\path [draw]
(const)
node (huhu) [gate2, anchor=east, minimum width=\figgaconstwidth*1cm,
minimum height=\figgaconstheight*1cm] {\small $I$};

\path [draw]
(bool.center) --
++(0,2cm) node [circle,inner sep=0.7pt,fill=black,draw] {} --
++(-0.2,0) node [anchor=east, xshift=2pt] {\small $v$};


\path [draw, densely dashed, thin]
(mux.north) -- (mux.north |- topright) --
++(0, \figgarstlen) node [anchor=south, inner sep=1.5pt] {\small\em reset};

\foreach \pos in {0.23, 0.76} {
\path [draw]
($(outc1.north west)!\pos!(outc1.south west)$) coordinate (tmp) --
++(-\figgainlen,0);
}


\pgftransformshift{\pgfpoint{-1.2cm}{2cm}} 

\path
(0,0) node (x) [gate] {} node [below] {$v$}
+(0.5,0.5) node (c1) [gate] {} node [anchor=north, yshift=20pt, text height=\baselineskip] {$c_1$}
+(1,0.5) node (y) [gate] {} node [anchor=north, yshift=20pt, text height=\baselineskip] {$w$}
+(0.5,-0.5) node (c2) [gate] {} node [below=-7pt, text height=\baselineskip] {$c_2$}
+(1,-0.5) node (z) [gate] {} node [below=-7pt, text height=\baselineskip] {$z$};

\foreach \from/\to in {x/c1, c1/y, x/c2, c2/z} {
\path [draw, ->] (\from) -- (\to);
}

\draw (1.6,0)
node {$\equiv$};

\end{tikzpicture}
}
\title{Faithful Glitch Propagation in Binary Circuit Models}
\author{Matthias Függer\textsuperscript{1} \and Robert Najvirt\textsuperscript{1} \and Thomas
Nowak\textsuperscript{2} \and Ulrich
Schmid\textsuperscript{1}}
\date{\textsuperscript{1} ECS Group, TU Wien, Austria\\
\textsuperscript{2} \'Ecole normale sup\'erieure,
Paris, France}
\begin{document}
\maketitle
\begin{abstract}
Modern digital circuit design relies on fast digital timing simulation
     tools and, hence, on accurate binary-valued circuit models that
     faithfully model signal propagation, even throughout a complex
     design. Of particular importance is the ability to trace glitches
     and other short pulses, as their presence/absence may even affect
     a circuit's correctness.
Unfortunately, it was recently proved [F\"ugger et al., ASYNC'13] that no 
existing binary-valued circuit model
     proposed so far, including the two most commonly used pure and
     inertial delay channels, faithfully captures glitch propagation: For the
     simple Short-Pulse Filtration (SPF) problem, which is related to
     a circuit's ability to suppress a single glitch, we showed that
     the quite broad class of bounded single-history channels either
     contradict the unsolvability of SPF in bounded time or the
     solvability of SPF in unbounded time in physical circuits.

In this paper, we propose a class of binary circuit models that do not
     suffer from this deficiency: Like bounded single-history
     channels, our \emph{involution channels\/} involve delays that may
     depend on the time of the previous output transition. Their
characteristic property are delay functions which are based on
     involutions, i.e., functions that form their own inverse.
A concrete example of such a delay function, which 
     is derived from a generalized first-order analog circuit model,
     reveals that this is not an unrealistic assumption.
We prove that, in sharp contrast to what is possible with 
bounded single-history channels, 
SPF cannot be solved in bounded time due to the nonexistence of a lower bound on the delay 
of involution channels, whereas
it is easy to provide an unbounded SPF implementation. It hence 
follows that binary-valued circuit models based on involution channels 
allow to solve SPF precisely when this is possible 
in physical circuits. To the best of our knowledge, our model is hence the
very first candidate for a model that indeed guarantees faithful glitch 
propagation.
\end{abstract}


\section{Introduction}

The steadily increasing complexity of digital circuit designs in
     conjunction with the large simulation times of accurate analog
     simulations fuel the need for analysis techniques
     that are (i) fast and sufficiently accurate, and (ii) ideally
     also facilitate a formal analysis of circuit
     parameters/correctness at a sufficiently high level of
     abstraction.
Whereas there is a considerable body of work on timing analysis of circuits
     based on approximating the involved differential
     equations~\cite{NP73:spice,Ho84:thesis,LM84,PR90,DS90},
these approaches still suffer from large simulation times and 
high memory consumption.

Popular VHDL or Verilog simulators hence employ \emph{digital\/} timing
simulations, based on continuous-time, discrete-value, rather than 
analog-value, circuit models. Their modeling accuracy crucially depends on the
     ability to accurately predict the propagation of signal
     transitions throughout a circuit. More specifically,
precise timing models are not only important for accurate performance
and power consumption estimates at early design stages, but also 
for assessing a circuit's correctness: Bi-stable elements like latches, 
flip-flops, and arbiters fail to work correctly when glitches
or signal transitions occur at improper times, and may cause metastability
(including high-frequency pulse trains due to oscillatory 
metastability)~\cite{Mar81} on that occasion.
Since such phenomenons cannot simply be assumed to have vanished at
the occurrence of the next clock transition or the next handshake signal
in today's high-speed circuits,
the accurate prediction of the presence/absence of glitches and similar
short pulses is crucial.

Binary value, continuous time circuit models based on pure and
     inertial delay channels~\cite{Ung71} have been introduced
     several decades ago, and are still heavily used in existing
     digital design tools.
Those simple models cannot express such subtle phenomenons as decaying
     glitches, however: While pure delay channels propagate even very
short glitches as is, unlike real circuits,
     inertial delay channels make unrealistically strong
     assumptions~\cite{Mar77} by requiring a glitch to propagate unchanged
     when it exceeds some minimal length, and to completely vanish otherwise.
More elaborate digital channel models, like the PID model proposed by
     Bellido-D\'{\i}az et~al.\ \cite{BDJCAVH00}, have hence been
     introduced for building accurate digital timing analysis tools
     \cite{BJV06}.
Although the experimental validation of the PID model
     in~\cite{BDJCAVH00} showed good accuracy for the evaluated
     examples, the question of the general ability of such a model to
     actually capture the behavior of physical circuits remained open.

And indeed, F\"ugger et~al.~\cite{FNS13} showed that any model with
\emph{bounded single-history channels}, including pure delay, inertial delay,
and PID channels, fails to do so in the case of the simple \emph{Short-Pulse
Filtration\/} (SPF) problem: 
The SPF problem is the problem of building a one-shot variant of an inertial
delay channel. As for inertial delay channels, no short pulses may appear at
the SPF output; in case of long input pulses, however, they need not be passed
unaltered. In particular, the SPF output may also settle at logical~$1$ even if the 
input does not.
The stronger variant of \emph{bounded\/} SPF requires the SPF output to settle
in bounded time. 

Since Barros and Johnson~\cite{BJ83} proved that the
     problems of building an inertial delay, a latch, a synchronizer
     and an arbiter all are equivalent, the (un)solvability of 
(bounded) SPF is a suitable test for a model's ability to 
faithfully model glitch propagation with respect to physical circuits:
On the one hand, Marino~\cite{Mar77} formally proved
     that problems like SPF cannot be solved in a physical 
model when the output is required to
     stabilize in bounded time \cite{FNS13}.
On the other hand, a simple storage loop with a high-threshold filter
     at its output (see Fig.~\ref{fig:circuit}) solves SPF in
     unbounded time: As shown in the
SPICE simulation traces in Fig.~\ref{fig:spf}, sufficiently large 
input pulses (largest blue dashed one) just cause the storage loop to change 
its state (to 1) instantaneously (left-most green solid one), very small input 
pulses (smallest blue dashed one) don't affect the storage loop (bottom
green solid one). Critical input pulses (middle blue dashed ones, overlapping,
therefore appearing as if they were
one pulse) cause the storage
loop to become metastable for an unbounded time, eventually resolving 
to either state~0 or~1.
%
Therefore, appending a high threshold filter with threshold 
(marked by the red dotted line) clearly above the metastability 
region results in a clean (= non-metastable) output signal,
which either remains at~0, or makes a single transition to~1.
Hence, with real circuits, SPF is solvable, while its stronger bounded 
variant is not.

\newlength\figureheight
\newlength\figurewidth
\setlength\figureheight{4.8cm} 
\setlength\figurewidth{0.7\textwidth}
\ifthenelse{\boolean{PGFPlots}}{
\tikzsetnextfilename{spf_figure}
\tikzset{external/force remake}
}{}
\begin{figure}
  \centerline{
    \ifthenelse{\boolean{PGFPlots}}{
      \input{spf_figure.tex}
    }{
      \includegraphics[height=\figureheight]{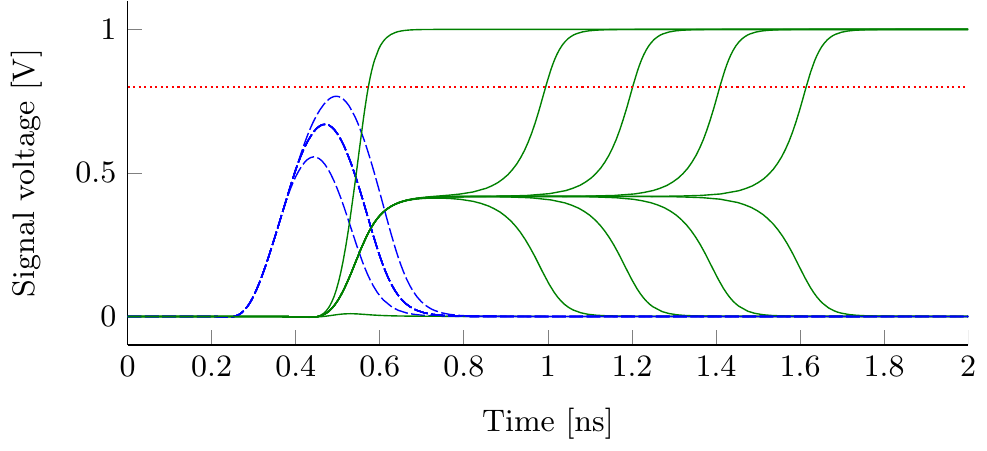}
    }}
  \caption{Analog simulation traces of a CMOS SPF, implemented as a storage 
loop followed by a high-threshold filter. The dashed (blue) curves represent
the input signal, the solid (green) ones give the output of the storage 
loop. The horizontal line at 0.8 marks the filter threshold level.}
  \label{fig:spf}
\end{figure}

\begin{figure}
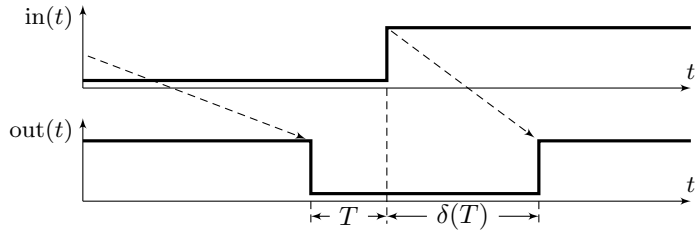

  \centerline{
    \tikzfigurechannelintro}
    \caption{
Input/output signal of a single-history channel, involving the
    input-to-previous-output delay $T$ and the resulting output-to-input delay
    $\delta(T)$}\label{fig:shc}
\end{figure}

A single-history channel, as introduced in \cite{FNS13}, is characterized by a
     delay function $\delta(T)$  that may depend on the
     difference~$T$ between the time of the input transition and that of
     the previous output transition.
\figref{fig:shc} illustrates this relation and the involved
     delays.
Pure delay, inertial delay, and PID channels are all single-history
     channels with an upper and lower {\em bounded\/} delay function.
Interestingly, as shown in~\cite{FNS13}, binary circuit models based on
     channels with pure (= constant) delays do not even allow to solve 
unbounded SPF. On the other hand, bounded single-history channels with 
non-constant delays, including inertial delay
     and PID channels, allow to design circuits
     that solve bounded SPF. Since this contradicts reality, as argued
above, none of the existing binary circuit models can faithfully
     capture glitch propagation.

In this paper, we propose a class of single-history channel models
      with {\em unbounded\/} delay functions: Like their bounded
      counterparts, their delay is upper bounded; however, it is not
      bounded from below.
As shown in Section~\ref{sec:channels}, these negative delays are
      crucial for accurately modeling glitch suppression.
We coined the term \emph{involution channel} for our channels, as
      we require their negative delay functions to be involutions, 
i.e., $-\delta(T)$ must form its own inverse (which implies that $\delta(T)$
is strictly increasing and concave).
To increase the size of our class of involution channels, we actually 
allow the delay functions~$\delta_\uparrow$ and~$\delta_\downarrow$ for rising and
      falling transitions to be different, and require both
$-\delta_\downarrow(-\delta_\uparrow(T)) = T$ and 
$-\delta_\uparrow(-\delta_\downarrow(T)) = T$. We will prove
that the solvability/unsolvability border of SPF
in a binary-valued circuit model based on our involution 
channels is exactly the same as in reality. It is hence,
to the best of our knowledge, the very first candidate for a model 
that indeed guarantees faithful glitch propagation.

\medskip

\noindent
{\bf Major contributions and paper organization:}
(1) In Section~\ref{sec:example}, we use a simple
analog channel model to demonstrate that assuming delay
functions which are involutions is not artificial
and hence not unrealistic: It reveals that the standard 
first-order model used e.g.\ in 
\cite{RCS90} actually gives a simple instance of general
involution channels, which are introduced formally in Section~\ref{sec:channels}.
Our binary circuit model, as well as the SPF problem, are formally defined in
Section~\ref{sec:model}.
(2) In Section~\ref{sec:possibility}, we prove that the simple circuit
consisting of a storage loop and a high-threshold filter
solves unbounded SPF in the involution channel model.
(3) In Section~\ref{sec:impossibility}, we show that
bounded SPF is impossible to solve with involution
channels. In a nutshell, our proof inductively constructs an
execution that can determine the final output only after some unbounded 
time. It exploits a surprising continuity property of the output 
of an involution channel with respect to the 
presence/absence of glitches at the channel input, which is due to
the involution property (unboundedness) of the delay functions.

Together, our results reveal that involution channels indeed allow
     to solve (bounded) SPF precisely when this is 
     possible in physical circuits, rendering them promising
     candidates for faithful glitch propagation models.
%

\medskip

\noindent{\bf Related Work.}
We are not aware of much existing work that relates to the problem
studied in our paper:
Unger~\cite{Ung71} proposed a general technique for modeling
asynchronous sequential switching circuits, based on 
combinational circuit elements interconnected by pure and 
inertial delay channels.
Brzozowski and Ebergen~\cite{BE92} formally proved that
it is impossible to implement Muller C-Elements and other
state-holding components using only zero-time logical gates 
interconnected by wires without timing restrictions.
Bellido-D\'\i az et~al.\ \cite{BDJCAVH00} proposed the PID model, and justified
its appropriateness both analytically and by comparing the model predictions
against SPICE simulations. However, as already mentioned, 
F\"ugger et~al.\ \cite{FNS13} 
showed that none of the above binary circuit models can faithfully 
model glitch propagation in physical circuits. 

\section{The Expressive Power of Involution Channels}
\label{sec:example}

Restricting delay functions to satisfy
the involution property $-\dup(-\ddo(T))=-\ddo(-\dup(T))=T$ 
might raise concerns about whether such an assumption makes 
sense at all in real circuits, and whether/how it fits to existing 
analog models~\cite{NP73:spice,Ho84:thesis,LM84,PR90,DS90}.
In this section, we will show that involution channels are indeed
well-suited for modeling physical circuits, in the sense that
they arise naturally in a (generalized) standard analog model.

More specifically, we will show that, for any given involutions
$\dup$, $\ddo$, there is a generalized standard analog 
channel model consisting of a pure delay component, a slew-rate limiter
with generalized switching waveforms, and a comparator, as shown 
in Fig.~\ref{fig:analogy}, which has
$\dup$, $\ddo$ as its corresponding delay functions. Note carefully, though, that
we do not claim that Fig.~\ref{fig:analogy} is the only analog model 
that leads to involution delay functions;
there may of course be many others as well. Vice versa, the fact
that some well-known analog model leads to involutions does 
not at all make our results incremental: Besides the fact that,
to the best of our knowledge, no analog modeling paper 
\cite{NP73:spice,Ho84:thesis,LM84,PR90,DS90} addressed the properties
of corresponding delay functions, it is of course not possible 
to generalize results obtained for some \emph{particular} involution
to involutions in general.

As a first observation, note that, while allowing separate functions $\dup$ for
rising and $\ddo$ for falling transitions, the timing behavior of involution 
channels is fully determined by either one, as
$\dup(T)=-\ddo^{-1}(-T)$ (and similarly for $\ddo$).
To better understand how our delay functions ``integrate'' the behavior of
both transitions, consider the ansatz
\begin{equation}\label{eq:ansatz}
\begin{split}
\dup(T)=-\fup^{-1}(\fdo(T)) \quad\text{ and }\quad \ddo(T)=-\fdo^{-1}(\fup(T)),
\end{split}
\end{equation}
where $\fup$ resp.\ $\fdo$ are strictly increasing resp.\ decreasing
functions.
Note that such functions can be found for any involution $\delta$ function.%
\footnote{One could choose $\fdo(t)=-t$ and $\fup(t)=\ddo(t)$, for example.}
Intuitively, we would like $\fup$ and $\fdo$ to represent the continuous switching waveforms
of the output of the generalized slew rate limiter upon the occurrence of a rising
respectively falling transition at its input.
In the above formula, e.g., at a rising transition, $\dup(T)$ returns the time by
which $\fup$ has to be shifted so that the output signal remains continuous with
respect to the output caused by the previous falling transition.
%
For realistic switching waveforms, we further need $\fup(0)=1-\fdo(0)=0$
and $\lim_{t\to\infty}\fup(t)=1-\lim_{t\to\infty}\fdo(t)=1$,\footnote{Still,
  any $\delta$ can be constructed in this way,
  e.g., by using~$\fdo(t)=e^{-t}$,
  $\fup(t)=e^{\ddo(t-\delta^\uparrow_\infty)-\delta^\downarrow_\infty}$.} which requires
to augment (\ref{eq:ansatz}) with some additive terms, resulting in
\begin{equation}\label{eq:ansatz2}
\begin{split}
\dup(T)=-\fup^{-1}(\fdo(T+\delta^\downarrow_\infty))+\delta^\uparrow_\infty \text{~~and~~}
\ddo(T)=-\fdo^{-1}(\fup(T+\delta^\uparrow_\infty))+\delta^\downarrow_\infty,
\end{split}
\end{equation}
where $\delta^\uparrow_\infty$ and $\delta^\downarrow_\infty$ denote
$\lim_{T\to\infty}\dup(T)$ and $\lim_{T\to\infty}\ddo(T)$, respectively.

\begin{figure}[tb]
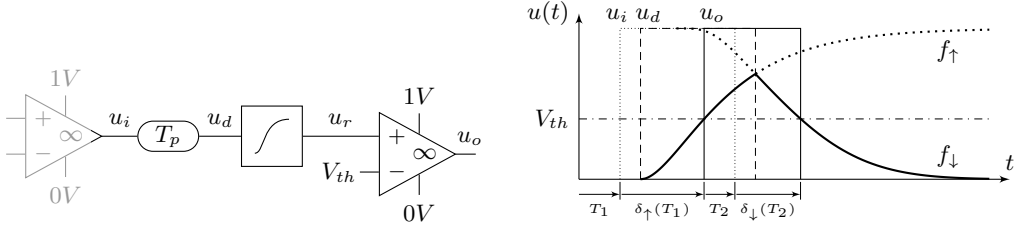

  \centerline{
    \tikzfigurecircuit~\tikzfigureanalogy}
  \caption{Simple analog channel model.}
  \label{fig:analogy}
\end{figure}
Fig.~\ref{fig:analogy} shows a block diagram of an idealized analog circuit
corresponding to so constructed involution channels, and a sample waveform.
The pure delay shifts the binary-valued input $u_i$ in time by some~$T_p$.
The slew rate limiter exchanges the step functions of the resulting $u_d$ with
instances of $\fup$ and $\fdo$, shifting them in time such that the output $u_r$
is continuous and switches between strictly increasing and decreasing exactly at
$u_d$ switching times.
The comparator generates $u_o$ by again discretizing the value of this waveform
comparing it to the threshold voltage $V_{th}$, effectively adding
$\fup^{-1}(V_{th})$ resp.\ $\fdo^{-1}(V_{th})$ to the
instantiation times of $\fup$ resp.\ $\fdo$.
The input-output delay of a perfectly idle channel (the last output transition
was at time $-\infty$), i.e.~$\delta^\uparrow_\infty$ and
$\delta^\downarrow_\infty$ for rising respectively falling transitions, is the
sum of the pure delay and the time the switching waveform needs to reach the
threshold voltage $V_{th}$; e.g.,~for a rising transition,
$\delta^\uparrow_\infty= T_p + \fup^{-1}(V_{th})$.

Apart from showing that, for any $\delta$ function, there is a combination of pure
delay $T_p$, switching waveforms $\fup$ and $\fdo$, and threshold $V_{th}$ so
that the circuit in Fig.~\ref{fig:analogy} behaves exactly like the corresponding involution channel,
(\ref{eq:ansatz2}) can also be used to directly transform the parameters of the
model in Fig.~\ref{fig:analogy} to the corresponding $\delta$ function.
As a special case, consider a slew rate limiter implemented as a first-order
RC low pass filter; the switching waveforms are
$\fdo(t)=1-\fup(t)=e^{-t/\tau}$ here, with $\tau$ being the RC time constant.
Inserting these functions and their inverses into (\ref{eq:ansatz2}) and
substituting $\delta^\uparrow_\infty$ and $\delta^\downarrow_\infty$ with the
corresponding sums of pure delay and comparator delay, we obtain
\begin{equation*}
\begin{split}
\delta_\uparrow(T)&=\tau\ln(1-e^{-(T+T_p-\tau\ln(V_{th}))/\tau})+T_p-\tau\ln(1-V_{th})\\
\delta_\downarrow(T)&=\tau\ln(1-e^{-(T+T_p-\tau\ln(1-V_{th}))/\tau})+T_p-\tau\ln(V_{th}).
\end{split}
\end{equation*}
In the remainder of this paper, these specific (well-known) channels will be called
\emph{exp-channels}.

\ifthenelse{\boolean{true}}{}{
In order to motivate why involutions are promising
candidates for suitable $\delta$-functions, consider the simple analog channel model
depicted in \figref{fig:analogy}. 
\begin{figure}
  \centerline{
    \tikzfigurecircuit}
  \centerline{
    \tikzfigureanalogy}
  \caption{Simple analog channel model.}
  \label{fig:analogy}
\end{figure}
This well-known model, see e.g.\ \cite{RCS90} for an instance, consists of a (pure) 
delay element with delay $T_p$, a slew rate limiter
and a comparator, 
all of which are idealized. %
The circuit input $u_i$, coming from the comparator of the previous stage, 
hence takes on the value 0 or 1 [Volt] 
and switches between those two values with infinite slope. %
The unit [Volt] will be omitted subsequently.
Both $u_i(t)$ and the output $u_d(t)$ of the delay element can hence be viewed
as binary-valued signals.
The slew rate limiter replaces the infinite-slope transitions of $u_d$ with the
predefined switching waveforms $f_\uparrow$ for the rising edge and
$1-f_\downarrow$ for the falling edge on its output $u_r$.
copy waveform These functions (collectively termed ``$f$'' below) must have the
following properties: %
$f(0) = 0$, $\lim_{t\to\infty} f(t) = 1$, and $f$ is strictly increasing and
continuous.
Finally, the comparator discretizes $u_r$ by comparing its value to
a threshold $V_{th}$, thereby generating output signal~$u_o$. %

In order to analyze the behavior of such a channel, we consider
its input $u_i(t)$ to be a signal made up of a sequence of alternating
transitions $(t_n,x_n)_{n\geq0}$: $t_n$ is the time of the $n$-th transition,
and $x_n=0$ resp.\ $x_n=1$ identifies it to be falling resp.\ rising;
the initial transition occurs at time $t_0=-\infty$ for convenience.
Let $(\hat{t}_n)_{n\geq0}$ be the corresponding sequence of switching times of the pure
delay output $u_d$,
i.e., $\hat{t}_n = t_n+T_p$ for $n\geq 0$. Note that $\hat{t}_0=-\infty$,
and assume for simplicity that the initial values of $u_i$ and 
$u_d$ are $x_0=\hat{x}_0=0$.
Then, $\forall k \in
\mathbb{N}_0, \forall t \in [\hat{t}_{2k}, \hat{t}_{2k+1}): u_d(t) = 0$ and similarly
$u_d(t) = 1$ when $t \in [\hat{t}_{2k+1}, \hat{t}_{2k+2})$. %
The slew rate limiter replaces the infinite-slope transitions with instances of
$f_\uparrow$ and $1-f_\downarrow$ as follows: 
$\forall k \in \mathbb{N}_0, \forall t \in [\hat{t}_{2k},
\hat{t}_{2k+1}): u_r(t) = 1-f_\downarrow(t-\hat{t}_{2k}+\theta_{2k})$ and similarly
$u_r(t) = f_\uparrow(t-\hat{t}_{2k+1}+\theta_{2k+1})$ when $t \in [\hat{t}_{2k+1},
\hat{t}_{2k+2})$. 
The sequence $(\theta_n)_{n\geq0}$ is defined implicitly, by requesting
continuity of $u_r(t)$ for all $t$. More explicitly, this requires
$\theta_0=0$ and, in case of $n=2k+1$, i.e., a rising transition at $\hat{t}_n$, 
$\theta_n = f_\uparrow^{-1}(u_r(\hat{t}_n))$. Substituting $u_r(\hat{t}_n)$, we get $\theta_n =
f_\uparrow^{-1} (1-f_\downarrow(\hat{t}_n-\hat{t}_{n-1}+\theta_{n-1}))$. %
For falling transitions, the formula is the same with $f_\uparrow$ and
$f_\downarrow$ swapped; note that the inverse of $1-f(x)$ is just
$f^{-1}(1-x)$. %

The comparator again produces the ``binary-valued'' output signal $u_o(t)$, by
comparing $u_r$ to a threshold voltage $V_{th} \in (0,1)$. %
Knowing that $u_r$ is composed of alternating instances of the bijective functions
$f_\uparrow$ and
$1-f_\downarrow$, there exist unique
$\Delta_\uparrow = f_\uparrow^{-1}(V_{th})$ and $\Delta_\downarrow =
f_\downarrow^{-1}(1-V_{th})$ such that $f_\uparrow(\Delta_\uparrow) = V_{th}$
and $1-f_\downarrow(\Delta_\downarrow) = V_{th}$. %
Therefore, we can derive $u_o(t)$ directly from 
$u_r$, by generating rising transitions at time $t_{2k+1}-\theta_{2k+1}+\Delta_\uparrow$
and falling transitions at $t_{2k}-\theta_{2k}+\Delta_\downarrow$.
Note carefully that the resulting output transition times need not be 
strictly monotonically increasing any more, which results in the cancellation
of transitions in a single-history channel already introduced in \cite{FNS13}.

The overall input to output behavior of the channel for any rising
input transition $(t_n,1)$ on $u_i$ can now be stated as follows: %
$\hat{t}_n=t_n + T_p$ is mapped to the instance 
$f_\uparrow(t-t_n-T_p+\theta_n)$ in the slew rate limiter, from which
the comparator generates the corresponding transition of $u_o$
at time $t_n'=t_n+T_p-\theta_n +
f_\uparrow^{-1}(V_{th})$. %
Substituting for $\theta_n$, we get $t_n'=t_n+T_p-f_\uparrow^{-1}
(1-f_\downarrow(t_n-t_{n-1}+\theta_{n-1})) + f_\uparrow^{-1}(V_{th})$. %
Utilizing the input-to-previous-output transition time
$T=t_n-t_{n-1}-T_p+\theta_{n-1} - f_\downarrow^{-1}(1-V_{th})$, we obtain
$t_n'=t_n+T_p-f_\uparrow^{-1} (1-f_\downarrow(T+T_p+f_\downarrow^{-1}(1-V_{th}))) +
f_\uparrow^{-1}(V_{th})$. The output-to-input delay $\delta_\uparrow(T)=t_n'-t_n$ (and
$\delta_\downarrow(T)$, which is obtained analogously) is thus: %
\[
{\small
\begin{split}
\delta_\uparrow(T) = T_p-f_\uparrow^{-1} (1-f_\downarrow(T+T_p+f_\downarrow^{-1}(1-V_{th}))) +
f_\uparrow^{-1}(V_{th})
\\
\delta_\downarrow(T) = T_p-f_\downarrow^{-1} (1-f_\uparrow(T+T_p+f_\uparrow^{-1}(V_{th}))) +
f_\downarrow^{-1}(1-V_{th})
\end{split}
}
\]
\normalsize

If $V_{th}=0.5$ is plugged into the definitions above, we obtain
$\delta_\uparrow(T) = \delta_\downarrow(T)=\delta(T)$. By plugging in $T:=-\delta(T)$
into the resulting definition of $\delta(T)$, it is easy to verify that $-\delta(T)$ is
indeed an involution.

We conclude this section with the observation that the model used
in \cite{RCS90}, which uses a first-order RC low-pass filter for
the slew rate limiter, is actually a particular simple instance 
of an involution channel: It produces transitions with $f_\uparrow(t) = f_\downarrow(t) = f(t) =
1-e^{-(t/\tau)}$, where $\tau$ is the RC constant. %
The inverse is $f^{-1}(u) = -\tau \ln(1-u)$, which leads to the following
$\delta$-functions:
{\small
\begin{eqnarray*}
\delta_\uparrow(T) &=& \tau\ln(1-e^{-(T+T_p-\tau\ln(V_{th}))/\tau})+T_p-\tau\ln(1-V_{th})\\
\delta_\downarrow(T) &=& \tau\ln(1-e^{-(T+T_p-\tau\ln(1-V_{th}))/\tau})+T_p-\tau\ln(V_{th})
\end{eqnarray*}
}\normalsize
In the sequel, these channels will be called \emph{exp-channels}.

} 

\section{Binary Circuit Model}\label{sec:model}

Since the purpose of our work is to replace analog models like the one 
in the previous section by a purely digital model, we will now formally
define the binary-value continuous-time circuit model used in the remainder of
this paper. Except for the involution channels introduced in 
Section~\ref{sec:channels}, it is essentially the same as the model introduced
in~\cite{FNS13}.

\smallskip

\bfno{Signals.}
A {\em falling transition\/} at time~$t$ is the pair $(t,0)$, a {\em rising
transition\/} at time~$t$ is the pair $(t,1)$.
A {\em signal\/} is a (finite or infinite) list of alternating transitions
such that
\begin{enumerate}[S1)]
\item the initial transition is at time~$-\infty$; all other transitions are at times~$t\geq0$,
\item the sequence of transition times is strictly increasing,
\item if there are infinitely many transitions in the list, then the set of
transition times is unbounded.
\end{enumerate}

To every signal~$s$ corresponds a function $\IR_+\to\{0,1\}$ whose value at
time~$t$ is that of the most recent transition.
We follow the convention that the function already has the new value at the
time of a transition, i.e., the function is constant in the half-open interval
$[t_n,t_{n+1})$ if~$t_n$ and~$t_{n+1}$ are two consecutive
transition times.
A signal is uniquely determined by such a function and its value at~$-\infty$.

\smallskip

\bfno{Circuits.}
Circuits are obtained by interconnecting a set of input ports and a set
     of output ports, forming the external interface of a circuit, 
and a set of combinational gates via channels. 
We constrain the way components are interconnected in a natural
way, by requiring that any gate input, channel input and output
port is attached to only one input port, gate output or channel
output. Moreover, gates and channels must alternate on every
path in the circuit.

Formally, a {\em circuit\/} is described by a directed graph where:
\begin{enumerate}[C1)]
\item Vertices are partitioned
into {\em input ports}, {\em output ports}, {\em channels}, and {\em gates}.
\item Input ports have no incoming edges and at least one outgoing edge.
\item Output ports have exactly one incoming edge from a gate and no outgoing edges.
\item Channels are nodes that have exactly one incoming and exactly one outgoing
edge. Every channel is assigned a channel function, which maps the input 
to the output. Section~\ref{sec:channels} specifies the properties 
of this function for our involution channels.
\item 
Every gate is assigned a Boolean function
      $\{0,1\}^d\to\{0,1\}$, where~$d$ is the number of incoming edges.
\item
There is a fixed order on the incoming edges of every gate.
\item
Gates and channels alternate on every path in a circuit.
\end{enumerate}


\smallskip

\bfno{Executions.}
An execution of circuit~$C$ is an assignment of signals to vertices that
respects the channel functions and Boolean gate functions.

Formally, an {\em execution\/} of circuit~$C$ is a collection of signals~$s_v$
for all vertices~$v$ of~$C$ such that the following properties hold:
\begin{enumerate}[E1)]
\item
If~$i$ is an input port, then there are no restrictions on~$s_i$.
\item
If~$o$ is an output port, then~$s_o = s_v$, where~$v$ is the unique
gate $v$ associated with~$o$.
\item
If~$c$ is a channel, then $s_c = f_c(s_v)$, where~$v$ is the unique incoming
neighbor of~$c$ and~$f_c$ the channel function.
\item
If~$b$ is a gate with~$d$ incoming neighbors $v_1,\dots,v_d$, ordered according to
the fixed order of condition (C6), and gate function~$f_b$, then for all times~$t$,
\[
s_b(t) = f_b\big( s_{v_1}(t) ,  s_{v_2}(t) , \dots , s_{v_d}(t) \big)\enspace.
\]
\end{enumerate}


%

\smallskip

\bfno{Short-Pulse Filtration.} 
A {\em pulse\/} of length~$\Delta$ at time~$T$ has initial value~$0$, one
rising transition at time~$T$, and one falling transition at time $T+\Delta$.
A signal {\em contains a pulse\/} of length~$\Delta$ at time~$T$ if it
 contains a rising transition at time~$T$, a falling transition at time
$T+\Delta$ and no transition in between.

A circuit {\em solves Short-Pulse Filtration (SPF)\/} if it fulfills the
following conditions. Note that we allow the circuit to behave arbitrarily if the input
signal is not a (single) pulse.
\begin{enumerate}[F1)]
\item The circuit has exactly one input port and exactly one output port. {\em
        (Well-formedness)}
\item If the input signal is the zero signal, then so is the output
	signal. {\em (No generation)}
\item There exist an input pulse such that
	the output signal is not the zero signal. {\em (Nontriviality)}
\item There exists an~$\varepsilon>0$ such that for every input pulse the output
	signal never contains a pulse of length less than~$\varepsilon$. {\em
	(No short pulses)}
\end{enumerate}

A circuit {\em solves bounded SPF\/} if additionally the following condition holds:
\begin{enumerate}
\item[F5)] There exists a $K>0$ such that for every input pulse
the last output transition
	is before time~$T+K$ if~$T$ is the time of the last input transition.
	{\em (Bounded stabilization time)}
\end{enumerate}

\section{Involution Channels}\label{sec:channels}

Intuitively, a channel propagates each transition at time~$t$ of the
input signal to a transition at the output happening after some \emph{output-to-input}
delay $\delta(T)$, which depends on the \emph{input-to-previous-output}
delay $T$. 
Note that~$T$ can be negative if two input transitions are close together,
as is the case in Fig.~\ref{fig:shc:T:neg}.
\begin{figure}
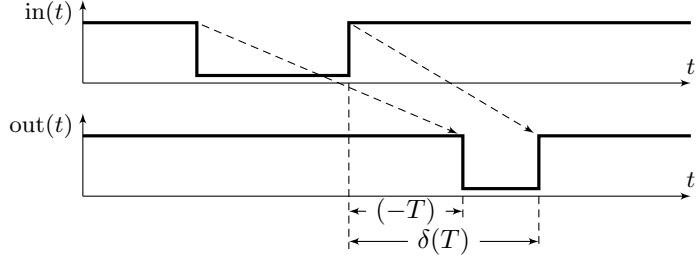

  \centerline{
    \tikzfigurechannelsection}
    \caption{
Input transition with negative input-to-last-output delay $T$
    }\label{fig:shc:T:neg}
\end{figure}


Formally, an {\em involution channel\/} is characterized by an {\em initial
     value\/} $I\in\{0,1\}$ and
     two strictly increasing concave {\em delay
functions\/} 
$\delta_\uparrow:(-\delta^\downarrow_\infty,\infty)\to(-\infty,\delta^\uparrow_\infty)$
and
$\delta_\downarrow:(-\delta^\uparrow_\infty,\infty)\to(-\infty,\delta^\downarrow_\infty)$
such that both 
$\delta^\uparrow_\infty=\lim_{T\to\infty}\delta_\uparrow(T)$ 
and
$\delta^\downarrow_\infty=\lim_{T\to\infty}\delta_\downarrow(T)$ 
are finite and
\begin{equation}\label{eq:involution}
-\delta_\uparrow\big( -\delta_\downarrow(T) \big) = T
\text{ and }
-\delta_\downarrow\big( -\delta_\uparrow(T) \big) = T
\end{equation}
for all applicable~$T$.
All such functions are necessarily continuous and strictly increasing.
{For simplicity, we will also assume them to be differentiable}; $\delta$
being concave thus implies that its derivative $\delta'$ is monotonically
decreasing.
If multiple channels in a circuit share a common input signal, as depicted in
\figref{fig:branch}, we require that they all have the same initial value $I$.
This is without loss of generality, as one can always replicate the input
signal.


\begin{figure}
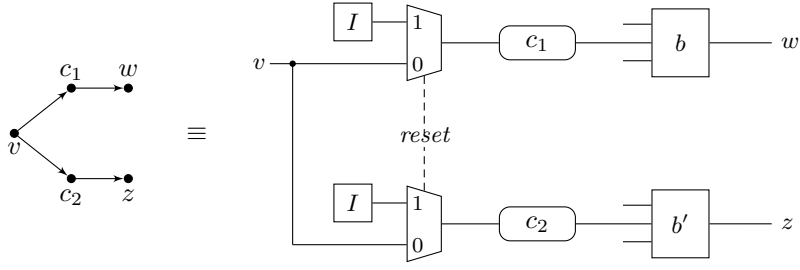

  \centerline{
  \tikzfigurechannelbranch}
\caption{A circuit (graph) with vertex~$v$ (being an input or a gate),
     gates~$w$, $z$, and channels~$c_1$
     and~$c_2$ (on the left) and the physical equivalent (on the right).
Both channels must have the same initial value $I$; $b$ and~$b'$ are the
     Boolean functions assigned to gates~$w$ and~$z$,
     respectively.}\label{fig:branch}
\end{figure}

The behavior of involution channels is defined as
follows:

\noindent
{\em Initialization:\/} 
If the channel's initial value~$I$ is different from the initial value~$X$ of
the channel input signal~$s$ and~$s$ has no transition at time~$0$, add the
transition $(0,X)$ at time~$0$ to~$s$ (``reset'').

\noindent
{\em Output transition generation algorithm:\/}
Let~$t_1,t_2,\dots$ be the times of the transitions of~$s$, and
set $t_0 = -\infty$ and $\delta_0=0$.
\begin{itemize}
\item {\em Iteration:\/} Determine the tentative list of \emph{pending\/} output
transitions:
Recursively determine the output-to-input delay $\delta_n$ for the input transition at time~$t_n$ by
setting
$\delta_n = \delta_\uparrow(t_n - t_{n-1} - \delta_{n-1})$ if $t_n$ is a rising
transition and
$\delta_n = \delta_\downarrow(t_n-t_{n-1}-\delta_{n-1})$
if it is falling.
The $n$th and $m$th pending output transitions {\em cancel\/} if $n < m$ but
$t_n+\delta_n \geq t_m+\delta_m$. In this case, we mark both as canceled.

\item {\em Return:\/}
The channel output signal~$c(s)$ has initial value~$I$ and contains every pending 
transition at time $t_n + \delta_n$ that has not been marked as canceled.

\end{itemize}

\begin{definition}
An involution channel is {\em strictly causal\/} if $\delta_\uparrow(0)>0$,
which is equivalent to the condition $\delta_\downarrow(0)>0$ due to (\ref{eq:involution}).
\end{definition}

\begin{lemma}
An exp-channel is strictly causal if and only if $T_p >0$.
\end{lemma}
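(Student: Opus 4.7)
The plan is to prove the equivalence $\delta_\uparrow(0) > 0 \iff T_p > 0$ by direct computation with the closed-form exp-channel delay function, exploiting monotonicity in $T_p$. By the involution property~(\ref{eq:involution}), strict causality for $\delta_\uparrow$ is equivalent to that for $\delta_\downarrow$ (this is already asserted in the definition), so it suffices to analyze $\delta_\uparrow(0)$.

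First, I would substitute $T=0$ into the exp-channel formula for $\delta_\uparrow$. Using $e^{-(T_p-\tau\ln(V_{th}))/\tau} = V_{th}\,e^{-T_p/\tau}$, this simplifies to
\begin{equation*}
\delta_\uparrow(0) \;=\; \tau\ln\!\bigl(1-V_{th}\,e^{-T_p/\tau}\bigr) + T_p - \tau\ln(1-V_{th}).
\end{equation*}
Next, I would evaluate at $T_p=0$: the exponential factor becomes $1$, so the first logarithm matches the last one and we obtain $\delta_\uparrow(0) = 0$. This handles the ``only if'' direction: $T_p = 0$ forces $\delta_\uparrow(0)=0$, hence not strictly causal (and of course $T_p < 0$ is precluded since $T_p$ is a pure delay).

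For the ``if'' direction, I would prove strict monotonicity of the above expression as a function of $T_p$. Differentiating,
\begin{equation*}
\frac{d}{dT_p}\delta_\uparrow(0) \;=\; \frac{V_{th}\,e^{-T_p/\tau}}{1-V_{th}\,e^{-T_p/\tau}} + 1 \;=\; \frac{1}{1-V_{th}\,e^{-T_p/\tau}},
\end{equation*}
which is well-defined and strictly positive for all $T_p\geq 0$, since $V_{th}\in(0,1)$ gives $V_{th}\,e^{-T_p/\tau} < 1$. Combined with the base case $\delta_\uparrow(0)\big|_{T_p=0}=0$, strict monotonicity yields $\delta_\uparrow(0) > 0$ whenever $T_p>0$, completing the proof.

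There is no real obstacle here; the only mild subtlety is the algebraic simplification of the exponential/logarithm combination in step one, which collapses cleanly thanks to the specific form in which $\tau\ln(V_{th})$ and $\tau\ln(1-V_{th})$ appear in the exp-channel formula.
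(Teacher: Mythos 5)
Your proof is correct. Note that the paper itself states this lemma without proof (it is treated as an immediate computation), so there is no official argument to compare against; your direct verification -- simplifying $\delta_\uparrow(0)=\tau\ln\bigl(1-V_{th}e^{-T_p/\tau}\bigr)+T_p-\tau\ln(1-V_{th})$, checking it vanishes at $T_p=0$, and showing strict monotonicity in $T_p$ via the derivative $1/\bigl(1-V_{th}e^{-T_p/\tau}\bigr)>0$ -- is sound, and the algebra checks out (the argument of the logarithm is positive for all $T_p\geq 0$ since $V_{th}\in(0,1)$, so everything is well-defined). A marginally shorter route, which also ties in with the paper's later claim in Lemma~\ref{lem:delta:min} that $\delta_{\min}=T_p$ for exp-channels, is to plug $T=-T_p$ into the formula and observe $\delta_\uparrow(-T_p)=T_p$ exactly; since $\delta_\uparrow$ is strictly increasing, $T_p>0$ gives $\delta_\uparrow(0)>\delta_\uparrow(-T_p)=T_p>0$, while $T_p=0$ gives $\delta_\uparrow(0)=0$. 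Both arguments buy the same thing; yours avoids invoking monotonicity of $\delta_\uparrow$ in $T$ at the cost of one differentiation in $T_p$.
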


The next lemma\ifthenelse{\boolean{SHORTversion}}{\footnote{Due to lack of
		space, the detailed proofs of the full paper have been
relegated to Appendix~\ref{sec:proofs}.}}{}
identifies an important parameter $\delta_{\min}$ of 
a strictly causal involution channel, which gives its minimal pure delay.

\begin{lemma}
\label{lem:delta:min}
A strictly causal involution channel has a unique~$\delta_{\min}$ defined
by $\delta_\uparrow(-\delta_{\min}) = \delta_{\min} = 
\delta_\downarrow(-\delta_{\min})$, which is positive.
For exp-channels, $\delta_{\min}=T_p$.

For the derivative, we have $\delta_\uparrow'(-\delta_\downarrow(T)) = 1/\delta_\downarrow'(T)$
and hence $\delta_\uparrow'(-\delta_{\min})  =
1/\delta_\downarrow'(-\delta_{\min})$.
\end{lemma}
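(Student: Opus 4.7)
The plan is to handle the three claims separately, since each reduces to a one-line computation once the setup is right.

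For existence and uniqueness of $\delta_{\min}$, I would first observe that the two defining equations $\delta_\uparrow(-d) = d$ and $\delta_\downarrow(-d) = d$ are equivalent: applying the involution identity $-\delta_\downarrow(-\delta_\uparrow(x)) = x$ with $x = -d$ turns one into the other. So it suffices to show that $h(d) := \delta_\uparrow(-d) - d$ has a unique positive zero. Since $\delta_\uparrow$ is strictly increasing, $h$ is strictly decreasing on its domain $d \in (-\delta^\uparrow_\infty, \delta^\downarrow_\infty)$. Strict causality gives $h(0) = \delta_\uparrow(0) > 0$, and the fact that $\delta_\uparrow$ maps $(-\delta^\downarrow_\infty, \infty)$ bijectively onto $(-\infty, \delta^\uparrow_\infty)$ gives $\delta_\uparrow(-d) \to -\infty$ as $d \to \delta^\downarrow_\infty^-$, hence $h(d) \to -\infty$. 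The intermediate value theorem, together with strict monotonicity, yields a unique zero $\delta_{\min} > 0$.

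For the exp-channel formula, I would simply substitute $T = -T_p$ into the closed-form expression for $\delta_\uparrow$ from Section~\ref{sec:example}. The exponent in the inner exponential becomes $-(-T_p + T_p - \tau\ln(V_{th}))/\tau = \ln(V_{th})$, so $1 - e^{\ln(V_{th})} = 1 - V_{th}$, and the two $\tau\ln(1-V_{th})$ terms cancel, leaving $T_p$. This verifies $\delta_\uparrow(-T_p) = T_p$, and by the uniqueness established above, $\delta_{\min} = T_p$.

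For the derivative identity, I would differentiate the involution equation $-\delta_\uparrow(-\delta_\downarrow(T)) = T$ with respect to $T$. The chain rule gives $-\delta_\uparrow'(-\delta_\downarrow(T)) \cdot (-\delta_\downarrow'(T)) = 1$, i.e., $\delta_\uparrow'(-\delta_\downarrow(T)) \cdot \delta_\downarrow'(T) = 1$, which rearranges to the stated identity. Specializing to $T = -\delta_{\min}$ and using $\delta_\downarrow(-\delta_{\min}) = \delta_{\min}$, we get $\delta_\uparrow'(-\delta_{\min}) = 1/\delta_\downarrow'(-\delta_{\min})$.

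I don't anticipate any real obstacle; the only thing to watch is that all function evaluations stay within the specified domains (which is guaranteed by the assumption that $\delta^\uparrow_\infty, \delta^\downarrow_\infty$ are finite and that $\delta_{\min} \in (0, \delta^\downarrow_\infty)$), and that differentiability has been assumed in the paragraph immediately preceding the lemma, so the chain rule step is legitimate.
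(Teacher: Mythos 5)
Your proposal is correct and follows essentially the same route as the paper: the auxiliary function $h(d)=\delta_\uparrow(-d)-d$ is exactly the paper's $f(T)=-T+\delta_\uparrow(-T)$, with strict monotonicity, $h(0)=\delta_\uparrow(0)>0$, the limit $-\infty$ as $d\to\delta^\downarrow_\infty$, and the intermediate value theorem giving the unique positive $\delta_{\min}$; the second defining equality comes from the involution identity and the derivative claim from differentiating it, just as in the paper. The only addition is your explicit substitution $T=-T_p$ verifying $\delta_{\min}=T_p$ for exp-channels, a computation the paper's proof leaves implicit, and which you carry out correctly.
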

\ifthenelse{\boolean{SHORTversion}}{}{
\begin{proof}
Set $f(T) = -T + \delta_\uparrow(-T)$.
This function is continuous and strictly decreasing, 
since~$\delta_\uparrow$ is continuous and strictly increasing.
Because $f(0)=\delta_\uparrow(0)$ is positive and the limit of
$f(T)$ as
$T\to
\delta^\downarrow_\infty$ is $-\infty$, there exists a unique~$\delta_{\min}$
between~$0$
and~$\delta^\downarrow_\infty$ for which $f(\delta_{\min}) = 0$.
Hence, $\delta_\uparrow(-\delta_{\min})=\delta_{\min}$.
The second equality follows from 
$\delta_{\min}=\delta_\downarrow(-\delta_\uparrow(-\delta_{\min}))=\delta_\downarrow(-\delta_{\min})$
according to (\ref{eq:involution}).

The second part of the lemma follows by differentiating
Equation~\eqref{eq:involution}.
\end{proof}
}

We next show that~$\delta_{\min}$ indeed deserves its name:
A particular consequence of the following lemma is that the channel delay
for any non-canceled transition is at least~$\delta_{\min}$.

\begin{lemma}
\label{lem:minimal:delay}
The $n$th and $(n+1)$th pending output transitions cancel if and only if 
$t_{n+1} \leq t_n + \delta_n - \delta_{\min}$.
\end{lemma}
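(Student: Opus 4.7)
The plan is to rephrase the cancellation condition $t_{n+1} + \delta_{n+1} \leq t_n + \delta_n$ purely in terms of the quantity
\[
S \;:=\; t_{n+1} - t_n - \delta_n,
\]
which is exactly the argument fed into the delay function when computing $\delta_{n+1}$. The cancellation condition then reads $\delta_{n+1} \leq -S$, and the goal is to show this is equivalent to $S \leq -\delta_{\min}$, which, after unfolding $S$, is precisely $t_{n+1} \leq t_n + \delta_n - \delta_{\min}$.

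To this end, I would distinguish the two cases according to the polarity of the $(n+1)$th transition. Since transitions alternate, $\delta_{n+1}$ is either $\delta_\uparrow(S)$ or $\delta_\downarrow(S)$. In each case, I define the auxiliary function $g_\uparrow(S) := \delta_\uparrow(S) + S$, respectively $g_\downarrow(S) := \delta_\downarrow(S) + S$. Both $\delta_\uparrow$ and $\delta_\downarrow$ are continuous and strictly increasing by the hypotheses on involution channels, so $g_\uparrow$ and $g_\downarrow$ are continuous and strictly increasing as well. By Lemma~\ref{lem:delta:min}, $\delta_\uparrow(-\delta_{\min}) = \delta_\downarrow(-\delta_{\min}) = \delta_{\min}$, so $g_\uparrow(-\delta_{\min}) = g_\downarrow(-\delta_{\min}) = 0$. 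Strict monotonicity then gives, in both cases, $\delta_{n+1} + S = g_{\uparrow/\downarrow}(S) \leq 0$ if and only if $S \leq -\delta_{\min}$, which is exactly the claim.

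The only subtlety that needs care is domain bookkeeping: one has to verify that $S$ actually lies in the domain $(-\delta^\downarrow_\infty,\infty)$ or $(-\delta^\uparrow_\infty,\infty)$ of the relevant delay function, and that $-\delta_{\min}$ sits inside that same domain. The first point follows from the fact that the preceding $\delta_n$ was produced by the opposite delay function and is therefore bounded above by $\delta^\uparrow_\infty$ or $\delta^\downarrow_\infty$, so $S > t_{n+1} - t_n - \delta^{\uparrow/\downarrow}_\infty > -\delta^{\uparrow/\downarrow}_\infty$. The second point is immediate from Lemma~\ref{lem:delta:min}, which places $\delta_{\min}$ strictly between $0$ and $\delta^\downarrow_\infty$ (and symmetrically for $\delta^\uparrow_\infty$). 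I expect this small domain check to be the only place where one has to pause; everything else is a direct consequence of strict monotonicity and the defining identity of $\delta_{\min}$.
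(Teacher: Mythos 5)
Your proposal is correct and follows essentially the same route as the paper's proof: the paper sets $T = t_{n+1}-t_n-\delta_n$, notes that the cancellation condition $\delta(T)\leq -T$ holds with equality exactly at $T=-\delta_{\min}$ (Lemma~\ref{lem:delta:min}), and concludes by monotonicity of the two sides, which is the same argument as your strictly increasing $g(S)=\delta(S)+S$ vanishing at $-\delta_{\min}$. Your additional domain bookkeeping is a harmless (and slightly more careful) extra that the paper omits.
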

\ifthenelse{\boolean{SHORTversion}}{}{
\begin{proof}
Let~$\delta$ be either~$\delta_\uparrow$ or~$\delta_\downarrow$, depending on
whether $t_{n+1}$ is a rising or falling transition.
By definition, the two transitions cancel if and only if
\begin{equation}\label{eq:lem:minimal:delay:proof}
\delta_{n+1} = \delta(t_{n+1}-t_n-\delta_n)
\leq
-(t_{n+1} - t_n - \delta_n)
\enspace.
\end{equation}
Set $T = t_{n+1} - t_n - \delta_n$.
By Lemma~\ref{lem:delta:min}, equality holds
in~\eqref{eq:lem:minimal:delay:proof} if and only if $T = -\delta_{\min}$.
Because the left-hand side of~\eqref{eq:lem:minimal:delay:proof} is increasing
in~$T$ and the right-hand side is strictly decreasing in~$T$,
\eqref{eq:lem:minimal:delay:proof} is equivalent to $T\leq -\delta_{\min}$,
which in turn is equivalent to $t_{n+1}\leq t_n + \delta_n - \delta_{\min}$.
\end{proof}
}

In the rest of the paper, we assume all channels to be strictly causal involution
channels.

\section{Constructing Executions of Circuits}\label{sec:algorithm}

The definition of an execution of a circuit as given 
in Section~\ref{sec:model} is ``existential'', in the sense that it
only allows to check for a given collection of signals whether 
it is an execution or not. And indeed, in general, circuits may 
have no execution or may have several different executions. 
By contrast, in case of circuits involving strictly causal involution 
channels only, 
executions are unique and can be constructed iteratively: 
We give a deterministic construction algorithm below.

Given a circuit~$C$ with strictly causal involution channels, let 
$(s_i)_{i\in \mathcal{I}}$ be any collection of signals for all the input 
ports $\mathcal{I}$. 
Since all output ports are driven by gates we can identify the output port with
the output of its driving gate.
The channel with predecessor~$x$ (an input port or a gate output) and
successor~$y$ (a gate input) is denoted by the tuple~$(x,y)$.
The algorithm iteratively generates the list of transitions of $s_\sigma$ of
(the output of) every vertex $\sigma$ in the circuit, and hence the
corresponding function $s_\sigma(t)$.
In the course of the execution of this algorithm, a subset of the generated
transitions will be marked {\em fixed}: Non-fixed transitions could still be
canceled by other transitions later on, fixed transitions will actually occur in
the constructed execution.

\medskip

The detailed algorithm is as follows:          

\noindent
{\em Initialization:\/} 
For all channels~$(v,w)$ in~$C$, $s_{(v,w)} = ((-\infty,I))$ initially, with~$I$ being the
     initial value of channel~$(v,w)$. According to the implicit reset of our
channels introduced in Section~\ref{sec:channels}, the transition $(0,X)$ is
also added to $s_{(v,w)}$ if the initial transition $(-\infty,X)$ of $s_v$
satisfies $X\neq I$.\footnote{Note that this is well-defined also in case of
channels~$(v,w)$ and~$(v,w')$ attached to the same~$v$, as we require 
$s_{(v,w)}=s_{(v,w')}$ initially in this case; see Section~\ref{sec:channels}.} 
For a gate~$v$, $s_v =((-\infty,X))$ initially,
     where~$X$ is the value of the Boolean function corresponding to~$v$
     applied to the values of the initial transitions in $s_{\sigma}$ 
     for all of~$v$'s predecessors~$\sigma$. The zero-input gates $0$
and $1$ used for generating constant-0 and constant-1 signals have
$s_0=((-\infty,0))$ and $s_1=((-\infty,1))$, respectively.
Initially, all transitions at~$-\infty$ are fixed and all others are not.

\noindent
{\em Iteration:\/} If there is no non-fixed transition left, terminate with
the execution made up by all fixed transitions. Otherwise, let~$t \geq 0$
     be the smallest time of a non-fixed transition.
\begin{enumerate}
\item[(i)] Mark all transitions at~$t$ fixed.
\item[(ii)] For each newly fixed transition from step~(i), occurring in $s_{\sigma}$ 
where~$\sigma$ is a predecessor of a gate~$v$: If signal~$s_v$'s current value $s_v(t)=X$
differs from the value of~$v$'s Boolean function applied to the values $s_{\sigma'}(t)$
for all of~$v$'s predecessors~$\sigma'$ (which also include $\sigma$), add the
transition~$(t,1-X)$ to~$s_v$ and mark it fixed.
\item[(iii)] For each newly fixed transition $(t,x)\in s_v$ from steps~(i) or~(ii), occurring
in $s_v$ of a gate output or an input port: For each successor channel~$(v,w)$ of~$v$, apply the
     iteration step of~$(v,w)$'s transition generation algorithm with input
     signal~$s_v$, output signal~$s_{(v,w)}$, and current input transition $(t,X)$.
If this leads to a cancellation in $s_{(v,w)}$, remove both canceling and canceled
transition from the list. Lemma~\ref{lem:fixed_is_fixed} will show
that no fixed transition will ever be removed this way.
\end{enumerate}

We will now show that this algorithm indeed constructs an execution of $C$.
Let~$t_\ell$ be the smallest finite time of non-fixed transitions at the 
beginning of iteration~$\ell \ge 1$ of the algorithm, and denote 
by~$\delta^C_{\min}>0$ the minimal $\delta_{\min}$ of all channels
     in circuit~$C$.

\begin{lemma}\label{lem:t_fixed}
For all iterations~$\ell \ge 1$, (a) no transition $(t,X)$ with
     $t\neq t_\ell$ is newly marked fixed in the iteration, (b) a
     transition $(t,X)$ added during and not removed by the end of iteration~$\ell$
     either has time~$t=t_\ell$ or $t > t_{\ell}+\delta_{\min}^C > t_{\ell}$, and (c)
     every transition at time~$t_\ell$ is fixed at the end of the
     iteration.
\end{lemma}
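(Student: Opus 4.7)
The plan is to analyze the three sub-steps of iteration $\ell$ separately; no induction on $\ell$ is needed, since all three assertions concern only what happens within the single iteration. The key external ingredients are Lemma~\ref{lem:delta:min} (existence and positivity of $\delta_{\min}$) and Lemma~\ref{lem:minimal:delay} (the cancellation criterion inside a channel).

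Part~(a) is essentially a restatement of the algorithm: step~(i) fixes only transitions at time $t_\ell$, step~(ii) adds new transitions at time $t_\ell$ and fixes them, and step~(iii) only adds \emph{non-fixed} channel-output transitions. So every newly fixed transition sits at time $t_\ell$.

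For part~(b), step~(ii) produces transitions at time $t_\ell$ by construction. A step~(iii) transition on a channel $(v,w)$, triggered by the newly fixed input transition at $t_\ell$ (say the $n$th input transition of that channel), is placed at $t_\ell + \delta_n$ with $\delta_n = \delta(t_\ell - t_{n-1} - \delta_{n-1})$, where $\delta \in \{\delta_\uparrow,\delta_\downarrow\}$ depending on the transition direction. By Lemma~\ref{lem:minimal:delay}, the $(n-1)$th and $n$th pending outputs cancel (and are then both removed in step~(iii)) iff $t_\ell \le t_{n-1} + \delta_{n-1} - \delta_{\min}$, i.e.\ $t_\ell - t_{n-1} - \delta_{n-1} \le -\delta_{\min}$. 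A new transition that survives the iteration therefore satisfies the strict reverse inequality; combining strict monotonicity of $\delta$ with $\delta(-\delta_{\min}) = \delta_{\min}$ (Lemma~\ref{lem:delta:min}) yields $\delta_n > \delta_{\min} \ge \delta_{\min}^C$, so the new transition lies at a time strictly greater than $t_\ell + \delta_{\min}^C$.

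Part~(c) then assembles: every transition at $t_\ell$ present at the start of the iteration is fixed in step~(i); every transition added at $t_\ell$ in step~(ii) is fixed by construction; and by part~(b), no transition added in step~(iii) that survives the iteration has time $t_\ell$. Hence every remaining transition at time $t_\ell$ at the end of iteration~$\ell$ is fixed. The only genuinely subtle point is the cancellation analysis of part~(b), where one must rephrase ``survives the iteration'' as the strict negation of Lemma~\ref{lem:minimal:delay}'s criterion and then use strict monotonicity to lift the weak inequality in the cancellation criterion to a strict bound on $\delta_n$.
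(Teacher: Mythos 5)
Your proof is correct and follows essentially the same route as the paper's: part (a) by inspection of steps (i)--(iii), part (b) via the cancellation criterion of Lemma~\ref{lem:minimal:delay} combined with $\delta(-\delta_{\min})=\delta_{\min}$ and strict monotonicity from Lemma~\ref{lem:delta:min}, and part (c) by noting that surviving step-(iii) additions lie strictly after $t_\ell+\delta_{\min}^C$. The only difference is presentational (you argue contrapositively where the paper argues by contradiction), so nothing further is needed.
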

\ifthenelse{\boolean{SHORTversion}}{}{
\begin{proof}
Statement~(a) is implied by the fact that transitions are only marked fixed in
     step~(i) and~(ii), which act on transitions at time~$t_\ell$ only.

For~(b), assume by contradiction that a
     transition $(t,X)$ with $t \leq
t_\ell + \delta_{\min}^C$ but different from~$t_\ell$ was added in
     iteration~$\ell$ and still exists at the end of iteration~$\ell$.
Such a transition can only be added via step~(iii).
For the respective channel algorithm with delay function~$\delta$,
     $\delta(t_\ell-t') \leq \delta_{\min}^C$ must have held, where~$t'$ is
     the time of the channel's last output transition.
From Lemma~\ref{lem:delta:min}, we deduce that this implies 
 $t_\ell \leq t'-\delta_{\min}$ for the particular channel's minimal delay $\delta_{\min}$, 
since $\delta_{\min}^C\leq\delta_{\min}$. By Lemma~\ref{lem:minimal:delay},
this leads to a cancellation and hence removal of $(t,X)$, which provides the required 
contradiction.

For~(c), assume by contradiction that, at the end of
     iteration~$\ell$, there exists a non-fixed transition $(t_\ell,X)$.
Since step~(i) marks all transitions at time~$t_\ell$ fixed and~(ii) adds
     only fixed transitions at time~$t_\ell$, the non-fixed transition must have
     been newly added in step~(iii).
However, from~(b), we know that this requires $t > t_\ell + \delta_{\min}^C >
t_\ell$, a contradiction.
\end{proof}
}

From an inductive application of Lemma~\ref{lem:t_fixed}, we
obtain that the sequence of iteration start times $(t_\ell)_{\ell \ge 1}$ 
is strictly increasing without bound:

\begin{lemma}\label{lem:t_inc}
For all iterations $\ell > 1$, $t_{\ell}-t_{\ell-1}>0$. If $t_{\ell}$ does not
involve an input transition, then $t_{\ell}-t_{\ell-1} > \delta_{\min}^C$.
\end{lemma}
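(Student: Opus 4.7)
For Part~1, the strict increase $t_\ell > t_{\ell-1}$ is a direct consequence of Lemma~\ref{lem:t_fixed}. Part~(c) of that lemma guarantees that every transition at time $t_{\ell-1}$ is fixed by the end of iteration $\ell-1$, so the smallest non-fixed transition at the start of iteration $\ell$ cannot lie at $t_{\ell-1}$. A routine induction on $\ell$---using Part~(b) of Lemma~\ref{lem:t_fixed} to handle transitions newly added in iteration $\ell-1$, and Part~1 of the present lemma for smaller $\ell$ to handle pre-existing transitions---rules out non-fixed transitions at times $\leq t_{\ell-1}$, so $t_\ell > t_{\ell-1}$.

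For Part~2, assume that no input-port transition is fixed in iteration $\ell$. Then each transition fixed by step~(i) at $t_\ell$ lies in some channel-output signal and was produced by step~(iii) of an earlier iteration $\ell_0 \leq \ell-1$. The key claim is that $\ell_0 = \ell-1$; granting this, Lemma~\ref{lem:t_fixed}(b) applied to iteration $\ell-1$ immediately yields $t_\ell > t_{\ell-1} + \delta_{\min}^C$.

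To establish $\ell_0 = \ell-1$, I would proceed by a simultaneous induction on $\ell$ for both parts of this lemma, maintaining the invariant that, at the start of each iteration, every non-fixed transition is either an input-port transition (at a time greater than the previous iteration time) or a pending channel-output transition produced by step~(iii) of the immediately preceding iteration. Applied to iteration $\ell$, the invariant forces a non-input transition fixed at $t_\ell$ to have been added in iteration $\ell-1$, at which point Lemma~\ref{lem:t_fixed}(b) concludes.

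The main obstacle lies in justifying the invariant through the inductive step. Carefully accounting for cancellations (via Lemma~\ref{lem:minimal:delay}), as well as for freshly generated pending transitions (whose times are bounded below by $t_{\ell-1} + \delta_{\min}^C$ according to Lemma~\ref{lem:t_fixed}(b)), will be necessary to show that no channel-output transition can survive past the iteration immediately following its production, which is exactly what is needed to close the induction.
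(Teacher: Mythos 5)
Your treatment of the first claim is correct and is essentially the paper's argument: by Lemma~\ref{lem:t_fixed}, transitions surviving iteration $\ell-1$ that were newly added lie at $t_{\ell-1}$ (and are then fixed) or beyond $t_{\ell-1}+\delta_{\min}^C$, while pre-existing non-fixed transitions have time $\geq t_{\ell-1}$ by minimality of $t_{\ell-1}$ and those at exactly $t_{\ell-1}$ are fixed by part~(c); so $t_\ell>t_{\ell-1}$ without any induction. (The paper additionally disposes of the input-transition case by the strictly increasing transition times of input signals, condition (S2).)

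The second part of your proposal has a genuine gap. The invariant you want to maintain --- that every non-fixed transition at the start of an iteration is either an input transition or a pending channel-output transition produced by step~(iii) of the \emph{immediately preceding} iteration --- is false, and so is the claim that ``no channel-output transition can survive past the iteration immediately following its production.'' A pending output transition is removed only through cancellation, i.e., only if the \emph{same} channel later receives another input transition sufficiently close to it (Lemma~\ref{lem:minimal:delay}); iterations caused by activity elsewhere in the circuit never touch it. Concretely, let the input port feed two channels, one with a large idle delay $\delta^\uparrow_\infty$ and one heading a chain of buffer gates and channels with small delays: the slow channel's pending transition, generated in iteration~$1$, remains non-fixed throughout the many iterations in which the fast chain's transitions are fixed, so your $\ell_0$ can be arbitrarily much smaller than $\ell-1$ and the simultaneous induction cannot be closed. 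What Lemma~\ref{lem:t_fixed}(b) actually yields is $t_\ell>t_{\ell_0}+\delta_{\min}^C$ relative to the iteration $\ell_0$ that \emph{generated} the transition fixed at $t_\ell$, which is weaker than the claimed bound relative to $t_{\ell-1}$ whenever $\ell_0<\ell-1$. Note that the paper's own one-line proof does not take your route: it invokes Lemma~\ref{lem:t_fixed}(b) directly for the non-input case and non-Zeno inputs for the rest, and never asserts the survival property you rely on; you have correctly identified a scenario that the paper's terse argument leaves untreated, but ruling it out by declaring that old pending transitions cannot exist is untenable.
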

\ifthenelse{\boolean{SHORTversion}}{}{
\begin{proof}
By Lemma~\ref{lem:t_fixed}~(b), $t_{\ell+1}$ is larger than
$t_{\ell}+\delta_{\min}^C$, provided no input transition occurs earlier.
As we do not allow Zeno behavior of input signals, $t_{\ell}-t_{\ell-1}>0$
is guaranteed also in the latter case.
\end{proof}
}

The following lemma proves that the generated event lists are well-defined,
in the sense that no later iteration can remove events that may have generated
causally dependent other events already.

\begin{lemma}\label{lem:fixed_is_fixed}
No fixed transition is canceled in any iteration.
\end{lemma}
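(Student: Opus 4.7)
The plan is to argue by contradiction. Suppose that some fixed transition $(t^*, X) \in s_{(v,w)}$ is canceled, and let $\ell$ be the earliest iteration in which any such cancellation happens. Since cancellations arise only in step~(iii), $(t^*, X)$ must be the $n$-th pending output of the channel $(v,w)$, so $t^* = t_n + \delta_n$, and it is canceled when a later input $t_m = t_\ell$ is processed, the cancellation condition yielding $t_m + \delta_m \leq t^*$. By Lemma~\ref{lem:t_fixed}(a), $(t^*, X)$ was marked fixed in some iteration $\ell_f$ with $t_{\ell_f} = t^*$, and because fixing cannot follow cancellation in the same signal, $\ell_f \leq \ell$. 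Lemma~\ref{lem:t_inc} then gives $t_m = t_\ell \geq t_{\ell_f} = t^*$, and substituting this into the cancellation condition forces $\delta_m \leq 0$.

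Next I combine this algebraic constraint with Lemma~\ref{lem:delta:min}: since $\delta(-\delta_{\min}) = \delta_{\min} > 0$ and $\delta$ is strictly increasing, $\delta_m \leq 0$ forces $T_m = t_m - t_{m-1} - \delta_{m-1} < -\delta_{\min}$, i.e.\ $t_m < t_{m-1} + \delta_{m-1} - \delta_{\min}$. By Lemma~\ref{lem:minimal:delay}, this is exactly the condition under which the $(m-1)$-th and $m$-th pending outputs cancel. In the base case $n = m-1$, Lemma~\ref{lem:minimal:delay} applied to the pair $(n,m)$ directly gives $t_m \leq t^* - \delta_{\min}$; together with $t_m \geq t^*$ this forces $\delta_{\min} \leq 0$, contradicting $\delta_{\min} > 0$.

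For the general case $m > n+1$, the $(m-1)$-th pending output cannot still be alive when $t_m$ is processed, for otherwise the algorithm would cancel the adjacent pair $(m-1,m)$ rather than $(n,m)$; hence the $(m-1)$-th was removed in some earlier iteration $\ell_{m-1} < \ell$ through a cancellation with some $k$-th output, $n \leq k < m-1$. The subcase $k = n$ gives an immediate contradiction: if $\ell_f \leq \ell_{m-1}$ then $(t^*, X)$ would be a fixed transition canceled at iteration $\ell_{m-1} < \ell$, contradicting the minimality of $\ell$; if $\ell_f > \ell_{m-1}$, the removal of $(t^*, X)$ at $\ell_{m-1}$ precludes it from ever being fixed. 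The subcase $n < k < m-1$ recursively pushes the analysis onto the shorter pair $(k, m-1)$, and induction on the gap $m - n$ reduces everything to the base case. The main obstacle is precisely this bookkeeping in the non-consecutive case: one must track how intermediate pending outputs were removed in earlier iterations and repeatedly invoke the minimality of $\ell$ to ensure that no such removal itself involved a fixed transition. Once this is handled, the argument reduces cleanly to Lemma~\ref{lem:minimal:delay} in the consecutive setting, where $\delta_{\min} > 0$ from Lemma~\ref{lem:delta:min} delivers the final contradiction.
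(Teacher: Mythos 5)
Your setup, the derivation $\delta_m\leq 0$ from fixedness ($t^*\leq t_\ell=t_m$) plus the cancellation condition, and your base case $n=m-1$ are correct and coincide with the paper's own argument: there, the cancellation performed in step~(iii) is against the channel's last output transition at time $t'$, i.e.\ precisely the transition from which the new delay $\delta_m=\delta(t_\ell-t')$ is computed, so Lemma~\ref{lem:minimal:delay} gives $t_\ell\leq t'-\delta_{\min}$ directly, while fixedness gives $t'\leq t_\ell$; contradiction. In the paper there is consequently no case distinction on $m-n$ at all, whereas you (rightly, under a literal reading of the channel recursion) flag the possibility that the canceled transition is not the $(m-1)$-th pending output.

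The genuine gap is exactly the part you call the main obstacle, the subcase $n<k<m-1$. The claimed reduction ``recursively push the analysis onto the shorter pair $(k,m-1)$ and induct on the gap'' has no induction hypothesis to stand on, because the property your base case exploits is not inherited: the $k$-th pending output canceled at iteration $\ell_{m-1}$ is \emph{not} fixed. Indeed, $\delta_m\leq 0$ together with Lemma~\ref{lem:delta:min} and strict monotonicity of $\delta$ forces $t_{m-1}+\delta_{m-1}>t_m+\delta_{\min}>t_{m-1}$, and the cancellation of the pair $(k,m-1)$ requires $t_k+\delta_k\geq t_{m-1}+\delta_{m-1}$, so the $k$-th output lies strictly in the future of iteration $\ell_{m-1}$; hence neither the minimality of $\ell$ nor the key inequality ``time of canceled transition $\leq$ current iteration time'' is available for $(k,m-1)$, and the base-case machinery cannot be run on it. Closing this case needs a different ingredient: if the $(m-1)$-th output canceled its immediate predecessor ($k=m-2$), Lemma~\ref{lem:minimal:delay} gives $\delta_{m-1}\leq\delta_{\min}$, hence $t_{m-1}+\delta_{m-1}\leq t_{m-1}+\delta_{\min}<t_m+\delta_{\min}$, contradicting the bound above; for deeper nestings one additionally needs, e.g., the monotonicity $t_b+\delta_b>t_{b-2}+\delta_{b-2}$, which follows from $t_b>t_{b-1}$ and the involution identity $\delta(-\delta(T))=-T$, and which rules out cancellations against anything but the channel's last output transition. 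These facts are not consequences of your recursion; without them the general case remains open in your write-up.
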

\ifthenelse{\boolean{SHORTversion}}{}{
\begin{proof}
Assume by contradiction that some iteration~$\ell \ge 2$ is the first
     in which a fixed transition is canceled; obviously, this can only
happen in step~(iii).
Thus, there exists a transition at time~$t_\ell$ that generated a new transition
     at some time~$t$ that results in the cancellation of a fixed transition at
     time~$t'$, i.e., $t\leq t'$. 
Lemma~\ref{lem:minimal:delay} implies that $t_{\ell}-t' \leq -\delta_{\min} <0$
     in this case.
By Lemma~\ref{lem:t_fixed}.(a)--(c), however, $t \leq t' \leq t_\ell$ and thus $t_{\ell}-t' \geq 0$,
which provides the required contradiction.
\end{proof}
}

We are now ready for the main result of this section, which asserts the
existence of a unique execution of our circuit~$C$:

\begin{theorem}\label{thm:execution}
The execution construction algorithm either terminates or, for all times
$T\geq0$, there exists an iteration~$\ell$ such that $t_\ell \geq T$.
At the end of iteration~$\ell \ge 1$, the collection of signals~$s_\sigma$, restricted to time~$[-\infty,t_\ell]$,
is the unique execution of circuit~$C$ restricted to time~$[-\infty,t_\ell]$.
If the algorithm terminates at the beginning of iteration~$\ell$, 
then this collection of signals is the unique execution of circuit~$C$.
\end{theorem}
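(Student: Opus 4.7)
The plan is to establish three facts in sequence: unboundedness of $(t_\ell)$, validity of the constructed signals as an execution restricted to $[-\infty,t_\ell]$, and uniqueness against any other execution. For the first, I would combine Lemma~\ref{lem:t_inc} with the non-Zeno property~S3 of input signals: between two consecutive input-transition times only finitely many iterations can occur because each non-input iteration advances $t_\ell$ by more than $\delta_{\min}^C>0$, and the input-transition times themselves form an unbounded strictly increasing sequence (or are finite in number). Hence either the algorithm terminates or, for every $T\geq 0$, only finitely many iterations have $t_\ell<T$, which supplies the required iteration with $t_\ell\geq T$.

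For the existence part I would induct on~$\ell$. Lemma~\ref{lem:t_fixed}(c) guarantees that after iteration~$\ell$ every transition of time $\leq t_\ell$ is fixed, and Lemma~\ref{lem:fixed_is_fixed} ensures fixed transitions persist, so the restriction of $s_\sigma$ to $[-\infty,t_\ell]$ is already final. I would then verify E1)--E4) directly on this restriction. Properties E1) and E2) are immediate from the initialization step. For E3), step~(iii) of the iteration is exactly an incremental application of the transition generation algorithm of Section~\ref{sec:channels}, and Lemma~\ref{lem:fixed_is_fixed} certifies that no cancellation is missed or spurious. For E4), step~(ii) enforces the Boolean relation at every time a predecessor transition is fixed; by Lemma~\ref{lem:t_fixed}(c) this covers every transition time $\leq t_\ell$, and between consecutive predecessor transitions all predecessor signals are constant, so the gate function is preserved throughout the restriction.

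For uniqueness, I would run a parallel induction on~$\ell$, showing that any execution $(s'_\sigma)$ of $C$ must coincide with $(s_\sigma)$ on $[-\infty,t_\ell]$. Assuming agreement on $[-\infty,t_{\ell-1}]$, first rule out any transition of $(s'_\sigma)$ strictly inside $(t_{\ell-1},t_\ell)$: an input-port transition there would have been part of the set defining $t_\ell$, a gate transition would force an earlier predecessor transition by E4), and a channel transition would correspond via E3) to a pending, non-canceled transition already present in $(s_\sigma)$, contradicting the minimality of $t_\ell$. At time $t_\ell$ itself, E4) pins down gate outputs from predecessor values, and E3) together with Lemma~\ref{lem:minimal:delay} pins down channel outputs from the already-fixed input history. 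I expect the cancellation analysis for channel outputs to be the main obstacle: because the input-to-previous-output delay feeding each $\delta_n$ depends on which earlier transitions survive, one must argue carefully via Lemma~\ref{lem:minimal:delay} that the cancellation pattern produced by the algorithm is the only one consistent with the channel function definition of Section~\ref{sec:channels}. The global uniqueness statement when the algorithm terminates then follows immediately, since at termination no non-fixed transition remains.
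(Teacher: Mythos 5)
Your proposal is correct and follows essentially the same route as the paper: unboundedness of $(t_\ell)$ via Lemma~\ref{lem:t_inc} (whose proof already encodes the non-Zeno property of inputs), stability of the constructed prefix via Lemmas~\ref{lem:t_fixed} and~\ref{lem:fixed_is_fixed}, and uniqueness by induction on a first differing transition leading to a contradiction with (E3) or (E4). The paper's own proof is simply a terser version of this, invoking determinism of the algorithm and calling the uniqueness induction ``admittedly tedious'' while omitting the details (including the cancellation analysis via Lemma~\ref{lem:minimal:delay}) that you spell out.
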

\ifthenelse{\boolean{SHORTversion}}{}{
\begin{proof}
From Lemma~\ref{lem:t_inc}, we deduce that for all times~$t\ge 0$,
     there is an iteration~$\ell \ge 1$ such that~$t_\ell > t$ or the
     algorithm terminates.
From Lemma~\ref{lem:fixed_is_fixed}, we further know that in both
     cases the algorithm does not add transitions with times less or equal
     to~$t$.
Uniqueness of the execution follows from the fact that the construction
algorithm is deterministic.
\end{proof}
}

\section{Possibility of Unbounded Short-Pulse Filtration}\label{sec:possibility}

In this section, we show that unbounded SPF is solvable in our circuit model
     with strictly causal involution channels.
We do this by verifying that the circuit shown in \figref{fig:circuit}
     indeed solves SPF.
The circuit was inspired by the physical solution of~\figref{fig:spf},
     and consists of a fed back OR-gate forming the storage loop and a 
     subsequent high-threshold filter (implemented by 
     a channel).
In order not to obfuscate the essentials (and to stick to the page limit),
     we restrict
     \footnote{However, the proof could 
        be adapted to show the possibility of unbounded SPF for many classes
        of strictly causal involution channels.}
     our attention to certain classes of involution channels.
More specifically, in our proof,
the channel in the feed-back loop must be strictly causal and symmetric,
i.e., $\delta_\uparrow=\delta_\downarrow=\delta$. When using an
exp-channel, for example, this implies a threshold~$V_{th}=0.5$.
The channel implementing the high-threshold filter is
assumed to be an exp-channel because we have to adjust its parameters
appropriately. 

\begin{figure}
\centering
\begin{tikzpicture}[>=latex',circuit logic US, scale=1.3]
\matrix[column sep=0mm,row sep=2mm,every node/.style={transform shape}]
{
              & & \node [or gate,small circuit symbols] (nor) {\,\,\,OR}; & \\
\node (k) {}; & & 
\node [draw,rectangle,rounded corners,minimum width=10mm,minimum height=4.5mm,xshift=-0mm] (chan) {$c$}; & \\
};

\draw[<-] (nor.input 1)
        -- ++(left:1.4) node[left] (i) {$i$};
\draw[->] (nor.output) 
	-- ++(right:0.7) node[circle,inner sep=0.7pt,fill=black,draw] (huhu) {}
	-- ++(down:0.1)
        |- (chan.east);
\draw[->] (chan.west)
        -- ++(left:1.0) 
	-- ++(up:0.1)
	|- (nor.input 2)
	;
\draw[->] (huhu)
        -- ++(right:1.2)
node [draw,rectangle,fill=white,minimum width=7mm,minimum
height=5mm,xshift=-0mm] (ht) {HT}
	-- ++(right:1.2) node[right] {$o$}
	;
\end{tikzpicture}
\caption{A circuit solving unbounded SPF, consisting of an OR-gate fed back by
channel~$c$, and a high-threshold filter HT.}
\label{fig:circuit}
\end{figure}
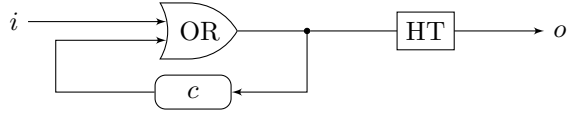

We consider a pulse of length~$\Delta$ at time~$0$ at the input and reason
about the behavior of the feed-back loop. Then, we show that this behavior 
can be translated to a legitimate SPF output by using a high-threshold filter.
We start by identifying two extremal cases: 
\ifthenelse{\boolean{SHORTversion}}{
If $\Delta$ is too small, i.e., $\Delta\leq \delta_\infty - \delta_{\min}$, then
the pulse is filtered by the channel in the feed-back loop. 
If it is too large, i.e., $\Delta\geq \delta_\infty$,
the pulse is captured by the storage loop, leading to a stable output~1.
}
{
If $\Delta$ is too small, then
the pulse is filtered by the channel in the feed-back loop. 
If it is too big,
the pulse is captured by the storage loop, leading to a stable output~1.
\begin{lemma}
\label{lem:big:pulse}
If the input pulse's length $\Delta$ satisfies $\Delta\geq\delta_\infty$, 
then the output of the OR has a unique rising transition at
time~$\delta_\infty$.
\end{lemma}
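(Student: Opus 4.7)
The plan is to trace through the execution of the storage loop iteratively using the deterministic construction algorithm from Section~\ref{sec:algorithm}, starting from the quiescent state where the external input, the OR output, and the feedback channel $c$'s output are all $0$, which is consistent with the channel's initial value $I=0$ (so no reset transition at time $0$ is introduced).

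First I would process the input's rising transition at time $0$: since the OR gate has zero delay, the algorithm fixes a rising OR-output transition at the same instant, because $0\,\text{OR}\,1=1$. This newly fixed rising transition on the OR output enters the feedback channel~$c$, which was previously idle (its last output transition is at time $-\infty$); the input-to-previous-output delay is therefore $T=\infty$, scheduling a pending rising channel-output transition at time $0+\delta(\infty)=\delta_\infty$. By Lemma~\ref{lem:minimal:delay}, this pending transition cannot be canceled as there is no earlier pending transition. Next I would process the channel's rising transition at time $\delta_\infty$; because $\delta_\infty\leq\Delta$, the external input is still $1$, so the OR now sees inputs $(1,1)$ and its output is unchanged. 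Finally, at time $\Delta$, when the input falls, the channel output is already stably $1$, so $\text{OR}(0,1)=1$ produces no new OR transition. No further input transitions follow, and the feedback keeps the channel output---and hence the OR output---at $1$ forever, yielding the unique rising transition asserted by the lemma (with the feedback loop closing at time $\delta_\infty$).

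The main subtlety I anticipate is the boundary case $\Delta=\delta_\infty$, where the channel's rising transition and the input's falling transition occur at the same time. Here I would use the deterministic handling of simultaneous events by the construction algorithm to verify that, at this instant, the OR's predecessors take values $(0,1)$, which keeps $\text{OR}=1$ and produces no new OR transition. I would also invoke Lemma~\ref{lem:fixed_is_fixed} to rule out any cancellation of the rising transitions in question. Apart from this routine bookkeeping, the argument reduces to a direct simulation of the channel and gate algorithms.
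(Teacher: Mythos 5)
Your proposal is correct and matches the paper's argument in substance: the paper simply exhibits the claimed execution (feedback channel output rising at time $\delta_\infty$, OR output rising at time $0$) and appeals to uniqueness of executions (Theorem~\ref{thm:execution}), whereas you generate the same execution forward with the deterministic construction algorithm of Section~\ref{sec:algorithm}, which is just the constructive face of that uniqueness result; your treatment of the reset rule, of non-cancellation via Lemma~\ref{lem:minimal:delay}, and of the boundary case $\Delta=\delta_\infty$ is fine. One point you should state explicitly rather than gloss over: your trace (like the paper's own proof, and like the appendix restatement of this lemma) places the OR output's unique rising transition at time~$0$, with $\delta_\infty$ being the time at which the \emph{channel} output rises; the ``at time~$\delta_\infty$'' in the statement as printed is evidently a slip, so say plainly that the OR output rises at time~$0$ instead of hedging with ``the feedback loop closing at time $\delta_\infty$.''
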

\begin{proof}
%
%
%
%
Assigning the channel output~$s_c$ a single rising transition at time~$\delta_\infty$ is
part of a consistent execution, in which the OR's output has a single rising
transition at time~$0$.
The lemma now follows from uniqueness of executions.
\end{proof}

\begin{lemma}
\label{lem:small:pulse}
If the input pulse's length $\Delta$ satisfies
$\Delta \leq \delta_\infty - \delta_{\min}$, then the OR output 
contains only the input pulse.
\end{lemma}
\begin{proof}
The input signal contains only two transitions: One at time $t_1=0$ and one at
time $t_2=\Delta\leq \delta_\infty - \delta_{\min}$.
Since $\delta_1 = \delta_\infty$ and hence 
$t_2 = t_1 + \Delta \leq t_1 + \delta_1 - \delta_{\min}$,
the two pending transitions of $c$'s output cancel by Lemma~\ref{lem:minimal:delay},
and no further transitions are generated afterwards.
\end{proof}
}

Now suppose that the input pulse length satisfies $\delta_\infty-\delta_{\min}<
\Delta_0< \delta_\infty$.
For these pulse lengths~$\Delta_0$, the OR output signal will contain a series of
pulses of lengths $\Delta_0,\Delta_1,\Delta_2,\dots$
For all but one~$\Delta_0$, this series will turn out to be either decreasing or
increasing and finite, causing the output signal to be eventually~$0$ or eventually~$1$.
To compute these pulse lengths, we define the auxiliary function
\begin{equation}
f(\Delta) = \delta\big( \Delta - \delta(-\Delta) \big) + \Delta -
\delta(-\Delta)\label{def:ffunc}
\enspace,
\end{equation}
which gives $\Delta_{n} = f(\Delta_{n-1})$ for all $n\geq2$.
To see this, note that $\Delta_{n-1}$ at the channel input
is also present at the channel output, so the rising resp.\ falling transition
is delayed by $\delta(-\Delta_{n-1})$ resp.\
$\delta( \Delta_{n-1} - \delta(-\Delta_{n-1}))$.
The first generated pulse starts from a zero channel input and thus
\ifthenelse{\boolean{SHORTversion}}{
$\Delta_1 = \Delta_0 - \delta_\infty + \delta(\Delta_0 - \delta_\infty)$.
}{
\begin{equation}
\Delta_1 = \Delta_0 - \delta_\infty + \delta(\Delta_0 - \delta_\infty).\label{eq:firstdelta}
\end{equation}
}

The procedure stops if either $f(\Delta_n)\leq0$ (pulse canceled; the output 
is constant~$0$ thereafter), or if 
\ifthenelse{\boolean{SHORTversion}}{
$f(\Delta_n)\geq\delta(0) > 0$
}{
\begin{equation}
f(\Delta_n)\geq\delta(0) > 0\label{eq:pulsecaptured}
\end{equation}
}
(pulse captured; the output is constant~$1$ thereafter).
%
%
%
%

The only case in which the procedure does not stop is if
$f(\Delta_1) = \Delta_1$.
There is a unique $\Delta_1>0$ with this property, denoted~$\tilde{\Delta}_1$.
By (\ref{def:ffunc}), it is also characterized by the relation $\delta(-\tilde{\Delta}_1) =
2\tilde{\Delta}_1$. Since $\delta(-\delta(0))=0$ by the involution property,
we must have $\tilde{\Delta}_1 < \delta(0)$.
Since $\Delta_1\to \delta(0)$ as $\Delta_0\to\delta_\infty$ and $\Delta_1\to
0$ as $\Delta_0\to\delta_\infty - \delta_{\min}$,
there exists a unique~$\Delta_0$ such that $\Delta_1 = \tilde{\Delta}_1$.
Denote it by~$\tilde{\Delta}_0$.

\ifthenelse{\boolean{SHORTversion}}{
This gives the following characterization of the feed-back loop's behavior:

}{
The following lemma shows that the procedure indeed stops if and only if
$\Delta_1\neq\tilde{\Delta}_1$, and can be used to bound the number of steps
until it stops.

\begin{lemma}\label{lem:exponential:decay}
$\lvert f(\Delta_1) - \tilde{\Delta}_1 \rvert \geq (1+\delta'(0)) \cdot \lvert
\Delta_1 - \tilde{\Delta}_1 \rvert$
if $\Delta_1>0$.
\end{lemma}
\begin{proof}
We have
\begin{equation}
\begin{split}
f'(\Delta_1)  = &
\big( 1 + \delta'(-\Delta_1) \big)\cdot \delta'\big(\Delta_1 -
\delta(-\Delta_1)\big)
\\ & + 1 +
\delta'(-\Delta_1)
\geq 1 + \delta'(0)
\end{split}
\end{equation}
because $\delta'(-\Delta_1)\geq \delta'(0)$  and $\delta'(T)>0$
for all~$T$ as $\delta$ is concave and increasing.
The mean value theorem of calculus now implies the lemma.
\end{proof}
}

\begin{theorem}\label{thm:or:loop}
The fed-back OR gate with a strictly causal symmetric involution 
channel has the following output when the
input pulse has length~$\Delta_0$:
\begin{itemize}
\item If $\Delta_0 > \tilde{\Delta}_0$, then the output is eventually
constant~$1$.
\item If $\Delta_0 < \tilde{\Delta}_0$, then the output is eventually
constant~$0$.
\item If $\Delta_0 = \tilde{\Delta}_0$, then the output is a periodic pulse
train with duty cycle 50\%.
\end{itemize}
Furthermore, the stabilization time in the first two cases is in the order of
$\log (1/\lvert \Delta_0 - \tilde{\Delta}_0 \rvert)$.
\end{theorem}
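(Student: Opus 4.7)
The plan is to reduce the argument to analyzing the one-dimensional discrete dynamical system $\Delta_{n+1}=f(\Delta_n)$ from~(\ref{def:ffunc}) with the unique positive fixed point $\tilde{\Delta}_1$, and classify the possible trajectories using monotonicity plus the exponential-repulsion bound of Lemma~\ref{lem:exponential:decay}.

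First I would establish that $f$ is strictly increasing and expanding on its domain. Differentiating~(\ref{def:ffunc}) gives
$f'(\Delta)=\bigl(1+\delta'(-\Delta)\bigr)\bigl(1+\delta'(\Delta-\delta(-\Delta))\bigr)>1$,
since $\delta'>0$ everywhere by strict monotonicity of $\delta$. The auxiliary map $\Delta_0\mapsto\Delta_1=\Delta_0-\delta_\infty+\delta(\Delta_0-\delta_\infty)$ is likewise strictly increasing with bounded derivative, so the sign of $\Delta_0-\tilde{\Delta}_0$ matches that of $\Delta_1-\tilde{\Delta}_1$, and the two magnitudes differ only by a bounded factor.

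Next I would classify the three cases. Since $\tilde{\Delta}_1$ is the unique fixed point of $f$ in $(0,\delta(0))$ and $f'>1$, a continuity argument gives $f(\Delta)>\Delta$ on $(\tilde{\Delta}_1,\delta(0))$ and $f(\Delta)<\Delta$ on $(0,\tilde{\Delta}_1)$. Combined with Lemma~\ref{lem:exponential:decay}, the iterates $\Delta_n$ diverge from $\tilde{\Delta}_1$ geometrically with rate at least $1+\delta'(0)$, and thus escape $(0,\delta(0))$ after $N=O(\log(1/|\Delta_0-\tilde{\Delta}_0|))$ iterations. If $\Delta_0>\tilde{\Delta}_0$ the iterates exit on the upper side, $f(\Delta_n)\geq\delta(0)$: the pending falling transition of pulse $n$ coincides with or follows the pending rising transition of pulse $n+1$ at the channel output, so by Lemma~\ref{lem:minimal:delay} they cancel and the feedback channel (hence the OR) remains constantly~$1$ thereafter. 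Symmetrically, if $\Delta_0<\tilde{\Delta}_0$ the iterates exit on the lower side, $f(\Delta_n)\leq 0$, the two pending transitions of the would-be next pulse cancel, and the OR output is constantly~$0$. Since every pulse cycle has duration at most a constant multiple of $\delta_\infty$, the total stabilization time is $O(N)=O(\log(1/|\Delta_0-\tilde{\Delta}_0|))$. Finally, for $\Delta_0=\tilde{\Delta}_0$ we have $\Delta_n=\tilde{\Delta}_1$ for every $n\geq 1$, giving a periodic pulse train with constant pulse length $\tilde{\Delta}_1$; letting $G$ denote the inter-pulse gap and $d$ the common rising/falling channel delay (equal by symmetry of $\delta$), the steady-state equations $d=\delta(G-d)=\delta(\tilde{\Delta}_1-d)$ together with strict monotonicity of $\delta$ force $G=\tilde{\Delta}_1$, so the duty cycle is exactly $1/2$.

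The main obstacle will be the faithful translation of the formal exit conditions $f(\Delta_n)\leq 0$ and $f(\Delta_n)\geq\delta(0)$ into the channel-level cancellation criterion of Lemma~\ref{lem:minimal:delay}. In particular, proving that $f(\Delta_n)\geq\delta(0)$ genuinely causes the pending falling transition of pulse $n$ and the pending rising transition of pulse $n+1$ to collide (so that the captured state is a permanent~$1$) requires rewriting the inequality through the involution identity~(\ref{eq:involution}) to express the two candidate output times relative to the threshold~$\delta_{\min}$ provided by Lemma~\ref{lem:delta:min}. The remainder of the argument is standard one-dimensional dynamics together with the exponential-repulsion estimate.
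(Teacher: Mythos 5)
Your proposal is correct and follows essentially the same route as the paper's proof: iterate $\Delta_{n+1}=f(\Delta_n)$, use Lemma~\ref{lem:exponential:decay} together with a mean-value/monotonicity argument relating $\lvert\Delta_1-\tilde{\Delta}_1\rvert$ to $\lvert\Delta_0-\tilde{\Delta}_0\rvert$ to obtain geometric repulsion from the fixed point and hence $O(\log(1/\lvert\Delta_0-\tilde{\Delta}_0\rvert))$ pulses of bounded duration. You in fact spell out two points the paper's proof leaves implicit (the translation of $f(\Delta_n)\geq\delta(0)$ resp.\ $f(\Delta_n)\leq 0$ into channel-level cancellation, and the $50\%$ duty-cycle computation via $\delta(-\tilde{\Delta}_1)=2\tilde{\Delta}_1$), while the only pieces you would still need to add are the boundary cases $\Delta_0\geq\delta_\infty$ and $\Delta_0\leq\delta_\infty-\delta_{\min}$, which the paper dispatches with Lemmas~\ref{lem:big:pulse} and~\ref{lem:small:pulse}.
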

\ifthenelse{\boolean{SHORTversion}}{}{
\begin{proof}
If $\Delta_0\geq \delta_\infty$ or $\Delta_0\leq \delta_\infty -
\delta_{\min}$, then Lemmas~\ref{lem:big:pulse} and~\ref{lem:small:pulse}
show the theorem.

So let $\Delta_0\in (\delta_\infty - \delta_{\min} , \delta_\infty)$.
By Lemma~\ref{lem:exponential:decay}, the number of generated pulses until the
procedure stops is in the order of $\log 1 / \lvert \Delta_1 -
\tilde{\Delta}_1\rvert$.
Setting $g(\Delta_0) = \Delta_0 - \delta_\infty + \delta(\Delta_0 -
\delta_\infty)$ such that $\Delta_1=g(\Delta_0)$, cp.\ (\ref{eq:firstdelta}), 
and applying the mean value theorem of calculus to this
function, we see analogously as in the proof of
Lemma~\ref{lem:exponential:decay} that
\[
\lvert \Delta_1 - \tilde{\Delta}_1 \rvert
\geq 
\big( 1 + \delta'(0) \big) \cdot
\lvert \Delta_0 - \tilde{\Delta}_0 \rvert
\enspace.
\]
Hence the number of generated pulses is in the order of $\log 1 / \lvert
\Delta_0 - \tilde{\Delta}_0\rvert$.
Since both the length $\Delta_n$ of the occurring pulses 
and, by symmetry, the time between them is at most $\delta(0)$
since it would be captured otherwise, cp.\ (\ref{eq:pulsecaptured}),
we have the same asymptotic bound on the stabilization
time.
\end{proof}
}

\ifthenelse{\boolean{SHORTversion}}{

Finally, one can show that a high-threshold filter
with arbitrary threshold can be modeled
by an exp-channel with properly chosen~$V_{th}$:

}{
We now turn to the analysis of the high-threshold filter.

\begin{lemma}\label{lem:duty:cycle:delta}
Let~$c$ be an exp-channel~$c$ with threshold~$V_{th}$.
Then there exists some~$\Delta>0$ such that every periodic pulse train with pulse
lengths at most~$\Delta$ and duty cycle (ratio of 1-to-0) at most~$V_{th}$ is mapped to the zero
signal by~$c$.
\end{lemma}
\begin{proof}
Let $t_1,t_2,\dots$ be the times of transitions in the input pulse train with
duty cycle $\gamma\leq V_{th}$, i.e.,
$t_1=0$, $t_{2n+2} = t_{2n+1}+\Delta$, and $t_{2n+1} = t_{2n} + \Delta/\gamma$.
We assume that $\Delta$ is smaller than both $-\tau\log(1-V_{th})$ and a
to-be-determined $\Delta_0$, and show inductively that all pulses get canceled:

If $\Delta\leq -\tau\log(1-V_{th})$, then 
$\delta_1=\delta_\uparrow(\infty)=T_p-\tau\log(1-V_{th})>T_p$ and 
$\delta_2=\delta_\downarrow(\Delta-\delta_1) \leq T_p$, so the first pulse is canceled.
For the induction step, we assume $\delta_{2n}\leq T_p$ and find
\[
\begin{split}
\delta_{2n+1} & = \delta_\uparrow(\Delta/\gamma - \delta_{2n})
\geq \delta_\uparrow(\Delta/V_{th} - T_p)
\\ &
= T_p - \tau\log(1-V_{th}) + \tau\log(1-V_{th} e^{-\Delta/V_{th}\tau})
\end{split}
\]
Hence, $t_{2n+1}$ and $t_{2n+2}=t_{2n+1}+\Delta$ cancel if 
\[
\begin{split}
\Delta & \leq
\delta_{2n+1}-T_p
\\ &
=
-\tau\log(1-V_{th}) + \tau\log(1-V_{th} e^{-\Delta/V_{th}\tau})
\enspace,
\end{split}
\]
which is equivalent to 
\[
\begin{split}
h(\Delta)
=
V_{th} e^{-\Delta/V_{th}\tau}
+
(1-V_{th})e^{\Delta/\tau} \leq 1
\enspace.
\end{split}
\]
The latter satisfies $h(0)=1$ and 
\[
\begin{split}
h'(\Delta)
=
-\frac{1}{\tau} e^{-\Delta/V_{th}\tau}
+
\frac{1-V_{th}}{\tau}e^{\Delta/\tau}
\enspace,
\end{split}
\]
in particular $h'(0)=-V_{th}/\tau<0$.
There hence exists some~$\Delta_0>0$ such that $h(\Delta)\leq 1$ for all
$0\leq\Delta\leq\Delta_0$.
Thus, $t_{2n+1}$ and $t_{2n+2}$ cancel, and
\[
\begin{split}
\delta_{2n+2} & =
\delta(\Delta - \delta_{2n+1})
\leq
\delta(-T_p)=T_p
\end{split}
\]
because $h(\Delta)\leq1$, which completes the induction step.
\end{proof}

By letting~$\tau$ grow, one can even achieve the following result.

}

\begin{lemma}\label{lem:ht:exists}
Let~$\hat{\Delta}>0$ and $0<\gamma<1$.
Then there exists an exp-channel with threshold $V_{th}=\gamma$ such
that every periodic pulse train with pulse
lengths at most~$\hat{\Delta}$ and duty cycle at most~$\gamma$ is mapped to the zero
signal by~$c$.
\end{lemma}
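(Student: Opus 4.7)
The plan is to reuse Lemma~\ref{lem:duty:cycle:delta} and exploit the exp-channel's time constant $\tau$ as a free parameter that can be scaled up until the admissible pulse length exceeds the prescribed $\hat{\Delta}$. Fix $V_{th}=\gamma$ and let $T_p>0$ be arbitrary; I would treat $\tau$ as the only remaining tunable parameter. Reading off the proof of Lemma~\ref{lem:duty:cycle:delta}, cancellation of every pulse in the train is guaranteed whenever the pulse length $\Delta$ satisfies both $\Delta \leq -\tau\log(1-\gamma)$ (needed to cancel the very first pulse) and $h(\Delta) = \gamma e^{-\Delta/\gamma\tau} + (1-\gamma)e^{\Delta/\tau} \leq 1$ (needed for the inductive step).

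The key observation I would use is that after substituting $u=\Delta/\tau$, both constraints become independent of $\tau$: the first reads $u \leq -\log(1-\gamma)$, while the second becomes $\tilde{h}(u)\leq 1$ for $\tilde{h}(u) = \gamma e^{-u/\gamma} + (1-\gamma)e^{u}$. A direct computation gives $\tilde{h}(0)=1$ and $\tilde{h}'(0) = -1 + (1-\gamma) = -\gamma < 0$, so by continuity there is some $u_0>0$, depending only on $\gamma$, such that both $\tilde{h}(u)\leq 1$ and $u\leq -\log(1-\gamma)$ hold for every $u\in[0,u_0]$.

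Having fixed such a $u_0$, I would then choose any $\tau \geq \hat{\Delta}/u_0$. For this exp-channel every $\Delta\in(0,\hat{\Delta}]$ satisfies $\Delta/\tau \leq u_0$, hence both hypotheses of Lemma~\ref{lem:duty:cycle:delta} are met with admissible bound at least $\hat{\Delta}$, and the conclusion that every periodic pulse train of pulse length at most $\hat{\Delta}$ and duty cycle at most $\gamma$ is mapped to the zero signal follows directly. I do not expect any substantive obstacle here, since the argument is a pure rescaling in $\tau$ of the previously established lemma; the only point requiring care is verifying that both constraints indeed depend on $\Delta$ and $\tau$ solely through their quotient, which is immediate from the explicit form of $h$ and of $\delta_\uparrow(\infty) = T_p - \tau\log(1-V_{th})$.
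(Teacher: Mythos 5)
Your proposal is correct and follows essentially the same route as the paper: both arguments re-enter the proof of Lemma~\ref{lem:duty:cycle:delta}, observe that the two cancellation constraints ($\Delta\leq-\tau\log(1-V_{th})$ and $h(\Delta)\leq1$) define an admissible range of pulse lengths that scales linearly in~$\tau$, and then choose $\tau$ large enough that this range covers~$\hat{\Delta}$. The only difference is presentational: the paper exhibits the explicit root $\Delta_\tau=-\tau\log(1-V_{th})/(1+1/V_{th})$ of $h'$ and lets it tend to infinity, whereas you make the same linear scaling visible via the substitution $u=\Delta/\tau$ and a $\tau$-independent threshold $u_0$.
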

\ifthenelse{\boolean{SHORTversion}}{
}{
\begin{proof}
We use the notation of the proof of Lemma~\ref{lem:duty:cycle:delta}.
The unique root of $h'(\Delta)$ is equal to
\[
\Delta_\tau=-\frac{\tau\log(1-V_{th})}{1+1/V_{th}}
\enspace,
\]
which goes to infinity as $\tau\to\infty$.
We can choose $\Delta_0=\Delta_\tau$ because $h'(\Delta)\leq0$ for all
$0\leq\Delta\leq\Delta_\tau$.
Because also $-\tau\log(1-V_{th})$ goes to infinity as $\tau\to\infty$, we can
find, for any given $\Delta$, some $\tau>0$ such that both $\Delta\leq
-\tau\log(1-V_{th})$ and $\Delta\leq \Delta_\tau$.
But for these $\Delta$, all input pulse trains with pulse lengths~$\Delta$ and
duty cycle at most $V_{th}=\gamma$ get mapped to the zero signal.
\end{proof}
}

In particular, by choosing $\gamma = 0.6$ and~$\hat{\Delta}$ large enough such that the
output of the feed-back loop is already constant~$1$ at time $T+\hat{\Delta}$ if the
duty cycle in the loop passes $0.6$ at time~$T$, the critical pulse duration 
$\tilde{\Delta}_0$ is mapped to a zero-output. It hence follows:

\begin{theorem}
There is a circuit that solves unbounded SPF.
\end{theorem}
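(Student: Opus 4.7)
I would verify that the circuit of Fig.~\ref{fig:circuit}, with a symmetric strictly causal involution channel $c$ in the feedback loop and a suitably tuned exp-channel HT as the filter, satisfies F1--F4. As parameters, I fix $\gamma=0.6$ and invoke Lemma~\ref{lem:ht:exists} to obtain the exp-channel HT with threshold $V_{th}=\gamma$ and a pulse-length bound $\hat{\Delta}$ that the lemma leaves as a free parameter (controllable via the RC time-constant $\tau$). I choose $\hat{\Delta}$ large enough that (i)~$\hat{\Delta}\geq\delta(0)$, which by inspection of the iteration $f$ bounds every pulse length that can ever appear on the OR output (longer pulses would meet the capture condition $f(\Delta_n)\geq\delta(0)$), and (ii)~once the duty cycle on the OR output first exceeds $0.6$, the loop is already stable at $1$ within $\hat{\Delta}$ additional time, via the stabilization bound of Theorem~\ref{thm:or:loop}.

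F1 (well-formedness) and F2 (no generation) are immediate from the topology: one input, one output, and a zero input signal leaves both the OR and HT outputs identically zero. For F3 (nontriviality), I would exhibit an input pulse with $\Delta_0\geq\delta_\infty$; by Lemma~\ref{lem:big:pulse} the OR output is then a single rising transition at time $\delta_\infty$, which HT propagates as a single rising transition, so the HT output is nonzero.

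For F4 I would use the trichotomy of Theorem~\ref{thm:or:loop}. In the cases $\Delta_0\leq\tilde{\Delta}_0$, the OR output is a (possibly finite) pulse train with duty cycle at most $50\%<\gamma$ and all pulse lengths at most $\delta(0)\leq\hat{\Delta}$; Lemma~\ref{lem:ht:exists}, read with the mild extension that the induction in the proof of Lemma~\ref{lem:duty:cycle:delta} depends only on a lower bound on each inter-pulse gap (not on exact periodicity), then filters this train to the zero signal. In the case $\Delta_0>\tilde{\Delta}_0$, the OR-output transient has duty cycle increasing from near $50\%$ toward capture; as long as the duty cycle stays below $0.6$ the same extended Lemma~\ref{lem:ht:exists} cancels every transient pulse at HT, and by condition~(ii) above the loop is already stable at $1$ within $\hat{\Delta}$ after the duty cycle first crosses $0.6$, so HT emits at most one further rising transition. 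In every case, the HT output is either identically zero or has a single rising transition followed by constant~$1$, and so contains no pulse of any positive length; thus F4 holds with any $\varepsilon>0$.

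The principal difficulty lies in the case $\Delta_0>\tilde{\Delta}_0$ with $\Delta_0$ very close to $\tilde{\Delta}_0$, where Theorem~\ref{thm:or:loop} only provides a stabilization time on the order of $\log(1/|\Delta_0-\tilde{\Delta}_0|)$, which is unbounded. One must show that the duty cycle nevertheless stays strictly below $0.6$ throughout this possibly very long transient and only crosses $0.6$ within $\hat{\Delta}$ of capture. This requires careful tracking of both the pulse lengths $\Delta_n=f(\Delta_{n-1})$ and the inter-pulse gaps induced by $c$, showing that the duty cycle increases monotonically toward the captured state and that the final ``ramp'' from $0.6$ to capture occupies a uniformly bounded amount of time that can be absorbed into the choice of $\hat{\Delta}$. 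Once this monotonicity and uniform post-crossing bound are in place, the verification of F1--F4 reduces to careful bookkeeping.
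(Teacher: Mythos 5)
Your proposal follows essentially the same route as the paper: the same storage-loop-plus-high-threshold-filter circuit of Fig.~\ref{fig:circuit}, the trichotomy of Theorem~\ref{thm:or:loop}, and Lemma~\ref{lem:ht:exists} with $\gamma=0.6$ and a sufficiently large $\hat{\Delta}$, assembled in the same way to verify F1--F4. The residual quantitative steps you flag (reading Lemma~\ref{lem:duty:cycle:delta} beyond exactly periodic pulse trains, and a uniform bound on the time between the loop's duty cycle crossing $0.6$ and capture) are precisely the points the paper itself leaves implicit in its closing paragraph, so your account matches the paper's argument.
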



%
%

\section{Impossibility of Bounded Short-Pulse
Filtration}\label{sec:impossibility}

\ifthenelse{\boolean{SHORTversion}}{

We first show that that strictly causal involution channels are
continuous in a certain sense that we will define precisely below.
We start with a suitable distance of signals.

\begin{definition}
For a signal~$s$ and a time~$T$, denote by $\mu_T(s)$ the total duration in
$[0,T]$ where~$s$ is~$1$.
That is, $\mu_T(s)$ is the measure of the set $\{t\in[0,T]\mid
s(t)=1\}$.

For any two signals~$s_1$ and~$s_2$ and every~$T$, we define their {\em
distance up to time~$T$\/} by setting $\lVert s_1-s_2\rVert_T = \mu_T(\lvert
s_1-s_2\rvert)$.
\end{definition}

Intuitively, an involution channel is continuous under this measure
for two reasons: (i) Due to the continuity of~$\delta$, a small change in
the time at which an input transition occurs, results in a small change
in the time at which the corresponding output transition occurs.
This, again, only results in a small change of the input-to-previous-output
time for the next input transition, and so on. The technical difficulty
is to show that this effect does not result in discontinuities even for an
unbounded number of
input transitions. (ii) Due to the involution property of~$\delta$,
one can show that~$\delta$ is not only continuous in changing the length of input
pulses, but also in removing them: An input pulse whose length
is arbitrarily small results in a value of~$\delta$ for the next input transition
that is arbitrarily close to the transition's $\delta$ value in the case the short pulse
was not present at all. Again, the major difficulty lies in showing that this also
holds for infinite pulse trains. 
Note carefully that it is primarily the continuity property~(ii) that
distinguishes our involution channels from the ``unfaithful'' single-history channels 
analyzed in \cite{FNS13}, which allow bounded SPF to be solved. 

The detailed proof in Appendix~\ref{sec:contproof} establishes:

\begin{theorem}
\label{thm:channels:are:continuous}
Let~$c$ be a channel and let~$T\in [0,\infty)$.
Then, the mapping $s\mapsto c(s)$ is continuous with respect to the distance
$\lVert s_1 - s_2 \rVert_T$.
\end{theorem}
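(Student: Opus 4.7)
The plan is to fix a signal $s$, a horizon $T\ge 0$, and $\varepsilon>0$, and to exhibit $\delta>0$ such that every signal $s'$ with $\lVert s-s'\rVert_T<\delta$ satisfies $\lVert c(s)-c(s')\rVert_T<\varepsilon$. The first step is a \emph{finite-horizon reduction}: Lemma~\ref{lem:minimal:delay} shows that non-canceled output transitions are spaced by at least~$\delta_{\min}$, so $c(s)$ has only finitely many transitions in $[0,T]$; and by tracking how far in the future an input transition can still cancel a pending output whose time lies in $[0,T]$, one obtains a finite $T^{*}\ge T$ such that $c(s)|_{[0,T]}$ depends only on $s|_{[0,T^{*}]}$. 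Since $\lVert s-s'\rVert_T<\delta$ means $s$ and $s'$ agree off a set of small measure, the transitions of $s'$ in $[0,T^{*}]$ can then be paired with those of $s$ up to small time shifts, together with a collection of \emph{extra} short pulses inserted in (or removed from) one signal relative to the other, of total length at most~$\delta$.

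The central step is an \emph{inductive propagation of errors} through the output-generation algorithm. Walking through the transitions of $s'$ in order, I would track the deviation of the value $\delta_n$ computed for $s'$ from that computed for $s$ at the corresponding matched input transition. For shifted-but-matched transitions, uniform continuity of $\delta_\uparrow$ and $\delta_\downarrow$ on the compact range of arguments realized in $[0,T^{*}]$ yields a small change in $\delta_n$. For extra short pulses, the crucial ingredient is the involution property~\eqref{eq:involution}: a zero-length inserted pulse would by construction leave the $\delta_n$ value of the first transition following the pulse unchanged, and by uniform continuity a short but nonzero pulse perturbs it only slightly. This is exactly the property that distinguishes involution channels from the bounded single-history channels analyzed in~\cite{FNS13}.

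Finally, I would \emph{assemble the output estimate} from these per-transition bounds: matched output transitions of $c(s')$ lie close in time to the corresponding transitions of $c(s)$; pending outputs generated by extra input pulses either cancel outright via Lemma~\ref{lem:minimal:delay} (when the gap between the two corresponding input transitions is less than $\delta_n-\delta_{\min}$) or produce short output pulses of small total duration. Summing these contributions and choosing $\delta$ small enough yields $\lVert c(s)-c(s')\rVert_T<\varepsilon$. I expect the main obstacle to be controlling the accumulation of errors across many transitions when there are numerous short input pulses: one must argue that the total perturbation introduced by the extra pulses remains bounded additively in their aggregate measure, which requires both the involution property (to ``undo'' the influence of each pulse on subsequent history) and the concavity of $\delta_\uparrow,\delta_\downarrow$ (to prevent multiplicative amplification of small time shifts).
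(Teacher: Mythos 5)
Your plan correctly isolates the one property that makes the theorem true for involution channels and false for the bounded single-history channels of~\cite{FNS13}: by~\eqref{eq:involution}, $\delta_\downarrow(-\delta_\uparrow(x))=-x$, so inserting a pulse of length exactly~$0$ leaves the delay of the next genuine input transition \emph{unchanged}, and a short but nonzero pulse perturbs it only slightly. However, your route is genuinely different from the paper's, and the step you yourself flag as ``the main obstacle'' is exactly where the plan is not yet a proof. Your transition-matching argument must cope with two compounding difficulties. First, $s'$ may contain arbitrarily \emph{many} extra pulses of small combined measure, so you need a bound on their aggregate effect that is additive in their total length and uniform in their number and placement; uniform continuity of $\delta_\uparrow,\delta_\downarrow$ on a compact set does not give this by itself. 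Second, a small perturbation can flip which pending output transitions cancel (Lemma~\ref{lem:minimal:delay} is a threshold condition), so $c(s')$ need not be a transition-by-transition time-shift of $c(s)$: whole output pulses can appear or disappear, and one must separately argue that any such pulse is itself short because it is born at the cancellation boundary. A third, smaller point: the existence of your finite horizon $T^{*}$ also needs an argument, since involution delays are unbounded from below and a late input transition can in principle place a pending output far in its past.

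The paper closes precisely these gaps by a different decomposition. It first proves monotonicity of channels (Lemma~\ref{lem:channels:are:increasing}) and uses $\lvert s-s_n\rvert=(\max(s,s_n)-s)+(s-\min(s,s_n))$ to reduce to one-sided perturbations $s_n\geq s$, which eliminates the matching and the cancellation-flip case analysis entirely. It then localizes the added mass to a single negative pulse of~$s$ and proves two extremal lemmas: the worst position for one added pulse is $t_n+(\delta_n-\delta_{\min})_+$ (Lemma~\ref{lem:add:pulse:to:the:end}), and pulses of combined length~$\varepsilon$ increase $\mu_T$ by at most $d\varepsilon$ (Lemma~\ref{lem:epsilon:pulses:at:the:end}, where $\delta'(-\delta_{\min})=1$ from Lemma~\ref{lem:delta:min} is your ``concavity prevents amplification'' remark made precise). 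The worst case thus collapses to a single merged pulse, after which only finitely many delays vary, each continuously in~$\varepsilon$. If you pursue your direct error-propagation instead, you will essentially have to reprove these two lemmas anyway, plus the cancellation analysis that monotonicity lets the paper skip; I would therefore adopt the one-sided reduction as the first step of your argument.
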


}{
\subsection{Continuity of Channels}

In this subsection, we prove that strictly causal channels are
continuous in a certain sense that we will define precisely.
For ease of exposition and for space reasons, we give the proof only in the
case of symmetric channels, i.e., for the case that
$\delta_\uparrow=\delta_\downarrow=\delta$.

We begin by noting that channels are monotone.
To compare certain signals, we write $s_1\leq s_2$ if~$s_2$ is~$1$ whenever~$s_1$
is.

\begin{lemma}
\label{lem:channels:are:increasing}
Let~$s_1$ and~$s_2$ be signals such that $s_1\leq s_2$ and let~$c$ be a
channel.
Then, $c(s_1)\leq c(s_2)$.
\end{lemma}

We next define a distance for signals, for which channels will turn out to be
continuous.

\begin{definition}
For a signal~$s$ and a time~$T$, denote by $\mu_T(s)$ the total duration in
$[0,T]$ where~$s$ is~$1$.
In more symbolic terms, $\mu_T(s)$ is the measure of the set $\{t\in[0,T]\mid
s(t)=1\}$.

For any two signals~$s_1$ and~$s_2$ and every~$T$, we define their {\em
distance up to time~$T$\/} by setting $\lVert s_1-s_2\rVert_T = \mu_T(\lvert
s_1-s_2\rvert)$.
\end{definition}

Intuitively, an involution channel is continuous under this measure
for two reasons: (i) Due to the continuity of~$\delta$, a small change in
the time at which an input transition occurs, results in a small change
in the time at which the corresponding output transition occurs.
This, again, only results in a small change of the input-to-previous-output
time for the next input transition, and so on. The technical difficulty
is to show that this effect does not result in discontinuities even for an
unbounded number of
input transitions. (ii) Due to the involution property of~$\delta$,
one can show that~$\delta$ is not only continuous in changing the length of input
pulses, but also in removing them: An input pulse whose length
is arbitrary small results in a value of~$\delta$ for the next input transition
that is arbitrarily close to the transition's $\delta$ value in the case the short pulse
was not present at all. Again, the major difficulty lies in showing that this also
holds for infinite pulse trains. 

Note carefully that it is primarily the continuity property~(ii) that
distinguishes our involution channels from the ``unfaithful''
single-history channels 
analyzed in \cite{FNS13}, which allow bounded SPF to be solved. 

\medskip

We start our detailed proof with Lemma~\ref{lem:add:pulse:to:the:end}, 
which provides an optimal choice for appending a pulse at the
end of a signal in order to maximize $\mu_T$.
We will use it later when bounding the maximum impact of an infinitesimally small
pulse.

We use the shorthand notation $(x)_+$ for $\max(x,0)$.

\begin{lemma}
\label{lem:add:pulse:to:the:end}
Let~$s$ be a signal that is eventually constant~$0$ and let~$c$ be a channel.
Denote by~$t_n$ the time of the last (falling) transition in~$s$ and
by~$\delta_n$ its
delay in the channel algorithm for~$c$. 
Then, the maximal $\mu_T(c(s'))$ among
all~$s'$ obtained from~$s$ by appending one pulse of length~$\Delta$ after
time~$t_n$ is attained by the addition of the pulse at time $t_n +(\delta_n -
\delta_{\min})_+$ (which results in a cancellation, i.e., a right-shift, of the 
last transition).
\end{lemma}
\begin{proof}
We first show the lemma for $T=\infty$ and then extend the result to
finite~$T$.
Let~$s'_\gamma$ be the addition of the pulse of length~$\Delta$ to~$s$ at time
$t_n + \gamma$.

For all $0\leq\gamma\leq\delta_n-\delta_{\min}$, set
\[
f(\gamma) = \gamma + \delta(\Delta - \delta(\gamma - \delta_n))
\enspace.
\]
In the class of all~$s'_\gamma$ with $0\leq\gamma\leq \delta_n-\delta_{\min}$
(which can be empty), the
maximum of~$\mu_T(c(s'_\gamma))$ is attained at the maximum of~$f$.
This is because the transition at time $t_n+\gamma$ cancels that at time~$t_n$
in this case.
The derivative of~$f$ is equal to
\[
f'(\gamma) = 1 - \delta'(\Delta - \delta(\gamma - \delta_n))\cdot
\delta'(\gamma - \delta_n)
\enspace.
\]
The condition $f'(\gamma) = 0$ is equivalent to $\delta'(\Delta - \delta(\gamma
- \delta_n)) = 1 / \delta'(\gamma - \delta_n)$, which is in turn equivalent to
  $\delta(\Delta - \delta(\gamma - \delta_n)) = -(\gamma - \delta_n)$, i.e.,
$\Delta=0$, as $\delta'(t)=1/\delta'(-\delta(t))$ by Lemma~\ref{lem:delta:min}.
Hence, $f'(\gamma)$ is never zero.
Since $f'(\gamma)\to 1$ as $\gamma\to\infty$, as the concave $\delta$ satisfies
$\lim_{t\to\infty}\delta'(t)=0$, the derivative of~$f$ is always
positive, hence~$f$ is increasing.
This shows that $\gamma = \delta_n-\delta_{\min}$ is a strictly better choice
than any other $\gamma$ in this class.

For the class of~$s'_\gamma$ with $\gamma \geq (\delta_n - \delta_{\min})_+>0$, we
define the function
\[
g(\gamma) = \Delta+ \delta(\Delta - \delta(\gamma - \delta_n)) -
\delta(\gamma - \delta_n)
\enspace.
\]
Since the transitions at~$t_n$ and $t_n+\gamma$ do not cancel in this class,
the maximum of $\mu_\infty(c(s'_\gamma))$ is attained at the maximum of~$g$.
But it easy to see, using the monotonicity of~$\delta$, that~$g$ is decreasing.
The maximum of~$g$ is hence attained at $\gamma = (\delta_n- \delta_{\min})_+$. 

Consequently, the choice $\gamma = \gamma_0 = (\delta_n- \delta_{\min})_+$
maximizes $\mu_\infty(c(s'_\gamma))$ in any case. By Lemma~\ref{lem:minimal:delay}, this choice
results in a cancellation of the last (falling) transition in $s$, hence
a right-shift of the latter in $s'$. This concludes our proof for $T=\infty$.

Let now~$T$ be finite.
Denote by~$T_0$ the time of the last, falling, output transition in
$c(s_{\gamma_0}')$.
In this case, transitions of $c(s)$ and $c(s_{\gamma_0}')$ are the same except
the last,
falling, transition which is delayed from $t_n + \delta_n$ to~$T_0$.
We distinguish the two cases (a) $T\leq T_0$ and (b) $T > T_0$.
In case~(a), the last transition of $c(s)$ is delayed beyond~$T$ in
$c(s_{\gamma_0}')$.
Because all other transitions remain unchanged in all $c(s_\gamma')$, the
measure $\mu_T(c(s_{\gamma_0}'))$ is maximal among all $\mu_T(c(s_\gamma'))$
if $T\leq T_0$.
In case~(b), we have $\mu_T(c(s_{\gamma_0}')) = \mu_\infty(c(s_{\gamma_0}'))$.
But because $\mu_T \leq \mu_\infty$ and $\mu_\infty(c(s_{\gamma_0}'))$ is
maximal among all $\mu_\infty(c(s_{\gamma}'))$, so is
$\mu_T(c(s_{\gamma_0}'))$ among all $\mu_T(c(s_{\gamma}'))$.
\end{proof}

We next effectively bound the maximum impact on~$\mu_T$ that a set of 
pulses of small combined length can have.

\begin{lemma}
\label{lem:epsilon:pulses:at:the:end}
Let~$s$ be a signal that is eventually constant~$0$ and let~$c$ be a channel.
Then there exists a constant~$d$ such that
the maximal~$\mu_T(c(s'))$ among all~$s'$ obtained from~$s$ by adding
pulses of combined length~$\varepsilon$ after the last transition of~$s$ is at
most $\mu_T(c(s)) + d\cdot\varepsilon$.
\end{lemma}
\begin{proof}
It suffices to show the lemma for $T=\infty$.
Let~$\varepsilon=\sum_{k=1}^\infty \varepsilon_k$.
We add, one after the other, pulses of length~$\varepsilon_k$ after the last
transition.
We show that the maximum gain after adding~$K$ pulses is at most $\sum_{k=1}^K
\varepsilon_k$.

Denote by~$t_n$ the last transition in~$s$ and by~$\delta_n$ its delay.
By Lemma~\ref{lem:add:pulse:to:the:end}, it is optimal to add the first pulse
(of length~$\varepsilon_1$) at time $t_n + (\delta_n-\delta_{\min})_+$;
call the resulting signal~$s_1'$.

We first assume $\delta_n-\delta_{\min}\geq0$.
Here, the two new transitions in~$s_1'$ are $t_{n+1} = t_n + \delta_n-\delta_{\min}$ and
$t_{n+2} = t_n + \delta_n-\delta_{\min} + \varepsilon_1$.
Their corresponding delays are $\delta_{n+1}=\delta_{\min}$ and $\delta_{n+2} =
\delta(\varepsilon_1 - \delta_{\min})$.
By the mean value theorem of calculus and Lemma~\ref{lem:delta:min}, the
duration of the resulting pulse is
\[
\begin{split}
\delta_{n+2} - \delta_{n+1}  = 
\delta(\varepsilon_1 -\delta_{\min}) - \delta(- \delta_{\min})
 = \varepsilon_1\cdot \delta'(\xi)
\end{split}
\]
for some $-\delta_{\min} \leq \xi \leq \varepsilon_1 - \delta_{\min}$.
Since~$\delta'$ is decreasing and $\delta'(-\delta_{\min})=1$
according to Lemma~\ref{lem:delta:min}, we hence deduce
$0\leq\delta_{n+2} - \delta_{n+1} \leq \varepsilon_1$.
Thus $\mu_T(c(s_1') - c(s)) = \varepsilon_1 + \delta_{n+2} - \delta_{n+1} \leq
2\varepsilon_1$.
Since $\delta_{n+2} > \delta_{\min}$, we can 
continue this argument inductively.

If now $\delta_n - \delta_{\min}<0$, then~$t_n$ is effectively replaced  by
$t_n+\varepsilon_1$ in~$s_1'$, i.e., right-shifted.
This changes the measure by
\[
\big(
\delta(t_n - t_{n-1} + \varepsilon_1 - \delta_{n-1})
-
\delta(t_n - t_{n-1} - \delta_{n-1})
\big)_+
\enspace,
\]
which is at most $\varepsilon_1 \cdot \delta'(t_n - t_{n-1} - \delta_{n-1})$ by
the mean value theorem.
We note that this second case only occurs until the first case
happens one time.
We can hence merge all the~$\varepsilon_k$ of the first case and set $d =
\max(2, \delta'(t_n - t_{n-1} - \delta_{n-1}))$.
\end{proof}

We combine the previous two lemmas to show continuity of channels:

\begin{theorem}
\label{thm:channels:are:continuous}
Let~$c$ be a channel and let~$T\in [0,\infty)$.
Then, the mapping $s\mapsto c(s)$ is continuous with respect to the distance
$d(s_1,s_2) = \lVert s_1 - s_2 \rVert_T$.
\end{theorem}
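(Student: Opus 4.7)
The plan is to combine channel monotonicity with an extension of the end-of-signal pulse bound. Given signals $s_1,s_2$ with $\|s_1-s_2\|_T$ small, I first form the pointwise maximum $s_\vee=s_1\vee s_2$ and minimum $s_\wedge=s_1\wedge s_2$, which are again valid signals. By Lemma~\ref{lem:channels:are:increasing}, $c(s_\wedge)\leq c(s_i)\leq c(s_\vee)$ for $i=1,2$, so a sandwich argument yields $\|c(s_1)-c(s_2)\|_T\leq \|c(s_\vee)-c(s_\wedge)\|_T$. Since $|s_\vee-s_\wedge|=|s_1-s_2|$, the theorem reduces to the monotone case $s_\wedge\leq s_\vee$ in which $\mu_T(s_\vee-s_\wedge)=\|s_1-s_2\|_T$ is small.

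In this reduced setting, $s_\vee$ is obtained from $s_\wedge$ by inserting a (countable) family of disjoint extra pulses with total length at most $\|s_1-s_2\|_T$. I would insert these pulses one by one into $s_\wedge$, in increasing order of start time, and bound the corresponding change in $\mu_T(c(\cdot))$ per insertion by a constant multiple of the inserted pulse length. This amounts to extending Lemma~\ref{lem:epsilon:pulses:at:the:end} from pulses appended after the last transition of a finite signal to pulses inserted at arbitrary positions.

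For the per-pulse bound, the direct contribution of a newly inserted input pulse to $\mu_T(c(\cdot))$ is controlled exactly as in Lemma~\ref{lem:add:pulse:to:the:end} and Lemma~\ref{lem:epsilon:pulses:at:the:end}, using $\delta'(-\delta_{\min})=1$ together with the monotonicity of $\delta'$. The harder piece is the downstream effect: the insertion modifies the input-to-previous-output delay of every later pre-existing transition, causing a perturbation that propagates along the chain of computed $\delta_n$'s. I would control this propagation by the mean value theorem together with the identity $\delta'(T)=1/\delta'(-\delta(T))$ from Lemma~\ref{lem:delta:min}, which prevents amplification on average along the chain, and by $\lim_{t\to\infty}\delta'(t)=0$ (a consequence of concavity and boundedness of $\delta$), which bounds the cumulative effect on the finite window $[0,T]$.

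The main obstacle I expect is the bookkeeping around cancellations: an inserted pulse can create, destroy, or shift cancellations of later output transitions in combinatorially awkward ways. I plan to finesse this by noting that $\mu_T$ is insensitive to which particular transitions are canceled---only the net duration of $1$'s on the output matters---so that $\|c(s_\vee)-c(s_\wedge)\|_T$ can be upper-bounded by a sum of per-pulse terms computed as if each insertion acted on a ``virgin'' suffix, at the cost of a slightly larger constant. Combined with Theorem~\ref{thm:execution} (which guarantees that only finitely many output transitions fall in $[0,T]$), summing over all inserted pulses yields a bound of the form $\|c(s_\vee)-c(s_\wedge)\|_T\leq d\cdot \|s_1-s_2\|_T$ for some constant $d$ depending on $T$ and the channel, which in particular establishes the claimed continuity.
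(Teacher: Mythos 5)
Your overall skeleton matches the paper's: reduce to the monotone case via Lemma~\ref{lem:channels:are:increasing} and a sandwich bound, then control the added $1$-mass using Lemmas~\ref{lem:add:pulse:to:the:end} and~\ref{lem:epsilon:pulses:at:the:end}. The gap is in the downstream step, which you yourself flag as the hard part and then only assert. Those two lemmas treat pulses appended \emph{after the last transition} of an eventually-zero signal; they say nothing about how a pulse inserted in the middle of a signal perturbs the delays $\delta_k$ of later, pre-existing transitions. Your claim that $\delta'(T)=1/\delta'(-\delta(T))$ ``prevents amplification on average along the chain'' is not a proof: along a stretch of input transitions whose pending outputs cancel one has $T_k<-\delta_{\min}$, hence $\delta'(T_k)>1$ and possibly very large near the left end of the domain, so you would have to actually show compensation (e.g., from $T_{k+1}>-\delta(T_k)$ and monotonicity of $\delta'$ one gets $\delta'(T_k)\,\delta'(T_{k+1})\le 1$), and in addition handle the fact that an insertion can create or destroy cancellations among later pending transitions; ``at the cost of a slightly larger constant'' is not an argument. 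The appeal to $\lim_{t\to\infty}\delta'(t)=0$ controls long idle gaps, not closely spaced pulse trains, so it does not close this hole either. Finally, your target estimate $\lVert c(s_\vee)-c(s_\wedge)\rVert_T\le d\cdot\lVert s_1-s_2\rVert_T$ with $d$ depending only on $T$ and the channel is stronger than what the cited lemmas can give (the constant in Lemma~\ref{lem:epsilon:pulses:at:the:end} already depends on the signal through $\delta'(t_n-t_{n-1}-\delta_{n-1})$) and stronger than what the theorem needs.

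The paper's proof is deliberately softer precisely to avoid this: after the monotone reduction it localizes the added mass to a single negative pulse $[t_m,t_{m+1}]$ of the \emph{fixed} signal $s$ (only finitely many such pulses lie before $T$), invokes Lemma~\ref{lem:add:pulse:to:the:end} to argue that the worst case for the later transitions is merging all added mass at the end of that negative pulse, and then notes that the finitely many downstream delays are continuous functions of the merged length $\varepsilon$, whence $\mu_T(c(s_n)-c(s))\to0$ as $\varepsilon\to0$. That is a pointwise continuity argument at fixed $s$, requiring no uniform propagation bound at all. If you want to keep your quantitative, per-insertion route, you must supply the missing non-amplification lemma and the cancellation bookkeeping as genuine proofs; as written, the central step is unproven.
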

\begin{proof}
Let~$s$ be a signal.
We show that, if $\lVert s - s_n \rVert_T \to 0$, then $\lVert c(s) - c(s_n)
\rVert_T \to 0$.
Because
\[
\lvert s - s_n\rvert = (\max(s,s_n) - s) + (s - \min(s,s_n))
\enspace,
\]
where $\max(s,s_n)(t)=\max(s(t),s_n(t))$ and $\min(s,s_n)(t)=\min(s(t),s_n(t))$ for all $t$,
the condition $\lVert s-s_n\rVert_T\to 0$ is equivalent to conjunction of $\lVert s
- \max(s,s_n)\rVert_T\to0$ and $\lVert s - \min(s,s_n)\rVert_T\to0$.
Because
$\max(c(s),c(s_n)) \leq c(\max(s,s_n))$
and
$\min(c(s),c(s_n)) \geq c(\min(s,s_n))$
by Lemma~\ref{lem:channels:are:increasing},
we have
\[
\begin{split}
\lvert c(s) - c(s_n) \rvert
\leq
\ &
c(\max(s,s_n)) - c(s)
\\&+ c(s) - c(\min(s,s_n))
\enspace,
\end{split}
\]
which shows that
we can
suppose without loss of generality $s_n\geq s$ for all~$n$.

Let $(t_m,0),(t_{m+1},1)$ be a negative pulse in~$s$.
Since there are only finitely many negative pulses before time~$T$, it suffices
to show $\mu_T(c(s_n)-c(s))\to0$ in the case that $s_n-s$ is zero
outside of $[t_m,t_{m+1}]$, i.e., that the only additions of~$s_n$ with respect
to~$s$ lie in the given negative pulse.

Let~$\mu_T(s_n-s) \leq \varepsilon$.
It follows from Lemma~\ref{lem:epsilon:pulses:at:the:end} that the increase in
measure incurred directly from the new pulses is $O(\varepsilon)$.
Furthermore, by Lemma~\ref{lem:add:pulse:to:the:end}, the measure incurred by
later transitions~$t_k$ with~$k>m$
are biggest when merging all new pulses at the end of the negative pulse.
Because the delays of these transitions
depend continuously on~$\varepsilon$ and~$\mu_T(c(s_n)-c(s))$ depends
continuously on these delays, we have $\mu_T(c(s_n)-c(s))\to 0$ as
$\varepsilon\to0$.
\end{proof}

\subsection{Impossibility in Forward Circuits}
}

Call a circuit a {\em forward circuit\/} if its graph is acyclic.
Forward circuits are exactly those circuits that do not contain
     feed-back loops.
Equipped with the continuity of involution channels and the fact that
     the composition of continuous functions is continuous, it is not
     too difficult to prove that the inherently discontinuous SPF problem
     cannot be solved with forward circuits.

\begin{theorem}\label{thm:no_forward_circuit}
No forward circuit solves bounded SPF.
\end{theorem}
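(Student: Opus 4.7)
The plan is to show that any forward circuit induces an input-to-output signal map that is continuous in the $\lVert\cdot\rVert_T$ metric, and then to derive a contradiction from the essentially discontinuous nature of bounded SPF. Theorem~\ref{thm:channels:are:continuous} does most of the heavy lifting.

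First I would lift channel continuity to gates: for a Boolean gate~$b$, pointwise evaluation yields $\lVert b(s_1,\dots,s_d)-b(s_1',\dots,s_d')\rVert_T \leq \sum_{i=1}^d \lVert s_i-s_i'\rVert_T$, because $b(s_1(t),\dots,s_d(t))$ can differ from $b(s_1'(t),\dots,s_d'(t))$ only at times where some coordinate differs. Because the circuit graph is acyclic, it can be evaluated in topological order, so the induced input-to-output map $F$ is a finite composition of channel and gate maps and is therefore continuous in $\lVert\cdot\rVert_T$ for every finite~$T$. Next I parameterize inputs by pulse length: if $p_\Delta$ denotes the pulse of length~$\Delta$ at time~$0$, then $\lVert p_\Delta - p_{\Delta'}\rVert_T = |\Delta-\Delta'|$ for $T\geq\max(\Delta,\Delta')$, and $\mu_T$ is $1$-Lipschitz in $\lVert\cdot\rVert_T$. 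Hence, for any fixed~$T$, the real-valued map $g(\Delta) \defas \mu_T(F(p_\Delta))$ is continuous in~$\Delta$.

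The contradiction comes from the SPF axioms. By F3 pick some $\Delta^*>0$ with $F(p_{\Delta^*})$ non-zero; let $K$ be the bound from F5 and $\varepsilon>0$ the bound from F4, and fix $T > \Delta^*+K+\varepsilon$. For every $\Delta\in[0,\Delta^*]$, F5 makes $F(p_\Delta)$ eventually constant, so it belongs to one of three types: the zero signal, with $g(\Delta)=0$; an eventually-$0$ signal with non-empty pulse mass of total length $\geq\varepsilon$ (by F4), giving $g(\Delta)\geq\varepsilon$; or an eventually-$1$ signal, giving $g(\Delta)\geq T-\Delta-K>\varepsilon$. Thus $g$ takes values only in $\{0\}\cup[\varepsilon,\infty)$. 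But $g$ is continuous with $g(0)=0$ by F2 and $g(\Delta^*)\geq\varepsilon$, so the intermediate value theorem forces $g$ to attain $\varepsilon/2$ somewhere on $[0,\Delta^*]$, a contradiction.

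The main obstacle is really the continuity of channels, but that is already supplied by Theorem~\ref{thm:channels:are:continuous}; everything else reduces to a clean composition-plus-IVT argument. The one subtlety is that the eventually-$1$ case must keep $\mu_T$ away from $\varepsilon/2$ uniformly in $\Delta\in[0,\Delta^*]$, which is why $T$ is chosen larger than $\Delta^*+K+\varepsilon$.
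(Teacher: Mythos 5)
Your proposal is correct and follows essentially the same route as the paper: continuity of the input-to-output map of an acyclic circuit via Theorem~\ref{thm:channels:are:continuous}, then an intermediate-value argument on $\Delta\mapsto\mu_T(s_\Delta)$ played against conditions F2--F5. The only differences are cosmetic (you make the gate Lipschitz bound explicit, choose $T>\Delta^*+K+\varepsilon$ instead of $T=2\Delta_0+K$, and phrase the contradiction as a forbidden intermediate value $\varepsilon/2$ rather than exhibiting a specific $\Delta_1$ with $\mu_T(s_{\Delta_1})=\varepsilon$), and both arguments share the same implicit use of F2 to ensure the output's initial value is~$0$.
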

\ifthenelse{\boolean{SHORTversion}}{
}{
\begin{proof}
Suppose that there exists a forward circuit that solves bounded SPF with
stabilization time bound~$K$.
Denote by~$s_\Delta$ its output signal when feeding it a $\Delta$-pulse at
time~$0$ as the input.
Because~$s_\Delta$ in forward circuits is a finite composition of continuous
functions by
Theorem~\ref{thm:channels:are:continuous}, the measure $\mu_T(s_\Delta)$
depends continuously on~$\Delta$.

By the nontriviality condition (F3) of the SPF problem, there exists
some~$\Delta_0$ such that $s_{\Delta_0}$ is not zero.
Set $T = 2\Delta_0 + K$.

Let~$\varepsilon>0$ be smaller than $\mu_T(s_{\Delta_0})$.
We show a contradiction by finding a~$\Delta$ such that~$s_\Delta$ either
contains a pulse of length less than~$\varepsilon$ (contradiction to the no
short pulses condition (F4)) or contains a transition
after time $\Delta+K$ (contradicting the bounded stabilization time
condition~(F5)).

Since $\mu_T(s_\Delta)\to0$ as $\Delta\to0$ by the no generation condition (F2)
of SPF,
there exists a~$\Delta_1<\Delta_0$ such that $\mu_T(s_{\Delta_1})=\varepsilon$
by the intermediate value property of continuity.
By the bounded stabilization time condition (F5), there are no transitions
in~$s_{\Delta_1}$ after time $\Delta_1+K$.
Hence, $s_{\Delta_1}$ is~$0$ after this time because otherwise it is~$1$ for the
remaining duration $T - (\Delta_1+K) > \Delta_0 > \varepsilon$, which would
mean that $\mu_T(s_{\Delta_1})>\varepsilon$.
Consequently, there exists a pulse in~$s_{\Delta_1}$ before
time $\Delta_1+K$.
But any such pulse is of length at most~$\varepsilon$ because
$\mu_{\Delta_1+K}(s_{\Delta_1}) \leq \mu_T(s_{\Delta_1})=\varepsilon$.
This is a contradiction to the no short pulses condition (F4).
\end{proof}

\subsection{Simulation with Unrolled Circuits}
}

We next show how to simulate (part of) an execution of an arbitrary
     circuit~$C$ by a forward circuit~$C'$ generated from~$C$ by
     unrolling of feedback channels.
Intuitively, the deeper the unrolling, the longer the time~$C'$
     behaves as~$C$.

\begin{definition}
Let~$C$ be a circuit with input~$i$.
For~$v$ being a gate or input in~$C$ and $k\ge 0$, the {\em $k$-unrolled
     circuit $C_k(v)$} is
     constructed inductively as follows: If~$v=i$, or $v$ is a gate
     with no predecessor in~$C$, then~$C_k(v)$ is the
     circuit 
     that comprises only
     of vertex~$v$ and whose output is~$v$. (We slightly misuse the circuit
     definition here by allowing circuits consisting of a single vertex only.)
Otherwise, $v$ is a gate with predecessors and we distinguish two
     cases:

If~$k=0$, $C_k(v)$ comprises of gate~$v^{(\alpha)}$, with~$\alpha$
     being a unique identifier, and for each predecessor~$\sigma$
     of~$v$ in~$C$: if~$\sigma=i$, add~$i$ and an edge from~$i$
     to~$v^{(\alpha)}$; if~$\sigma$ is a channel, add
     channel~$\sigma^{(\beta)}$ and gate~$\tilde{x}^{(\gamma)}$,
     with~$\beta$ and~$\gamma$ being unique identifiers and~$x$ being
     the channel's initial value.
The Boolean function assigned to~$\tilde{x}^{(\gamma)}$ is
     constant~$x$.
The channel functions of~$\sigma^{(\beta)}$ and~$\sigma$ are the same.
Furthermore, add edges from $\tilde{x}^{(\gamma)}$ to~$\sigma^{(\beta)}$
     and from~$\sigma^{(\beta)}$ to~$v^{(\alpha)}$.
The Boolean function assigned to~$v^{(\alpha)}$ is the same as for~$v$
     and the ordering of the predecessors of~$v^{(\alpha)}$ reflects
     the ordering of the predecessors of~$v$.

If~$k > 0$, $C_k(v)$ is the circuit that comprises of
     gate~$v^{(\alpha)}$, with unique identifier~$\alpha$, and for
     each predecessor~$\sigma$ of~$v$ in circuit~$C$:
     If~$\sigma$ is a channel, let~$w$ be its predecessor in~$C$.
Add and connect the output of circuit~$C_{k-1}(w)$ to a
     channel~$\sigma^{(\beta)}$ and the channel to~$v^{(\alpha)}$.
If~$\sigma=i$, add~$i$ and connect it to~$v^{(\alpha)}$.
Again, the Boolean functions, orderings and channel functions are
     assigned in accordance with those in~$C$. 

In all cases, we say that the vertices~$\sigma^{(\alpha)}$ {\em
	correspond
     to~$\sigma$}.
\end{definition}

Let~$o$ be the single output of circuit~$C$.
To each vertex~$\sigma$ in~$C_k(o)$, we assign a value~$z(\sigma)$
     from~$\IN_0\cup\{\infty\}$ as follows: $z(\tilde{0}^{(\alpha)})=
     z(\tilde{1}^{(\alpha)}) = 0$, $z(i) = z(\sigma) = \infty$
     if~$\sigma$ has no predecessor in~$C$, $z(\sigma) = 1+z(w)$ for a
     channel~$\sigma$ with predecessor~$w$, and $z(\sigma) =
     \min\{z(\sigma') \mid \sigma' \text{ is a predecessor of }
     \sigma\}$ for a gate~$\sigma$.
\figref{fig:unrolling} shows an example of a circuit and an
     unrolled circuit with $z$~values.

\begin{figure}
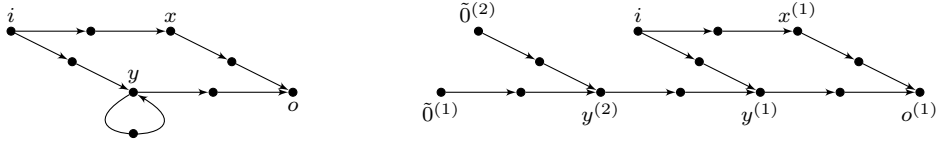

\centering
  {
    \tikzfigureunrolling}
\caption{Circuit~$C$ (left) and $C_2(o)$ (right) under the assumption that both incoming channels to gate~$y$ have initial value~$0$. It is $z(\tilde{0}^{(1)})=z(\tilde{0}^{(2)})=0$,
$z(i)=z(x^{(1)})=\infty$, $z(y^{(2)})=1$, $z(y^{(1)})=2$, and
$z(o^{(1)})=3$.}\label{fig:unrolling}
\end{figure}

We further adapt the execution construction algorithm in
     Section~\ref{sec:algorithm} to assign to each generated
     transition $e$ a {\em causal depth~$d(e)$}.
All initial transitions and input transitions have causal depth~$0$;
     all ``reset'' transitions initially added at time~$0$ have causal
     depth~$1$.
Algorithm step~(ii) is extended such that each transition~$e$ at
     time~$t$ that was marked fixed in~$s_v$, with~$v$ being a
     gate, gets assigned $d(e)$ equal to the maximum over all $d(e')$,
     where~$e'$ is a fixed transition at time~$t' \le t$ in
     $s_{\sigma}$ with $\sigma$ being a predecessor of~$v$.
Algorithm step~(iii) is extended such that for each transition~$e$
     in~$s_v$, with~$v$ being a gate or an input, that generates a
     transition~$e'$ in some~$s_{(v,w)}$, $d(e')=d(e)+1$.
We immediately get:   

\begin{lemma}\label{lem:Depth_Iteration}
For all~$k\ge 1$, (a) the constructive algorithm never assigns a
     causal depth larger than~$k$ to a transition marked fixed in
     iteration~$k$, and (b) at the end of iteration~$k$ the sequence
     of the causal depths of the transitions in~$s_{\sigma}$ is
     nondecreasing, for all vertices~$\sigma$.
\end{lemma}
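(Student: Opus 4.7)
The plan is to prove both (a) and (b) simultaneously by induction on $k$, since the two statements reinforce each other: monotonicity of the depth sequences (b) is needed to control the ``max over predecessors at $t' \le t$'' rule used by step~(ii) and thereby bound the depths assigned in iteration $k$ for (a), while (a) in turn supplies the bounds that make (b) easy to verify.

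For the base case $k=1$, the only fixed transitions prior to the iteration are initial transitions at $-\infty$ (depth $0$), and the only non-fixed transitions that can have smallest time $t_1$ are either input transitions (depth $0$), reset transitions at time $0$ (depth $1$), or channel transitions generated from the reset. All of these have depth $\le 1$, and the resulting one-element or two-element depth sequences are trivially nondecreasing. For the inductive step, I classify the transitions newly marked fixed during iteration $k$ by the algorithm step that produced them: (i) in step~(i), transitions that already existed at time $t_k$ are either input transitions (depth $0$), or channel outputs previously generated in step~(iii) of some iteration $\ell < k$ from a transition $e$ with $d(e) \le \ell$ by the induction hypothesis, so their depth $d(e)+1 \le \ell+1 \le k$; (ii) in step~(ii), a new gate transition's depth is by definition the maximum of depths of fixed predecessor transitions at times $t' \le t_k$, each of which is $\le k$ by the induction hypothesis together with what we just established for step~(i), so the assigned depth is $\le k$. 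This gives part~(a).

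For part~(b) at the end of iteration $k$, I consider each vertex $\sigma$. If $\sigma$ is a gate or input, the only new transition possibly added to $s_\sigma$ in iteration $k$ sits at time $t_k$, which is strictly larger than all earlier transition times in $s_\sigma$; its assigned depth is a maximum taken over a set of fixed predecessor transitions that is a superset of the set used to define the depth of any earlier transition in $s_\sigma$, hence the depth sequence remains nondecreasing. If $\sigma$ is a channel with predecessor $v$, a new pending channel transition is generated in step~(iii) from a newly fixed $e \in s_v$ at time $t_k$ and receives depth $d(e)+1$; any preceding not-yet-removed transition in $s_\sigma$ was generated from some $e' \in s_v$ at a strictly earlier time and received depth $d(e')+1$, and by the gate/input case of (b) applied to $s_v$ (either in a prior iteration via the induction hypothesis or earlier in the current iteration), $d(e') \le d(e)$. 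Cancellations in step~(iii) only delete pairs of consecutive transitions from $s_\sigma$, and deleting elements from a nondecreasing sequence leaves a nondecreasing sequence.

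The main obstacle is the intertwining within a single iteration: step~(ii) can newly fix a gate transition at $t_k$, and that same transition may then trigger step~(iii) on one of its outgoing channels, creating pending transitions whose depths must be consistent with both (a) as it will apply in a \emph{future} iteration and (b) as it applies right now. Handling this cleanly requires processing steps~(i)--(iii) in a fixed order and recognising that every channel transition created in step~(iii) of iteration $k$ has depth at most $k+1$ and will be fixed at a time strictly greater than $t_k$ (by Lemma~\ref{lem:t_fixed}(b)), so when it is later marked fixed in some iteration $\ell \ge k+1$ its depth still respects the bound $\le \ell$ required by~(a). Once this bookkeeping is in place, the induction closes.
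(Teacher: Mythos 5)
Your proof is correct, and it is exactly the routine bookkeeping the paper has in mind: the paper states this lemma without proof (``We immediately get:''), treating it as immediate from the depth-assignment rules, the fact that fixed transitions are never removed (Lemma~\ref{lem:fixed_is_fixed}), and the time bounds of Lemma~\ref{lem:t_fixed}. Your simultaneous induction on the iteration index, including the handling of step~(ii) transitions feeding step~(iii) within the same iteration, spells out precisely that intended argument.
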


\ifthenelse{\boolean{SHORTversion}}{

By an inductive argument on the causal depth of events generated in
both the circuit and its unrolled counterpart, one can show:

}{
We are now in the position to prove the main result of a circuit
     simulated by an unrolled circuit.
}

\begin{theorem}\label{thm:simulation}
Let~$C$ be a circuit with output port~$o$ that solves bounded SPF.
Let~$C_k(o)$ be an unrolling of~$C$, $\sigma$ a vertex in~$C$ and
     $\sigma'$ a vertex in~$C_k(o)$ corresponding to~$\sigma$.
For all input signals~$i$, if a transition~$e$ within~$s_\sigma$ is
     marked fixed by the execution construction algorithm run on
     circuit~$C$ with input signal~$i$
     and~$d(e) \le z(\sigma')$, then~$e$ is added and marked fixed
     in~$s_{\sigma'}$ by the algorithm run on circuit~$C_k(o)$
     with input signal~$i$; and vice versa.
\end{theorem}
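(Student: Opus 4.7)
My plan is to prove the theorem by induction on the causal depth $d(e)$, using in the inductive step a case split on whether $\sigma$ is a gate or a channel, together with the fact that the depth of a channel output transition is exactly one more than the depth of the input transition that produced it, while the depth of a gate-fire transition equals the maximum of predecessor depths.

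For the base case I would dispatch the transitions with $d(e)=0$ (initial transitions at time $-\infty$ and input transitions) and $d(e)=1$ (reset transitions added at time $0$). Input-port signals agree verbatim in $C$ and $C_k(o)$, and the construction of $C_k(o)$ is arranged precisely so that initial values propagate correctly: at the deepest unrolled layer, a channel $\sigma^{(\beta)}$ of initial value $x$ is preceded by a constant gate $\tilde{x}^{(\gamma)}$, so the initial $(-\infty,x)$ and, if needed, the reset $(0,X)$ match those generated in $C$. This handles every $\sigma'$ with $z(\sigma')\ge 0$ (respectively $\ge 1$ for resets), which is always satisfied.

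For the inductive step, fix $d\ge 1$ and assume the equivalence already holds for every transition of depth strictly less than $d$. If $\sigma$ is a channel, then the transition $e\in s_\sigma$ at depth $d$ was produced from some input transition $e''\in s_v$ of depth $d-1$, together with the preceding input history (also of depth $\le d-1$) required by the channel algorithm to compute the cumulative $\delta_n$. Because $\sigma'$ in $C_k(o)$ with $z(\sigma')\ge d$ has predecessor $v'$ with $z(v')\ge d-1$, the inductive hypothesis yields that the full relevant input history of $\sigma'$ is identical to that of $\sigma$. Since $C_k(o)$ uses the same channel function and initial value, the transition-generation algorithm deterministically produces $e$ in $s_{\sigma'}$ as well. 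The converse direction is symmetric. If $\sigma$ is a gate, then $e$ at time $t$ is a gate-fire transition whose existence and value depend only on the predecessor signals $s_{\tau_j}(t)$ for all $j$. Every predecessor transition at time $\le t$ has depth $\le d$; the ones of depth $<d$ are handled by the inductive hypothesis, and those of depth exactly $d$ are necessarily channel-output transitions (gate predecessors are channels or the input port), so the channel case just treated shows that they also coincide in $s_{\tau'_j}$. Since $z(\sigma')=\min_j z(\tau'_j)\ge d$ gives $z(\tau'_j)\ge d$ for every $j$, the predecessor signals of $\sigma'$ agree with those of $\sigma$ up to and including time $t$, so the gate fires identically and $e\in s_{\sigma'}$. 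Once more, the reverse implication goes through by the same argument, noting that the only additional vertices in $C_k(o)$ are the constant-gate leaves $\tilde{x}^{(\gamma)}$ whose transitions trivially correspond to the initial transitions of the channels they replace.

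The main obstacle is the gate case, because gate-output depth does not strictly exceed predecessor depth, so a naive induction on $d(e)$ alone does not close. The resolution is that each depth-$d$ predecessor transition at a gate is necessarily the output of a channel, and such an output is in turn determined by a strictly-lower-depth history on the channel's input side, which the inductive hypothesis does cover. A secondary point to check carefully is that the channel algorithm's cancellation rule is sensitive to the entire preceding input list, so one must invoke the hypothesis for all earlier transitions (not just the immediately causing one) to ensure that the $\delta_n$ sequence computed inside $C_k(o)$ matches that inside $C$; this is why the statement is formulated at the level of whole signals rather than individual transitions.
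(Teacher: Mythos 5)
Your proposal is correct and takes essentially the same route as the paper: an induction on the causal depth $d(e)$ with a channel/gate case distinction, using the determinism of the channel algorithm, the depth-monotonicity of Lemma~\ref{lem:Depth_Iteration}, and the definition of $z$ to ensure $z(v')\ge d-1$ for predecessors. The only difference is how the non-strict depth increase at gates is closed: the paper argues by contradiction on the \emph{first} transition (in generation order) of depth $k+1$ not matched, whereas you resolve it structurally via the gate--channel alternation, first settling depth-$d$ channel outputs from their depth-$(d-1)$ input histories and then the gates; both resolutions are sound and of comparable rigor.
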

\ifthenelse{\boolean{SHORTversion}}{

}{
\begin{proof}
We will show the statement by induction on~$d(e) \ge 0$ for the case
     where~$e$ is a transition in~$C$'s execution.
The proof for the case where~$e$ is a transition in~$C_k(o)$'s
     execution is analogous.

{\em Induction base:\/} By the construction of an unrolled circuit, $s_{\sigma}$
     and~$s_{\sigma'}$ have the same initial transitions if~$\sigma$
     is a gate or channel, and the same transitions
     if~$\sigma=\sigma'$ is the input.
Since, for all these transitions~$e$, $d(e) = 0 \le z(\sigma')$, the
     statement holds for~$d(e) = 0$.

{\em Induction step:\/} Assume that the lemma holds for all
     transitions~$e$ with~$d(e) \le k$.
We show that it also holds for transitions~$e'$ with $d(e')=k+1$
     if~$k+1\le z(\sigma')$.
If~$e'$ is a transition initially added by the constructive algorithm
     at time~$0$, $d(e')=1$.
From the definition of the unrolling, we immediately obtain that~$e'$
     is also added to~$s_{\sigma'}$ if~$z(\sigma') \ge 1$.
Otherwise, $e'$ must have been added within an iteration of the
     constructive algorithm.
Assume by contradiction that~$e'$ is the first transition (in the
     order transitions are generated by the constructive algorithm)
     with causal depth~$k+1$ added to a signal in~$C$ but not added
     to the respective signal in~$C_k(o)$.
We distinguish two cases for~$\sigma$: 

If~$\sigma=(v,w)$ is a channel in~$C$: Transition~$e'$ may only have
     been added to~$s_{(v,w)}$ by the channel algorithm with input
     signal~$s_v$ and a current transition~$e''$ with $d(e'')
     = k$.
The time of transition~$e'$ depends on the time of~$e''$ and on the
     last output transition only.
From the fact that~$e'$ is the first with depth~$k+1$ not added
     to~$s_{\sigma'}$, and the induction hypothesis applied to
     both~$s_{v}$ and~$s_{(v,w)}$, we deduce that transition~$e'$ is
     also added to~$s_{\sigma'}$; a contradiction to the assumption
     that it is the first one not added.

Since~$\sigma$ cannot be the input port, because all its transitions
     have causal depth~$0$ only, the case of~$\sigma$ being a gate in~$C$
     remains: However, then~$e'$ is generated due to a transition~$e''$ on
     a predecessor~$w$ of~$\sigma$ in~$C$.
Further, $d(e'') \le k+1$ and~$z(w') \ge k+1$ must hold for
     vertex~$w'$ corresponding to~$w$.
Since, $e'$ is the first transition with causal depth less or
     equal~$k+1$ not added to both circuits' signal, $e''$ was added to
     both~$s_{w'}$ and~$s_{w'}$, and thus~$e'$ is added to~$s_{\sigma'}$; 
     a contradiction that it is the first one not added.

This completes the induction step.
\end{proof}
}

\ifthenelse{\boolean{SHORTversion}}{

To finally deduce the impossibility of bounded SPF, we use the fact
     that a circuit $C$ that solves SPF with stabilization time
     bound~$K$ can be simulated by an unrolled circuit $C_N$ with
     depth~$N$ larger than the maximum causal depth of any of its
     transitions, i.e., larger than~$K/\delta_{\min}^C$ plus the number
     of input transitions; cf.\ Lemmas~\ref{lem:t_inc}
     and~\ref{lem:Depth_Iteration}.
This, however, contradicts the fact that no forward circuit, and thus
     specifically $C_N$, can solve bounded SPF.
We hence obtain our main result: 




}{
\subsection{The Impossibility Result}

Let the {\em aligned bounded SPF problem\/} be the SPF problem with
     the following modifications: We first require that if the input
     signal is a pulse, then the pulse must start at time~$0$ and be
     of length at most~$1$.
We call such signals {\em valid\/} input signals.
Further, we require that if the output signal~$o$ makes a transition
     to~$1$, it must do so before time~$K+1$, and~$o$ must remain~$1$
     from thereon until time~$K+2$, from whereon it is~$0$ until
     time~$K+3$ followed by a pulse of length~$1$ at time~$K+3$.
If the input is constant~$0$, we require that the output is a pulse of
     length~$1$ at time~$K+3$.
From every circuit that solves the (original) bounded SPF problem, we
     can easily build a circuit that solves the aligned SPF
     problem by adding capturing circuitry like Fig.~\ref{fig:circuit}.
In the following, we show that no circuit solves the aligned version of
     bounded SPF and thus, by the above reduction, the original
     bounded SPF problem.

Let~$C$ be a circuit that solves the aligned bounded SPF problem.
Then, for all input signals, the output signal~$o$ of~$C$ always
     contains a transition~$(K+4,0)$, regardless of the input.
Let~$D^C_o(i)$ be the causal depth of this transition in circuit~$C$
     when the input signal is~$i$.

\begin{lemma}\label{lem:clockbound}
Let~$C$ be a circuit that solves the aligned bounded SPF problem.
Then there exists an input signal~$i$ such that $D^C_o(i) >
     (K+5)/\delta_{\min}+2$.
\end{lemma}
\begin{proof}
Let $N = (K+5)/\delta_{\min}+2$, and assume for a contradiction that
$D^C_o(i) \le N$ for all valid input signals~$i$.
Consider the $N$-unrolled circuit~$C_N(o)$.
From Theorem~\ref{thm:simulation}, we obtain that transition~$(K+4,0)$,
     with causal depth in~$C$ at most~$N$ occurs at output~$o$ of~$C$
     if and only if it occurs at~$C_N(o)$'s output~$o'$ corresponding
     to~$o$.
Recalling Lemma~\ref{lem:Depth_Iteration}.(b), we obtain that the same holds
     for all transitions at output~$o$ with times less than~$K+4$;
     i.e., $C$'s and $C_N(o)$'s output signals restricted to
     time~$[-\infty,K+4]$ are the same for all valid input
     signals~$i$.
One can easily extend the forward circuit~$C_N(o)$ such that it remains a
     forward circuit and solves aligned bounded SPF, by suppressing all
     transitions at the output that possibly occur after time~$K+4$.
Since Theorem~\ref{thm:no_forward_circuit} also holds for the aligned
     bounded SPF problem, no such forward circuit exists; a contradiction
     to the initial assumption.
\end{proof}

From Lemma~\ref{lem:clockbound} and~\ref{lem:Depth_Iteration}, we
     obtain that, for input~$i$, the constructive algorithm does not
     mark fixed the output transition~$(K+4,0)$ before
     iteration~$(K+5)/\delta_{\min}+2$.
However, from Lemma~\ref{lem:t_inc} and the fact that input signal~$i$
     contains at most~$2$ transitions besides the initial transition
     at~$-\infty$, we conclude that all iterations~$\ell \ge
     (K+5)/\delta_{\min}+2$ have~$t_\ell \ge K+5$.
From Lemma~\ref{lem:t_fixed}, we conclude that all transitions still
     existent at the end of these iterations must have times at
     least~$K+5$; a contradiction to the fact that the final output transition
     occurs at time~$K+4$.
We thus obtain:                
}

\begin{theorem}\label{thm:main_impossibility}
No circuit solves bounded SPF.
\end{theorem}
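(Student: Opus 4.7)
The plan is to derive a contradiction by showing that, if some circuit $C$ solved bounded SPF with stabilization bound $K$, then a suitable unrolling $C_N(o)$ would solve bounded SPF as well, contradicting Theorem~\ref{thm:no_forward_circuit}. The strategy rests on combining the bound on causal depths coming from Lemmas~\ref{lem:t_inc} and~\ref{lem:Depth_Iteration} with the simulation result Theorem~\ref{thm:simulation}.

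More concretely, I would proceed in the following steps. First, assume for contradiction that $C$ solves bounded SPF with bound~$K$. By Lemma~\ref{lem:t_inc}, every iteration of the execution construction algorithm advances the current time by at least $\delta_{\min}^C$, except for the (at most two) iterations that process an input transition. Hence every transition occurring by time $K + (\text{length of the input pulse})$ is marked fixed within at most $N := \lceil (K+1)/\delta_{\min}^C \rceil + 2$ iterations, and by Lemma~\ref{lem:Depth_Iteration}(a) has causal depth at most~$N$. Second, form the unrolled circuit $C_N(o)$, whose $z$-values along every path to $o$ exceed~$N$. By Theorem~\ref{thm:simulation}, the signal at the output of $C_N(o)$ agrees with that of $C$ on all transitions of causal depth at most~$N$, hence coincides with $s_o$ on the time interval $[-\infty, K+1]$ for every valid input pulse of length at most~$1$.

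The delicate point, and what I expect to be the main obstacle, is that $C_N(o)$ is only guaranteed to mimic $C$ up to a finite time horizon: later transitions generated by the unrolling (which has constant sources $\tilde 0^{(\cdot)}$ and $\tilde 1^{(\cdot)}$ at its inputs rather than proper feedback) could reintroduce spurious pulses at the output of $C_N(o)$ and thereby violate condition~(F4) or~(F5). The cleanest way to handle this is to reduce to an aligned variant of bounded SPF in which the output is additionally required, after stabilising, to be constant~$0$ for a long enough window and then emit a specific known pulse; any circuit solving bounded SPF can be modified to solve this aligned version by appending a simple fixed-timing capture-and-replay circuit (sketched using the structure of Fig.~\ref{fig:circuit}). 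The aligned version has the property that the final output transition sits at a known time beyond the stabilisation window, so one may truncate the output of $C_N(o)$ after time $K+1$ without altering its SPF-correctness, obtaining a forward circuit that still solves the aligned bounded SPF problem.

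Finally, I would invoke Theorem~\ref{thm:no_forward_circuit} (which applies equally to the aligned variant since the proof is based on the continuity argument of Theorem~\ref{thm:channels:are:continuous} and the intermediate value theorem, both of which are unaffected by additional alignment constraints on the output) to conclude that no such forward circuit can exist. This contradiction establishes that no circuit solves bounded SPF, proving Theorem~\ref{thm:main_impossibility}. The core conceptual content is thus that involution channels, through their continuity, force any circuit to inherit the forbidden continuous input-output behaviour of forward circuits whenever one restricts attention to a bounded time window; boundedness of stabilisation then upgrades this to a global contradiction.
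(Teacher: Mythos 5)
Your proposal is correct in substance and follows the same overall architecture as the paper's proof: reduce bounded SPF to an aligned variant whose admissible inputs start at time~$0$ and whose output tail is prescribed, unroll the circuit, invoke the simulation result (Theorem~\ref{thm:simulation}), and contradict the forward-circuit impossibility (Theorem~\ref{thm:no_forward_circuit}), with Lemmas~\ref{lem:t_fixed}, \ref{lem:t_inc} and~\ref{lem:Depth_Iteration} tying causal depth to real time. Where you differ is the organization of the final contradiction: the paper first shows (Lemma~\ref{lem:clockbound}) that some valid input forces the causal depth of the prescribed output transition $(K+4,0)$ to exceed~$N$, and then derives a timing contradiction (depth $>N$ forces that transition to occur at time $\geq K+5$); you instead use the timing lemmas directly to upper-bound the causal depth of \emph{every} transition up to the relevant horizon by~$N$, so the unrolled circuit reproduces the output on the whole window and the contradiction with Theorem~\ref{thm:no_forward_circuit} follows in one step. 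This is a legitimate streamlining---your depth bound is exactly the contrapositive of the paper's last step---and it is on par with the paper's level of detail concerning the reduction and the suppression circuitry.

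Two points need care when writing this up. First, the restriction to inputs starting at time~$0$ (your ``valid'' inputs) is needed already for your very first step, not only for the truncation: in the original bounded SPF problem the pulse may arrive at an arbitrary time~$T$, the relevant horizon is roughly $T+K$, and a circuit may oscillate internally before the pulse arrives, so the causal depths of transitions up to that horizon are not bounded by any fixed~$N$. Hence the depth bound, the choice of unrolling depth, and the simulation argument must all be run on the aligned problem \emph{after} the reduction, not on the original solver~$C$ as your first two steps suggest. Second, the truncation of the unrolled circuit's output must happen after the known final output transition of the aligned specification (time $K+4$ in the paper's version), not at time $K+1$: truncating at $K+1$ both deletes the prescribed tail your aligned variant demands and can cut the $1$-period short (the rising transition may occur arbitrarily close to $K+1$), creating an arbitrarily short output pulse and destroying condition~(F4) for the truncated forward circuit. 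With these adjustments your argument goes through and matches the paper's.
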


\section{Conclusions and Future Work}

We showed that binary circuit models based on involution channels
are a promising candidate for faithfully modeling glitch propagation in real circuits,
in the sense that they allow to design circuits solving the 
Short-Pulse Filtration (SPF) problem precisely when this is
possible with physical circuits. Involution channels differ
from all existing single-history channels, which do not share
this property, in that they are also continuous with respect to dropping 
small input pulses.

Although our results prove that involution channels are superior 
to all alternative channel models known so far, there are several 
very important questions which are still open: 
First, we did not at all address the question
of quantitatively comparing the modeling accuracy of alternative
models: Although we believe that the modeling accuracy of properly chosen
instances of our involution channels surpasses the one of alternative channel
models, we cannot rule out the possibility that a non-faithful
model like PID works better in some situations. Second, 
addressing the SPF problem is only a first step
towards a digital model for metastability generation and
propagation. Needless to say, addressing both questions
requires major efforts and is hence a subject of future research.

%
%
%

\bibliographystyle{plain}
\bibliography{mybib}

\ifthenelse{\boolean{SHORTversion}}{
\appendix

\section{Proofs}
\label{sec:proofs}

\subsection{Proof of Lemma~\ref{lem:delta:min}}
Set $f(T) = -T + \delta_\uparrow(-T)$.
This function is continuous and strictly decreasing, 
since~$\delta_\uparrow$ is continuous and strictly increasing.
Because $f(0)=\delta_\uparrow(0)$ is positive and the limit of
$f(T)$ as
$T\to
\delta^\downarrow_\infty$ is $-\infty$, there exists a unique~$\delta_{\min}$
between~$0$
and~$\delta^\downarrow_\infty$ for which $f(\delta_{\min}) = 0$.
Hence, $\delta_\uparrow(-\delta_{\min})=\delta_{\min}$.
The second equality follows from 
$\delta_{\min}=\delta_\downarrow(-\delta_\uparrow(-\delta_{\min}))=\delta_\downarrow(-\delta_{\min})$
according to (\ref{eq:involution}).

The second part of the lemma follows by differentiating
Equation~\eqref{eq:involution}.

\subsection{Proof of Lemma~\ref{lem:minimal:delay}}

Let~$\delta$ be either~$\delta_\uparrow$ or~$\delta_\downarrow$, depending on
whether $t_{n+1}$ is a rising or falling transition.
By definition, the two transitions cancel if and only if
\begin{equation}\label{eq:lem:minimal:delay:proof}
\delta_{n+1} = \delta(t_{n+1}-t_n-\delta_n)
\leq
-(t_{n+1} - t_n - \delta_n)
\enspace.
\end{equation}
Set $T = t_{n+1} - t_n - \delta_n$.
By Lemma~\ref{lem:delta:min}, equality holds
in~\eqref{eq:lem:minimal:delay:proof} if and only if $T = -\delta_{\min}$.
Because the left-hand side of~\eqref{eq:lem:minimal:delay:proof} is increasing
in~$T$ and the right-hand side is strictly decreasing in~$T$,
\eqref{eq:lem:minimal:delay:proof} is equivalent to $T\leq -\delta_{\min}$,
which in turn is equivalent to $t_{n+1}\leq t_n + \delta_n - \delta_{\min}$.

\subsection{Proof of Lemma~\ref{lem:t_fixed}}

Statement~(a) is implied by the fact that transitions are only marked fixed in
     step~(i) and~(ii), which act on transitions at time~$t_\ell$ only.

For~(b), assume by contradiction that a
     transition $(t,X)$ with $t \leq
t_\ell + \delta_{\min}^C$ but different from~$t_\ell$ was added in
     iteration~$\ell$ and still exists at the end of iteration~$\ell$.
Such a transition can only be added via step~(iii).
For the respective channel algorithm with delay function~$\delta$ and minimal
delay~$\delta_{\min}$,
     $\delta(t_\ell-t') \leq \delta_{\min}^C\leq\delta_{\min}$ must have held, where~$t'$ is
     the time of the channel's last output transition.
However, by Lemma~\ref{lem:minimal:delay},
this leads to a cancellation and hence removal of $(t,X)$, which provides the required 
contradiction.

For~(c), assume by contradiction that, at the end of
     iteration~$\ell$, there exists a non-fixed transition $(t_\ell,X)$.
Since step~(i) marks all transitions at time~$t_\ell$ fixed and~(ii) adds
     only fixed transitions at time~$t_\ell$, the non-fixed transition must have
     been newly added in step~(iii).
However, as in~(b), we know that this requires $\delta(t_\ell-t') \leq
\delta_{\min}$ leading to a cancellation, a contradiction.

\subsection{Proof of Lemma~\ref{lem:t_inc}}

By Lemma~\ref{lem:t_fixed}~(b), $t_{\ell+1}$ is larger than
$t_{\ell}+\delta_{\min}^C$, provided no input transition occurs earlier.
By condition (S2) in the signal definition,
$t_{\ell}-t_{\ell-1}>0$
is guaranteed also in the latter case.

\subsection{Proof of Lemma~\ref{lem:fixed_is_fixed}}

Assume by contradiction that some iteration~$\ell \ge 2$ is the first
     in which a fixed transition is canceled; obviously, this can only
happen in step~(iii).
Thus, there exists a transition at time~$t_\ell$ that generated a new transition
     at some time~$t$ that results in the cancellation of a fixed transition at
     time~$t'$, i.e., $t\leq t'$. 
     Lemmas~\ref{lem:t_fixed} and~\ref{lem:minimal:delay} imply that
     $t_{\ell}-t' \leq -\delta_{\min} <0$
     in this case.
However, since both~$t_\ell$ and $t'$ are fixed and $t_\ell$ is the time for the
current iteration, by Lemma~\ref{lem:t_inc} $t'\leq t_\ell$
which provides the required contradiction.

\subsection{Proof of Theorem~\ref{thm:execution}}
 From Lemma~\ref{lem:t_inc}, we deduce that for all times~$T\ge 0$,
      there is an iteration~$\ell \ge 1$ such that~$t_\ell > T$ or the
      algorithm terminates.
      From Lemmas~\ref{lem:t_fixed}, \ref{lem:t_inc},
      and~\ref{lem:fixed_is_fixed}, we know that the algorithm does not add
transitions with times less than~$t_\ell$.

To prove
uniqueness of the execution,
assume by contradiction that there is a second execution and consider the
first differing transition.
By an (admittedly tedious) induction argument following the algorithm's steps,
one finds a contradiction to either (E3) or (E4).

\subsection{Proof of Theorem~\ref{thm:or:loop}}

If $\Delta$ is too small, then
the pulse is filtered by the channel in the feed-back loop. 
If it is too large,
the pulse is captured by the storage loop, leading to a stable output~1:
\begin{lemma}
\label{lem:big:pulse}
If the input pulse's length $\Delta$ satisfies $\Delta\geq\delta_\infty$, 
then the output of the OR has a unique rising transition at
time~0. 
\end{lemma}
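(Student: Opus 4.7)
The plan is to exhibit an explicit candidate execution of the feed\-back circuit, verify that it satisfies conditions (E1)--(E4), and then invoke uniqueness of executions as established in Theorem~\ref{thm:execution}.

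First, I would define the candidate signals on all vertices. The input $i$ has a rising transition at time $0$ and a falling transition at time $\Delta$. I would assign to the OR-gate output a single rising transition at time $0$ (and no subsequent transitions), and to the feed-back channel $c$'s output a single rising transition at time $\delta_\infty$ (and no subsequent transitions). The initial value of the channel is $0$, so no ``reset'' transition is injected.

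Next I would verify consistency. For the channel $c$: its input is the OR output, which has exactly one transition at $t_1 = 0$, preceded by the initial transition at $t_0 = -\infty$ with $\delta_0 = 0$. The transition-generation algorithm of Section~\ref{sec:channels} thus assigns delay $\delta_1 = \delta_\uparrow(t_1 - t_0 - \delta_0) = \lim_{T\to\infty}\delta_\uparrow(T) = \delta_\infty$, so a rising output transition is placed at time $\delta_\infty$; no further input transitions exist, so no further output transitions are produced and none can be canceled. This matches the proposed $s_c$. For the OR gate: at every time $t \geq 0$, at least one of $i(t)$ and $s_c(t)$ equals $1$ — namely, $i(t)=1$ on $[0,\Delta)$ and $s_c(t)=1$ on $[\delta_\infty,\infty)$, and since by hypothesis $\delta_\infty \leq \Delta$ these two intervals cover $[0,\infty)$. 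Hence the Boolean OR of the inputs is identically $1$ on $[0,\infty)$ and $0$ before, matching the proposed OR output with its unique rising transition at $0$.

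Having exhibited a consistent assignment, the strict causality of the involution channel and Theorem~\ref{thm:execution} imply that this is \emph{the} unique execution of the circuit on this input. Therefore the OR output has a unique rising transition at time $0$, as claimed. There is really no substantial obstacle here — the only quantitative point used is the inequality $\delta_\infty \leq \Delta$, which guarantees the overlap of the two ``1'' intervals at the OR inputs so that the feed-back loop latches before $i$ returns to $0$; everything else is bookkeeping within the channel algorithm.
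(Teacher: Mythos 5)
Your proposal is correct and follows essentially the same route as the paper's proof: exhibit the candidate execution in which the channel output has a single rising transition at time~$\delta_\infty$ and the OR output a single rising transition at time~$0$, check consistency (which is where $\Delta\geq\delta_\infty$ is used to ensure the two ``1''-intervals cover $[0,\infty)$), and conclude by uniqueness of executions via Theorem~\ref{thm:execution}. You merely spell out the bookkeeping that the paper leaves implicit.
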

\begin{proof}
%
%
%
%
Assigning the channel output~$s_c$ a single rising transition at time~$\delta_\infty$ is
part of a consistent execution, in which the OR's output has a single rising
transition at time~$0$.
The lemma now follows from uniqueness of executions.
\end{proof}

\begin{lemma}
\label{lem:small:pulse}
If the input pulse's length $\Delta$ satisfies
$\Delta \leq \delta_\infty - \delta_{\min}$, then the OR output 
contains only the input pulse.
\end{lemma}
\begin{proof}
The input signal contains only two transitions: One at time $t_1=0$ and one at
time $t_2=\Delta\leq \delta_\infty - \delta_{\min}$.
Since $\delta_1 = \delta_\infty$ and hence 
$t_2 = t_1 + \Delta \leq t_1 + \delta_1 - \delta_{\min}$,
the two pending transitions of $c$'s output cancel by Lemma~\ref{lem:minimal:delay},
and no further transitions are generated afterwards.
\end{proof}

The following lemma shows that the procedure for defining the $\Delta_n$ indeed stops if and only if
$\Delta_1\neq\tilde{\Delta}_1$, and can be used to bound the number of steps
until it stops.

\begin{lemma}\label{lem:exponential:decay}
$\lvert f(\Delta_1) - \tilde{\Delta}_1 \rvert \geq (1+\delta'(0)) \cdot \lvert
\Delta_1 - \tilde{\Delta}_1 \rvert$
if $\Delta_1>0$.
\end{lemma}
\begin{proof}
We have
\begin{equation}
\begin{split}
f'(\Delta_1)  = &
\big( 1 + \delta'(-\Delta_1) \big)\cdot \delta'\big(\Delta_1 -
\delta(-\Delta_1)\big)
\\ & + 1 +
\delta'(-\Delta_1)
\geq 1 + \delta'(0)
\end{split}
\end{equation}
because $\delta'(-\Delta_1)\geq \delta'(0)$  and $\delta'(T)>0$
for all~$T$ as $\delta$ is concave and increasing.
The mean value theorem of calculus now implies the lemma.
\end{proof}

If $\Delta_0\geq \delta_\infty$ or $\Delta_0\leq \delta_\infty -
\delta_{\min}$, then Lemmas~\ref{lem:big:pulse} and~\ref{lem:small:pulse}
show the theorem.

So let $\Delta_0\in (\delta_\infty - \delta_{\min} , \delta_\infty)$.
By Lemma~\ref{lem:exponential:decay}, the number of generated pulses until the
procedure stops is in the order of $\log 1 / \lvert \Delta_1 -
\tilde{\Delta}_1\rvert$.
Setting $g(\Delta_0) = \Delta_0 - \delta_\infty + \delta(\Delta_0 -
\delta_\infty)$ such that $\Delta_1=g(\Delta_0)$, 
and applying the mean value theorem of calculus to this
function, we see analogously as in the proof of
Lemma~\ref{lem:exponential:decay} that
\[
\lvert \Delta_1 - \tilde{\Delta}_1 \rvert
\geq 
\big( 1 + \delta'(0) \big) \cdot
\lvert \Delta_0 - \tilde{\Delta}_0 \rvert
\enspace.
\]
Hence the number of generated pulses is in the order of $\log 1 / \lvert
\Delta_0 - \tilde{\Delta}_0\rvert$.
Since both the length $\Delta_n$ of the occurring pulses 
and, by symmetry, the time between them is at most $\delta(0)$
since it would be captured otherwise, 
we have the same asymptotic bound on the stabilization
time.

\subsection{Proof of Lemma~\ref{lem:ht:exists}}

\begin{lemma}\label{lem:duty:cycle:delta}
Let~$c$ be an exp-channel~$c$ with threshold~$V_{th}$.
Then there exists some~$\Delta>0$ such that every periodic pulse train with pulse
lengths at most~$\Delta$ and duty cycle (ratio of 1-to-0) at most~$V_{th}$ is mapped to the zero
signal by~$c$.
\end{lemma}
\begin{proof}
Let $t_1,t_2,\dots$ be the times of transitions in the input pulse train with
duty cycle $\gamma\leq V_{th}$, i.e.,
$t_1=0$, $t_{2n+2} = t_{2n+1}+\Delta$, and $t_{2n+1} = t_{2n} + \Delta/\gamma$.
We assume that $\Delta$ is smaller than both $-\tau\log(1-V_{th})$ and a
to-be-determined $\Delta_0$, and show inductively that all pulses get canceled:

If $\Delta\leq -\tau\log(1-V_{th})$, then 
$\delta_1=\delta_\uparrow(\infty)=T_p-\tau\log(1-V_{th})>T_p$ and 
$\delta_2=\delta_\downarrow(\Delta-\delta_1) \leq T_p$, so the first pulse is canceled.
For the induction step, we assume $\delta_{2n}\leq T_p$ and find
\[
\begin{split}
\delta_{2n+1} & = \delta_\uparrow(\Delta/\gamma - \delta_{2n})
\geq \delta_\uparrow(\Delta/V_{th} - T_p)
\\ &
= T_p - \tau\log(1-V_{th}) + \tau\log(1-V_{th} e^{-\Delta/V_{th}\tau})
\end{split}
\]
Hence, $t_{2n+1}$ and $t_{2n+2}=t_{2n+1}+\Delta$ cancel if 
\[
\begin{split}
\Delta & \leq
\delta_{2n+1}-T_p
\\ &
=
-\tau\log(1-V_{th}) + \tau\log(1-V_{th} e^{-\Delta/V_{th}\tau})
\enspace,
\end{split}
\]
which is equivalent to 
\[
\begin{split}
h(\Delta)
=
V_{th} e^{-\Delta/V_{th}\tau}
+
(1-V_{th})e^{\Delta/\tau} \leq 1
\enspace.
\end{split}
\]
The latter satisfies $h(0)=1$ and 
\[
\begin{split}
h'(\Delta)
=
-\frac{1}{\tau} e^{-\Delta/V_{th}\tau}
+
\frac{1-V_{th}}{\tau}e^{\Delta/\tau}
\enspace,
\end{split}
\]
in particular $h'(0)=-V_{th}/\tau<0$.
There hence exists some~$\Delta_0>0$ such that $h(\Delta)\leq 1$ for all
$0\leq\Delta\leq\Delta_0$.
Thus, $t_{2n+1}$ and $t_{2n+2}$ cancel, and
\[
\begin{split}
\delta_{2n+2} & =
\delta(\Delta - \delta_{2n+1})
\leq
\delta(-T_p)=T_p
\end{split}
\]
because $h(\Delta)\leq1$, which completes the induction step.
\end{proof}

To prove Lemma~\ref{lem:ht:exists}, we use the notation of the proof 
of Lemma~\ref{lem:duty:cycle:delta}.
The unique root of $h'(\Delta)$ is 
\[
\Delta_\tau=-\frac{\tau\log(1-V_{th})}{1+1/V_{th}}
\enspace,
\]
which goes to infinity as $\tau\to\infty$.
We can choose $\Delta_0=\Delta_\tau$ because $h'(\Delta)\leq0$ for all
$0\leq\Delta\leq\Delta_\tau$.
Because also $-\tau\log(1-V_{th})$ goes to infinity as $\tau\to\infty$, we can
find, for any given $\hat{\Delta}$, some $\tau>0$ such that both $\hat{\Delta}\leq
-\tau\log(1-V_{th})$ and $\hat{\Delta}\leq \Delta_\tau$.
But for this $\tau$, all input pulse trains with pulse lengths at most~$\hat{\Delta}$ and
duty cycle at most $V_{th}=\gamma$ get mapped to the zero signal, as asserted.

\subsection{Proof of Theorem~\ref{thm:channels:are:continuous}}
\label{sec:contproof}
For ease of exposition and for space reasons, we give the proof only in the
case of symmetric channels, i.e., for the case that
$\delta_\uparrow=\delta_\downarrow=\delta$.

We begin by noting that channels are monotone.
To compare certain signals, we write $s_1\leq s_2$ if~$s_2$ is~$1$
whenever~$s_1$ is.

\begin{lemma}
\label{lem:channels:are:increasing}
Let~$s_1$ and~$s_2$ be signals such that $s_1\leq s_2$ and let~$c$ be a
channel.
Then, $c(s_1)\leq c(s_2)$.
\end{lemma}

The following Lemma~\ref{lem:add:pulse:to:the:end}
provides an optimal choice for appending a pulse at the
end of a signal in order to maximize $\mu_T$.
We will use it later when bounding the maximum impact of an infinitesimally small
pulse.

We use the shorthand notation $(x)_+$ for $\max(x,0)$.

\begin{lemma}
\label{lem:add:pulse:to:the:end}
Let~$s$ be a signal that is eventually constant~$0$ and let~$c$ be a channel.
Denote by~$t_n$ the time of the last (falling) transition in~$s$ and
by~$\delta_n$ its
delay in the channel algorithm for~$c$. 
Then, the maximal $\mu_T(c(s'))$ among
all~$s'$ obtained from~$s$ by appending one pulse of length~$\Delta$ after
time~$t_n$ is attained by the addition of the pulse at time $t_n +(\delta_n -
\delta_{\min})_+$ (which results in a cancellation, i.e., a right-shift, of the 
last transition).
\end{lemma}
\begin{proof}
We first show the lemma for $T=\infty$ and then extend the result to
finite~$T$.
Let~$s'_\gamma$ be the addition of the pulse of length~$\Delta$ to~$s$ at time
$t_n + \gamma$.

For all $0\leq\gamma\leq\delta_n-\delta_{\min}$, set
\[
f(\gamma) = \gamma + \delta(\Delta - \delta(\gamma - \delta_n))
\enspace.
\]
In the class of all~$s'_\gamma$ with $0\leq\gamma\leq \delta_n-\delta_{\min}$
(which can be empty), the
maximum of~$\mu_\infty(c(s'_\gamma))$ is attained at the maximum of~$f$.
This is because the transition at time $t_n+\gamma$ cancels that at time~$t_n$
in this case.
The derivative of~$f$ is equal to
\[
f'(\gamma) = 1 - \delta'(\Delta - \delta(\gamma - \delta_n))\cdot
\delta'(\gamma - \delta_n)
\enspace.
\]
The condition $f'(\gamma) = 0$ is equivalent to $\delta'(\Delta - \delta(\gamma
- \delta_n)) = 1 / \delta'(\gamma - \delta_n)$, which is in turn equivalent to
  $\delta(\Delta - \delta(\gamma - \delta_n)) = -(\gamma - \delta_n)$, i.e.,
$\Delta=0$, as $\delta'(t)=1/\delta'(-\delta(t))$ by Lemma~\ref{lem:delta:min}.
Hence, $f'(\gamma)$ is never zero.
Since $f'(\gamma)\to 1$ as $\gamma\to\infty$, as the concave $\delta$ satisfies
$\lim_{t\to\infty}\delta'(t)=0$, the derivative of~$f$ is always
positive, hence~$f$ is increasing.
This shows that $\gamma = \delta_n-\delta_{\min}$ is a strictly better choice
than any other $\gamma$ in this class.

For the class of~$s'_\gamma$ with $\gamma \geq (\delta_n - \delta_{\min})_+>0$, we
define the function
\[
g(\gamma) = \Delta+ \delta(\Delta - \delta(\gamma - \delta_n)) -
\delta(\gamma - \delta_n)
\enspace.
\]
Since the transitions at~$t_n$ and $t_n+\gamma$ do not cancel in this class,
the maximum of $\mu_\infty(c(s'_\gamma))$ is attained at the maximum of~$g$.
But it is easy to see, using the monotonicity of~$\delta$, that~$g$ is decreasing.
The maximum of~$g$ is hence attained at $\gamma = (\delta_n- \delta_{\min})_+$. 

Consequently, the choice $\gamma = \gamma_0 = (\delta_n- \delta_{\min})_+$
maximizes $\mu_\infty(c(s'_\gamma))$ in any case. By Lemma~\ref{lem:minimal:delay}, this choice
results in a cancellation of the last (falling) transition in $s$, hence
a right-shift of the latter in $s'$. This concludes our proof for $T=\infty$.

Let now~$T$ be finite.
Denote by~$T_0$ the time of the last, falling, output transition in
$c(s_{\gamma_0}')$.
In this case, transitions of $c(s)$ and $c(s_{\gamma_0}')$ are the same except
the last,
falling, transition which is delayed from $t_n + \delta_n$ to~$T_0$.
We distinguish the two cases (a) $T\leq T_0$ and (b) $T > T_0$.
In case~(a), the last transition of $c(s)$ is delayed beyond~$T$ in
$c(s_{\gamma_0}')$.
Because all other transitions remain unchanged in all $c(s_\gamma')$, the
measure $\mu_T(c(s_{\gamma_0}'))$ is maximal among all $\mu_T(c(s_\gamma'))$
if $T\leq T_0$.
In case~(b), we have $\mu_T(c(s_{\gamma_0}')) = \mu_\infty(c(s_{\gamma_0}'))$.
But because $\mu_T \leq \mu_\infty$ and $\mu_\infty(c(s_{\gamma_0}'))$ is
maximal among all $\mu_\infty(c(s_{\gamma}'))$, so is
$\mu_T(c(s_{\gamma_0}'))$ among all $\mu_T(c(s_{\gamma}'))$.
\end{proof}

We next effectively bound the maximum impact on~$\mu_T$ that a set of 
pulses of small combined length can have.

\begin{lemma}
\label{lem:epsilon:pulses:at:the:end}
Let~$s$ be a signal that is eventually constant~$0$ and let~$c$ be a channel.
Then there exists a constant~$d$ such that
the maximal~$\mu_T(c(s'))$ among all~$s'$ obtained from~$s$ by adding
pulses of combined length~$\varepsilon$ after the last transition of~$s$ is at
most $\mu_T(c(s)) + d\cdot\varepsilon$.
\end{lemma}
\begin{proof}
It suffices to show the lemma for $T=\infty$.
Let~$\varepsilon=\sum_{k=1}^K \varepsilon_k$ be any decomposition
of~$\varepsilon$ into positive pulse length.
We add, one after the other, pulses of length~$\varepsilon_k$ after the last
transition.
We show that the maximum gain in measure after adding~$L$ pulses is at most
$d\cdot\sum_{k=1}^L \varepsilon_k$ for a constant~$d$ depending on~$c$
and~$s$.

Denote by~$t_n$ the last transition in~$s$ and by~$\delta_n$ its delay.
By Lemma~\ref{lem:add:pulse:to:the:end}, it is optimal to add the first pulse
(of length~$\varepsilon_1$) at time $t_n + (\delta_n-\delta_{\min})_+$;
call the resulting signal~$s_1'$.

We first assume $\delta_n-\delta_{\min}\geq0$.
Here, the two new transitions in~$s_1'$ are $t_{n+1} = t_n + \delta_n-\delta_{\min}$ and
$t_{n+2} = t_n + \delta_n-\delta_{\min} + \varepsilon_1$.
Their corresponding delays are $\delta_{n+1}=\delta_{\min}$ and $\delta_{n+2} =
\delta(\varepsilon_1 - \delta_{\min})$.
By the mean value theorem of calculus and Lemma~\ref{lem:delta:min}, the
duration of the resulting pulse is
\[
\begin{split}
\delta_{n+2} - \delta_{n+1}  = 
\delta(\varepsilon_1 -\delta_{\min}) - \delta(- \delta_{\min})
 = \varepsilon_1\cdot \delta'(\xi)
\end{split}
\]
for some $-\delta_{\min} \leq \xi \leq \varepsilon_1 - \delta_{\min}$.
Since~$\delta'$ is decreasing and $\delta'(-\delta_{\min})=1$
according to Lemma~\ref{lem:delta:min}, we hence deduce
$0\leq\delta_{n+2} - \delta_{n+1} \leq \varepsilon_1$.
Thus $\mu_T(c(s_1') - c(s)) = \varepsilon_1 + \delta_{n+2} - \delta_{n+1} \leq
2\varepsilon_1$.
Since $\delta_{n+2} > \delta_{\min}$, we can 
continue this argument inductively.

If now $\delta_n - \delta_{\min}<0$, then~$t_n$ is effectively replaced  by
$t_n+\varepsilon_1$ in~$s_1'$, i.e., right-shifted.
This changes the measure by
\[
\big(
\delta(t_n - t_{n-1} + \varepsilon_1 - \delta_{n-1})
-
\delta(t_n - t_{n-1} - \delta_{n-1})
\big)_+
\enspace,
\]
which is at most $\varepsilon_1 \cdot \delta'(t_n - t_{n-1} - \delta_{n-1})$ by
the mean value theorem.
We note that this second case only occurs until the first case
happens one time.
We can hence merge all the~$\varepsilon_k$ of the first case and set $d =
\max(2, \delta'(t_n - t_{n-1} - \delta_{n-1}))$.
\end{proof}

We combine the previous two lemmas to show continuity of channels:

\begin{proof}[of Theorem~\ref{thm:channels:are:continuous}]
Let~$s$ be a signal.
We show that, if $\lVert s - s_n \rVert_T \to 0$, then $\lVert c(s) - c(s_n)
\rVert_T \to 0$.
Because
\[
\lvert s - s_n\rvert = (\max(s,s_n) - s) + (s - \min(s,s_n))
\enspace,
\]
where $\max(s,s_n)(t)=\max(s(t),s_n(t))$ and $\min(s,s_n)(t)=\min(s(t),s_n(t))$ for all $t$,
the condition $\lVert s-s_n\rVert_T\to 0$ is equivalent to the conjunction of
both $\lVert s
- \max(s,s_n)\rVert_T\to0$ and $\lVert s - \min(s,s_n)\rVert_T\to0$.
Because
$\max(c(s),c(s_n)) \leq c(\max(s,s_n))$
and
$\min(c(s),c(s_n)) \geq c(\min(s,s_n))$
by Lemma~\ref{lem:channels:are:increasing},
we have
\[
\begin{split}
\lvert c(s) - c(s_n) \rvert
\leq
\ &
c(\max(s,s_n)) - c(s)
\\&+ c(s) - c(\min(s,s_n))
\enspace,
\end{split}
\]
which shows that
we can
suppose without loss of generality $s_n\geq s$ for all~$n$.

Let $(t_m,0),(t_{m+1},1)$ be a negative pulse in~$s$.
Since there are only finitely many negative pulses before time~$T$ because~$T$
is finite by assumption, it suffices
to show $\mu_T(c(s_n)-c(s))\to0$ in the case that $s_n-s$ is zero
outside of $[t_m,t_{m+1}]$, i.e., that the only additions of~$s_n$ with respect
to~$s$ lie in the given negative pulse.

Let~$\mu_T(s_n-s) \leq \varepsilon$.
It follows from Lemma~\ref{lem:epsilon:pulses:at:the:end} that the increase in
measure incurred directly from the new pulses is $O(\varepsilon)$.
Furthermore, by Lemma~\ref{lem:add:pulse:to:the:end}, the measure incurred by
later transitions~$t_k$ with~$k>m$
are biggest when merging all new pulses at the end of the negative pulse.
Because the delays of these transitions
depend continuously on~$\varepsilon$ and~$\mu_T(c(s_n)-c(s))$ depends
continuously on these delays, we have $\mu_T(c(s_n)-c(s))\to 0$ as
$\varepsilon\to0$.
\end{proof}

\subsection{Proof of Theorem~\ref{thm:no_forward_circuit}}
Suppose that there exists a forward circuit that solves bounded SPF with
stabilization time bound~$K$.
Denote by~$s_\Delta$ its output signal when feeding it a $\Delta$-pulse at
time~$0$ as the input.
Because~$s_\Delta$ in forward circuits is a finite composition of continuous
functions by
Theorem~\ref{thm:channels:are:continuous}, the measure $\mu_T(s_\Delta)$
depends continuously on~$\Delta$.

By the nontriviality condition (F3) of the SPF problem, there exists
some~$\Delta_0$ such that $s_{\Delta_0}$ is not zero.
Set $T = 2\Delta_0 + K$.

Let~$\varepsilon>0$ be smaller than both~$\Delta_0$ and $\mu_T(s_{\Delta_0})$.
We show a contradiction by finding a~$\Delta$ such that~$s_\Delta$ either
contains a pulse of length less than~$\varepsilon$ (contradiction to the no
short pulses condition (F4)) or contains a transition
after time $\Delta+K$ (contradicting the bounded stabilization time
condition~(F5)).

Since $\mu_T(s_\Delta)\to0$ as $\Delta\to0$ by the no generation condition (F2)
of SPF,
there exists a~$\Delta_1<\Delta_0$ such that $\mu_T(s_{\Delta_1})=\varepsilon$
by the intermediate value property of continuity.
By the bounded stabilization time condition (F5), there are no transitions
in~$s_{\Delta_1}$ after time $\Delta_1+K$.
Hence, $s_{\Delta_1}$ is~$0$ after this time because otherwise it is~$1$ for the
remaining duration $T - (\Delta_1+K) > \Delta_0 > \varepsilon$, which would
mean that $\mu_T(s_{\Delta_1})>\varepsilon$.
Consequently, there exists a pulse in~$s_{\Delta_1}$ before
time $\Delta_1+K$.
But any such pulse is of length at most~$\varepsilon$ because
$\mu_{\Delta_1+K}(s_{\Delta_1}) \leq \mu_T(s_{\Delta_1})=\varepsilon$.
This is a contradiction to the no short pulses condition (F4).

\subsection{Proof of Theorem~\ref{thm:simulation}}
We will show the statement by induction on~$d(e) \ge 0$ for the case
     where~$e$ is a transition in~$C$'s execution.
The proof for the case where~$e$ is a transition in~$C_k(o)$'s
     execution is analogous.

{\em Induction base:\/} By the construction of an unrolled circuit, $s_{\sigma}$
     and~$s_{\sigma'}$ have the same initial transitions if~$\sigma$
     is a gate or channel, and the same transitions
     if~$\sigma=\sigma'$ is the input.
Since, for all these transitions~$e$, $d(e) = 0 \le z(\sigma')$, the
     statement holds for~$d(e) = 0$.

{\em Induction step:\/} Assume that the theorem holds for all
     transitions~$e$ with depth~$d(e) \le k$.
We show that it also holds for transitions~$e'$ with $d(e')=k+1$
     if~$k+1\le z(\sigma')$.
If~$e'$ is a transition initially added by the constructive algorithm
     at time~$0$, $d(e')=1$.
From the definition of the unrolling, we immediately obtain that~$e'$
     is also added to~$s_{\sigma'}$ if~$z(\sigma') \ge 1$.
Otherwise, $e'$ must have been added within an iteration of the
     constructive algorithm.
Assume by contradiction that~$e'$ is the first transition (in the
     order transitions are generated by the constructive algorithm)
     with causal depth~$k+1$ added to a signal in~$C$ but not added
     to the respective signal in~$C_k(o)$.
We distinguish two cases for~$\sigma$: 

If~$\sigma=(v,w)$ is a channel in~$C$: Transition~$e'$ may only have
     been added to~$s_{(v,w)}$ by the channel algorithm with input
     signal~$s_v$ and a current transition~$e''$ with $d(e'')
     = k$.
The time of transition~$e'$ depends on the time of~$e''$ and on the
     last output transition only.
From the fact that~$e'$ is the first with depth~$k+1$ not added
     to~$s_{\sigma'}$, and the induction hypothesis applied to
     both~$s_{v}$ and~$s_{(v,w)}$, we deduce that transition~$e'$ is
     also added to~$s_{\sigma'}$; a contradiction to the assumption
     that it is the first one not added.

Since~$\sigma$ cannot be the input port, because all its transitions
     have causal depth~$0$ only, the case of~$\sigma$ being a gate in~$C$
     remains: However, then~$e'$ is generated due to a transition~$e''$ on
     a predecessor~$w$ of~$\sigma$ in~$C$.
Further, $d(e'') \le k+1$ and~$z(w') \ge k+1$ must hold for
     vertex~$w'$ corresponding to~$w$.
Since, $e'$ is the first transition with causal depth less or
     equal~$k+1$ not added to both circuits' signals, $e''$ was added to
     both~$s_{w}$ and~$s_{w'}$, and thus~$e'$ is added to~$s_{\sigma'}$; 
     a contradiction that it is the first one not added.

This completes the induction step.

\subsection{Proof of Theorem~\ref{thm:main_impossibility}}

Let the {\em aligned bounded SPF problem\/} be the SPF problem with
     the following modifications: We first require that if the input
     signal is a pulse, then the pulse must start at time~$0$ and be
     of length at most~$1$.
We call such signals {\em valid\/} input signals.
Further, we require that if the output signal~$o$ makes a transition
     to~$1$, it must do so before time~$K+1$, and~$o$ must remain~$1$
     from thereon until time~$K+2$, from whereon it is~$0$ until
     time~$K+3$ followed by a pulse of length~$1$ at time~$K+3$.
If the input is constant~$0$, we require that the output is a pulse of
     length~$1$ at time~$K+3$.
From every circuit that solves the (original) bounded SPF problem, 
it is not difficult
     to build a circuit that solves the aligned SPF
     problem by adding adequate capturing and suppression circuitry. 
In the following, we show that no circuit solves the aligned version of
     bounded SPF and thus, by the above reduction, the original
     bounded SPF problem.

Let~$C$ be a circuit that solves the aligned bounded SPF problem.
Then, for all input signals, the output signal~$o$ of~$C$ always
     contains a transition~$(K+4,0)$, regardless of the input.
Let~$D^C_o(i)$ be the causal depth of this transition in circuit~$C$
     when the input signal is~$i$.

\begin{lemma}\label{lem:clockbound}
Let~$C$ be a circuit that solves the aligned bounded SPF problem.
Then there exists an input signal~$i$ such that $D^C_o(i) >
     (K+5)/\delta_{\min}+2$.
\end{lemma}
\begin{proof}
Let $N = (K+5)/\delta_{\min}+2$, and assume for a contradiction that
$D^C_o(i) \le N$ for all valid input signals~$i$.
Consider the $N$-unrolled circuit~$C_N(o)$.
From Theorem~\ref{thm:simulation}, we obtain that transition~$(K+4,0)$,
     with causal depth in~$C$ at most~$N$ occurs at output~$o$ of~$C$
     if and only if it occurs at~$C_N(o)$'s output~$o'$ corresponding
     to~$o$.
Recalling Lemma~\ref{lem:Depth_Iteration}.(b), we obtain that the same holds
     for all transitions at output~$o$ with times less than~$K+4$;
     i.e., $C$'s and $C_N(o)$'s output signals restricted to
     time~$[-\infty,K+4]$ are the same for all valid input
     signals~$i$.
One can easily extend the forward circuit~$C_N(o)$ such that it remains a
     forward circuit and solves aligned bounded SPF, by suppressing all
     transitions at the output that possibly occur after time~$K+4$.
Since Theorem~\ref{thm:no_forward_circuit} also holds for the aligned
     bounded SPF problem, no such forward circuit exists; a contradiction
     to the initial assumption.
\end{proof}

From Lemma~\ref{lem:clockbound} and~\ref{lem:Depth_Iteration}, we
     obtain that, for input~$i$, the constructive algorithm does not
     mark fixed the output transition~$(K+4,0)$ before 
     iteration~$(K+5)/\delta_{\min}+2$.
However, from Lemma~\ref{lem:t_inc} and the fact that input signal~$i$
     contains at most~$2$ transitions besides the initial transition
     at~$-\infty$, we conclude that all iterations~$\ell \ge
     (K+5)/\delta_{\min}+2$ have~$t_\ell \ge K+5$.
From Lemma~\ref{lem:t_fixed}, we conclude that all transitions still
     existent at the end of these iterations must have times at
     least~$K+5$; a contradiction to the fact that the final output transition
     occurs at time~$K+4$.

This completes the proof of Theorem~\ref{thm:main_impossibility}.

}
{}

\end{document}